\setlist{nosep,noitemsep}
\definecolor{darkgreen}{rgb}{0,0.5,0}
\newcommand{\kibitz}[2]{\ifnum\Comments=1{\color{#1}{#2}}\fi}
\newcommand{\yk}[1]{\kibitz{darkgreen}{[Yoav: #1]}}
\newcommand{\itc}[1]{\kibitz{blue}{#1}}
\newcommand{\jon}[1]{\kibitz{darkgreen}{[Jon: #1]}}
\newcommand{\TBD}[1]{\kibitz{red}{[TBD: #1]}}
\newcommand{\ignore}[1]{}
\newtheorem{theorem}{Theorem}[section]
\newtheorem*{theorem*}{Theorem} 
\newtheorem{lemma}[theorem]{Lemma}
\newtheorem{observation}[theorem]{Observation}
\newtheorem{claim}[theorem]{Claim}
\newtheorem{definition}[theorem]{Definition}
\newtheorem*{remark}{Remark}
\newtheorem*{example*}{Example}
\DeclareMathOperator{\E}{\mathbb{E}}
\DeclareMathOperator{\poly}{poly}
\newcommand{\Uopt}{U^\star}
\newcommand{\eps}{\varepsilon}
\newcommand{\traj}{\pi}
\newcommand{\Util}{\mathsf{Util}}
\newcommand{\BR}{\mathsf{BR}}
\newcommand{\Alg}{\mathcal{A}}
\newcommand{\oalpha}{\overline{\alpha}}
\newcommand{\criticali}{\alpha_{i-1,i}}
\newcommand{\criticaliplus}{\alpha_{i,i+1}}
\newcommand{\criticalone}{\alpha_{0,1}}
\newcommand{\criticaltwo}{\alpha_{1,2}}
\newcommand{\criticalthree}{\alpha_{2,3}}
\newcommand{\contract}{{\bf p}}
\newcommand{\alphaseg}{\alpha}
\newcommand{\bp}{\mathbf{p}}
\newcommand{\cT}{\mathcal{T}}
\renewcommand{\rho}{\overline{\contract}}
\title{Contracting with a Learning Agent\thanks{
This work is funded by the European Union (ERC grant No.~101077862, Project: ALGOCONTRACT, PI: Inbal Talgam-Cohen, 
and ERC grant No.~740282, under the European Union’s Horizon 2020 Research and Innovation Programme, PI: Noam Nisan), 
a Google Research Scholar award, a Foundations of Data Science Institute (FODSI) fellowship, 
the Israel Science Foundation (ISF grants No.~336/18 
and No.~505/23
), 
the Binational Science Foundation (BSF grant No.~2021680), 
and the National Science Foundation (NSF CAREER Grant CCF-1942497, PI: S. Matthew Weinberg).}
} 
\author{
Guru Guruganesh\thanks{Google Research, \url{gurug@google.com}, \url{jschnei@google.com}, \url{joshuawang@google.com}} 
\and
Yoav Kolumbus\thanks{Cornell University, \url{yoav.kolumbus@cornell.edu}}
\and
Jon Schneider\footnotemark[2] 
\and
Inbal Talgam-Cohen\thanks{Tel Aviv University, \url{inbaltalgam@gmail.com}}
\and
Emmanouil-Vasileios Vlatakis-Gkaragkounis\thanks{Berkeley University, \url{emvlatakis@berkeley.edu}}
\and
Joshua R. Wang\footnotemark[2]
\and
S. Matthew Weinberg\thanks{Princeton University, \url{smweinberg@princeton.edu}}
}
\date{}
\begin{document}

\maketitle
\begin{abstract}
Many real-life contractual relations differ completely from the clean, static model at the heart of principal-agent theory. Typically, they involve repeated strategic interactions of the principal and agent, taking place under uncertainty and over time. While appealing in theory, players seldom use complex dynamic strategies in practice, often preferring to circumvent complexity and approach uncertainty through learning. We initiate the study of repeated contracts with a learning agent, focusing on agents who achieve no-regret outcomes. 

Optimizing against a no-regret agent is a known open problem in general games; we achieve an optimal solution to this problem for a canonical contract setting, in which the agent's choice among multiple actions leads to success/failure. The solution has a surprisingly simple structure: for some $\alpha > 0$, initially offer the agent a linear contract with scalar $\alpha$, then switch to offering a linear contract with scalar $0$. This switch causes the agent to ``free-fall'' through their action space and during this time provides the principal with non-zero reward at zero cost. Despite apparent exploitation of the agent, this dynamic contract can leave \emph{both} players better off compared to the best static contract. Our results generalize beyond success/failure, to arbitrary non-linear contracts which the principal rescales dynamically. 

Finally, we quantify the dependence of our results on knowledge of the time horizon, and are the first to address this consideration in the study of strategizing against learning agents. 

\end{abstract}
\thispagestyle{empty} 
\pagenumbering{arabic}
\bibliographystyle{apalike}
\sloppy

\section{Introduction}
\label{sec:intro}

\paragraph{Classic and repeated contracts.}

In a classic contract setting, a principal (she) incentivizes an agent (he) to invest effort in a project. The project's success depends stochastically on the effort invested. The incentive scheme, a.k.a.~\emph{contract}, is performance-based --- it determines the agent's payment based on the project's outcome, rather than directly on the agent's effort. This gap between the agent's costly effort and the stochastic outcome creates \emph{moral hazard}, and makes contract design a challenging problem. 

Due to their immense importance in practice, the design of contracts has long been studied in Economics, forming a rich literature recognized by a Nobel prize in 2016. Recently, there has been a surge of interest in computational aspects of contract design, leading to the ongoing development of a new algorithmic theory of contracts (see, e.g.,~\cite{BabaioffFNW12,babaioff2022optimal,DuttingRT19,KleinbergR19,DuttingRT20,DuettingEFK21,GuruganeshSW21,CastiglioniMG22,LiHSW22,PapireddygariW22,DuettingEFK23,AlonDLT23,CastiglioniM023,GuruganeshSW023,DuttingFG24,Vuong24}). Much of the computational research has focused on the classic, \emph{one-shot} contract setting -- the principal and agent share a single interaction, in which the principal offers a contract and the agent chooses a best-response action.%
\footnote{Another setting that has been considered is a series of one-shot interactions with multiple agents, which enables the principal to learn the best contract for the agent population (see, e.g.,~\cite{HoSlivkinsWortman2014Adaptive,CohenDK22,zhu2022sample,DuettingGSW23}). 
An exception is the work of \cite{LiImmorlicaLucier2021contract}, which studies a novel long-term principal-agent model tailored to afforestation. In their model, the principal pays whenever a tree's state -- a Markov chain -- progresses, and the agent responds based on the state (not on learning).
} 
Yet, this overlooks the fact that in reality, \emph{``most principal-agent relationships are repeated or long-term,''} i.e.,  the same agent exerts effort for the principal repeatedly over time~\cite{BoltonD05}. The goal of this paper is to extend algorithmic contract theory beyond the basic single-shot setting, and into the realm of repeated contracts.

Repeated contracts have been studied extensively in Economics. The literature explores many possible variations of how outcomes, actions and contracts evolve over time as the principal and agent interact (see, e.g.,~\cite{Sannikov08} and references therein; for surveys see~\cite{Crawford85}, \cite[Chapter 10]{BoltonD05}, and \cite[Chapter 8]{LaffontM09}). 
The main theme of this literature is that the incentive problem grows significantly in complexity with repetition. First, the agent's action set becomes extremely rich, and optimizing over it is highly non-trivial. Second, the optimal contract itself typically becomes excessively complex -- ``\emph{too complex to be descriptive or prescriptive for incentive contracting in reality}''~\cite{BoltonD05}.

In light of this complexity, one line of work turns to identifying classes of settings under which a simple contract suffices, most notably the famous work of~\cite{milgrom-holmstrom87}, which assumes constant absolute risk aversion (CARA) utilities and Brownian motion of the output.%
\footnote{The work of \cite{milgrom-holmstrom87} studies a \emph{single} payment at the end of the contractual relationship, based on all outcomes.}
Another line of work studies contracts that circumvent complexity by being deliberately \emph{vague}, leaving the agent uncertain about his performance-based compensation (see, e.g.,~\cite{AnderliniF94,BernheimW98,DuettingFP23}). Our algorithmic perspective yields a new, learning-based approach with which to tackle the complexity of repeated contracts. 

\paragraph{Learning to respond to your new boss.}

Consider some motivating examples: A worker joins a new team, a student starts an internship, or a junior professor joins a committee. These agents are all initially uncertain of the amount of effort required of them, and what outcomes will be considered good performance. They face a potentially-complex choice of when to exert more effort, and when to step back and lower their effort. Part of their performance assessment is often done by their peers, introducing more uncertainty and noise. The environment is also dynamic, with some outcomes highly valued at one time period but less appreciated in the next period.  
In effect, each of the agents encounters an implicit and changing system of incentives, to which he is expected to gradually adapt over the course of the repeated interaction. 
In fact, this pattern is prevalent in many real-life contractual relationships.%
\footnote{For an aforementioned example, the interested reader can see \cite{gregory1998problematic} 
along with the references cited therein. More broadly, in some settings like credit scoring, an evaluation system creates incentives for an agent, while remaining deliberately opaque so as to avoid gaming. The agent thus has no choice but to pick an action ``in the dark''~\cite{GhalmeNETR21}.}

A naturally arising question, then, is:
how should an agent choose his actions in a contractual relation that involves uncertainty and recurrent interactions?
Inspired by the work of~\cite{BravermanMaoSchneiderWeinberg2018selling} on algorithmic mechanism design, we argue that applying a learning method is a natural way for the agent to respond. In practice, agents tend to react to repeated strategic interactions in a way that is consistent with no-regret learning~\cite{NekipelovST15,noti2021bid}. 
No-regret learning has been extensively studied in the context of general repeated games (see, e.g.,~\cite{BlumHLR08,brown2023is,Even-DarMN09,HartlineST15,LykourisST16,NEURIPS2020_0ed94223,AnagnostidesDFF22,KolumbusN22,ZhangFA+23}), repeated auctions and other economic interactions (see, e.g.,~\cite{DaskalakisS16,CamaraHJ20,kolumbus2022auctions}), and Stackelberg security games (see, e.g.,~\cite{BalcanBHP15}). 
See \cite{Slivkins19} for a comprehensive exposition.
By assuming that agents apply no-regret learning rather than complex strategic reasoning, we initiate a new approach to repeated contracting. 

\subsection{Our Model and Contribution}

\paragraph{Optimizing against a no-regret learner.}

We revisit the classic question of optimal contract design in a repeated setting, this time considering a no-regret learning agent. 
The main question we address is: \emph{if the principal knows that the agent is a no-regret learner, what contract sequence should she offer?} We refer to the best sequence as the \emph{optimal dynamic contract}.

We study the optimal dynamic contract in the following model: A principal and agent interact over $T$ time steps for some large $T$. In each step $t\in [T]$, the agent takes a costly action as recommended by the no-regret learning algorithm, and the principal pays the agent according to the current contract and the action's outcome. The contracts can be modified by the principal over time dynamically (and adaptively).
A simple benchmark is achieved by \emph{not} modifying them, that is, simply repeating the optimal one-shot contract in each round. We refer to this as the \emph{optimal static contract}. It is not hard to see that the principal's revenue in this case against a no-regret agent will essentially be the optimal static revenue (\Cref{obs:best-response} in Appendix~\ref{sec:observations}).

Our main focus is on ``mean-based'' learning agents, who apply simple, natural and common learning algorithms,  such as multiplicative weights \cite{arora2012multiplicative}, follow the perturbed leader \cite{hannan1957lapproximation,kalai2005efficient}, or EXP3 \cite{auer2002nonstochastic}.
Intuitively, mean-based algorithms consider the cumulative payoffs from each of the actions, and play actions which performed sub-optimally in the past with a low probability (see Section~\ref{sec:model} for a precise definition, taken from~\cite{BravermanMaoSchneiderWeinberg2018selling}). 

We also briefly consider more sophisticated agents, who apply \emph{no-swap-regret} rather than mean-based learning (e.g.,~\cite{BlumM07}). 
Against such agents, a crisp optimal strategy is immediate from previous work on general repeated games against learners~\cite{deng2019strategizing,MansourMohriSchneiderSivan2022strategizing}: It is known that
the best static solution is also the best dynamic one. 
In our context this means that no dynamic contract can achieve better than the optimal static contract  
(\Cref{obs:no-swap-regret} in Appendix~\ref{sec:observations}).
Interestingly, we show that both players can be better off if the agent applies more na\"ive, mean-based learning (although, as expected, there are also many cases where the agent is worse off due to this interaction). 


\paragraph{What is known.}
In general games, finding an optimal dynamic strategy against a mean-based learner is an open problem. The work of \cite{deng2019strategizing} shows an equivalence between this problem and an $n$-dimensional control problem, where $n$ is the number of actions among which the agent chooses. So far, it is known how to {non-trivially (i.e., with a non-trivial dynamic contract)} optimize against a mean-based learner only for the setting of repeated \emph{auctions}, in which~\cite{BravermanMaoSchneiderWeinberg2018selling} show that the designer can extract full welfare as revenue. The works of \cite{DengSS19a,cai2023selling} extend this result to prior-free auction settings and multiple agents. However, even with a single agent, the optimal auction against a mean-based learner turns out to be fairly impractical. It alternates between
running second-price auctions and charging the winner huge payments, and is thus ``not meant to guide practice''~\cite{cai2023selling}.
The work of~\cite{CamaraHJ20} studies mechanisms for a no-regret agent where in addition there is learning on behalf of the principal, as a means to dispense with the common prior assumption present in many economic design problems. 

\paragraph{Our contribution.}

In this paper we give a clean, tractable answer to our main question as follows. When the agent's choice among $n$ actions can lead to success/failure of the project, the principal's optimal dynamic contract is surprisingly simple (especially in comparison to the optimal dynamic \emph{auction}): Offer the agent one carefully-designed contract for a certain fraction of the $T$ rounds (both contract and fraction are poly-time computable), then switch to the zero contract (that is, pay the agent nothing always) for the remaining rounds.  
Previous works on the canonical contract setting with success/failure outcomes include~\cite{BabaioffFNW12,DuettingEFK21,DuettingEFK23,LiImmorlicaLucier2021contract}.
In this setting, simple \emph{linear} (single-parameter) contracts are optimal, and this is actually the only property required for our result -- our result generalizes to any setting where the principal utilizes only linear contracts.

Unlike the optimal dynamic \emph{auction}, the optimal dynamic contract divides the welfare among the principal and agent. In fact, it can increase the utility of both players (and thus also the total welfare) in comparison to the optimal static contract.
One interpretation of this result is that, while a no-swap-regret algorithm is better for the agent against any \emph{fixed} behavior of the principal, an
agent who commits to a mean-based algorithm can be overall better off against a \emph{dynamic} principal. A similar advantage of simple no-regret learning over no-swap-regret was noted in a different context by~\cite{brown2023is}. 

Our main result also generalizes to settings with a rich set of outcomes beyond success/failure, as long as the principal changes the contract dynamically by scaling it (``single-dimensional scaling''). However, we also show that absent this single-dimensional scaling restriction, there exist principal-agent instances where the optimal dynamic contract does not take this form: it is possible for the principal to do strictly better than offering the same contract for several rounds and then switching to the zero contract.

Implicit in all our positive results, as well as in all known results in the literature on optimizing against a learner, is the assumption that the optimizer knows the time horizon~$T$. We show that when there is (even limited) uncertainty about  $T$, the principal's ability to use dynamic contracts to guarantee more revenue than the optimal static contract diminishes. We achieve this by characterizing the optimal dynamic contract under uncertainty of $T$, and showing that the principal's added value from being dynamic sharply degrades with an appropriate measure of uncertainty. 

\paragraph{More related work.}
The work of~\cite{han2023learning} shows how to learn an agent's private type (rather than to maximize revenue) through online principal-agent interaction using menus of contracts. \citet{BenPoratMMT24} study principal-agent problems over MDPs, where the (budgeted) principal offers an additional reward to the agent, and the agent picks the MDP policy selfishly, without learning.


\subsection{Illustrative Example}
\label{sec:example}

\begin{figure}
    \centering
\begin{center}
\begin{tabular}{ | c | c | c | }
\hline
                             & ``Failure'' outcome & ``Success'' outcome \\ 
\hline
 Action 1 ($c_1 = 0$) \ \ \  &    1                      & 0 \\  
\hline
 Action 2 ($c_2 = 1/6$)      &  1/2                      & 1/2 \\   
\hline 
 Action 3 ($c_3 = 1/2$)      &   0                       & 1 \\
\hline
\end{tabular}
\end{center}
    \caption{A canonical contract setting in which a simple dynamic contract extracts higher expected revenue than the best static contract. The table entries show the outcome probabilities given the actions. }
\label{fig:example}
\end{figure}

To demonstrate our model and findings, 
we show an example where a simple dynamic contract yields higher revenue than the best static contract. The analysis requires some familiarity with the basics of contract settings, which appear in the first paragraphs of Section~\ref{sec:model} for completeness.
Consider the setting in Figure~\ref{fig:example}: There are three actions with costs $c_1,c_2,c_3$ for the agent, leading with the probabilities shown in the figure to two outcomes ``failure'' and ``success'' with rewards $0$ and $1$ for the principal. Since there are two outcomes, w.l.o.g.~we can consider only \emph{linear} contracts, i.e., contracts that pay the agent $\alpha$ for success (leaving the principal with $(1-\alpha)$).

Under an optimal static linear contract, the agent must be indifferent either between actions $1$ and $2$ or between actions $2$ and $3$ (otherwise, the principal is overpaying for incentivizing an action). The indifference contracts are denoted by $\criticaltwo = 1/3$ and $\criticalthree = 2/3$, respectively. These lead to the same expected utility for the principal, where the expectation is over the probability of success: $(1-\criticaltwo) \cdot \tfrac{1}{2} = (1-\criticalthree) \cdot 1 = \tfrac{1}{3}$. That is, both $\criticaltwo$ and $\criticalthree$ are optimal static contracts. 

Now consider a principal interacting with a mean-based learning agent. The principal initially offers the contract $\alpha = \frac{2}{3} + \epsilon$ for $T/2$ time steps. The agent follows his mean-based strategy and plays action~$3$ (in a $1-o(T)$ fraction of the time with high probability), which yields a utility of roughly $\alpha\cdot 1 - c_3 = \frac{1}{6}$ per step for the agent and $\frac{1}{3}$ per step for the principal. Subsequently, the principal switches to the zero contract for the remaining time steps. From the perspective of the agent, at the time of the switch the cumulative utilities of actions~$2$ and $3$ are roughly $\frac{T}{12}$ (compared to $0$ from action $1$).
But in every step of the subsequent stage, the cumulative utility of action $3$ is degraded by an amount of $\frac{1}{2}$ and the cumulative utility of action $2$ is degraded by an amount of $\frac{1}{6}$. Thus the agent ``falls'' to action~$2$ and plays it until the last period $T$.
The overall utility for the agent is $\approx 0$, and for the principal $\approx\frac{T}{2} \cdot \frac{1}{3} + \frac{T}{2} \cdot \frac{1}{2} = \frac{5}{12}T > \frac{1}{3}T$. The principal thus improves her utility by a factor of $\frac{5}{4}$ compared to the optimal static contract.

\begin{figure}[t]
\centering
\begin{subfigure}{.49\linewidth}
    \includegraphics[width=1.04\linewidth]{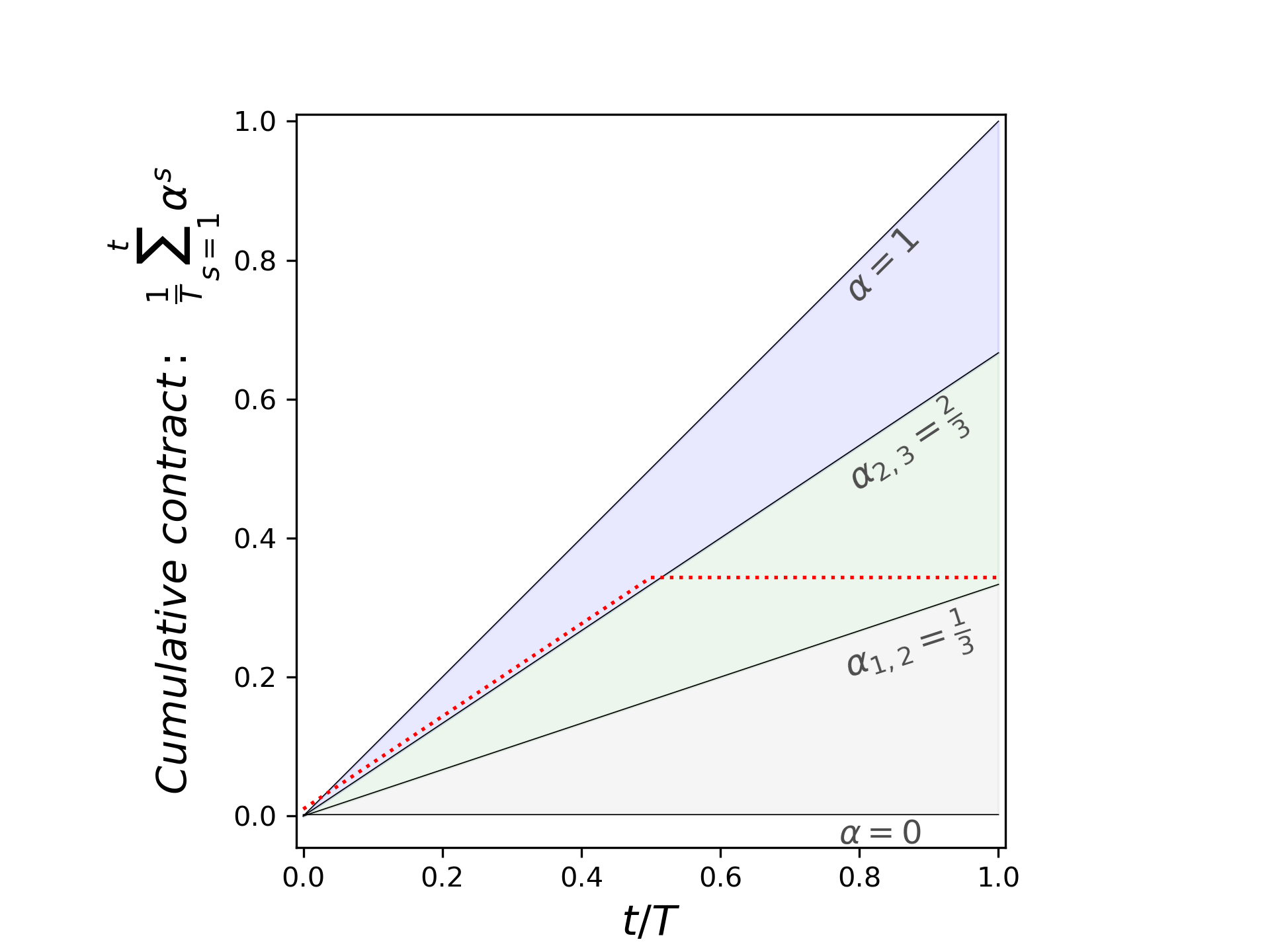}
    \caption{}  
    \label{fig:lin-contract-curve_a}
\end{subfigure}
\begin{subfigure}{.49\linewidth}
        \includegraphics[width=1.04\linewidth]{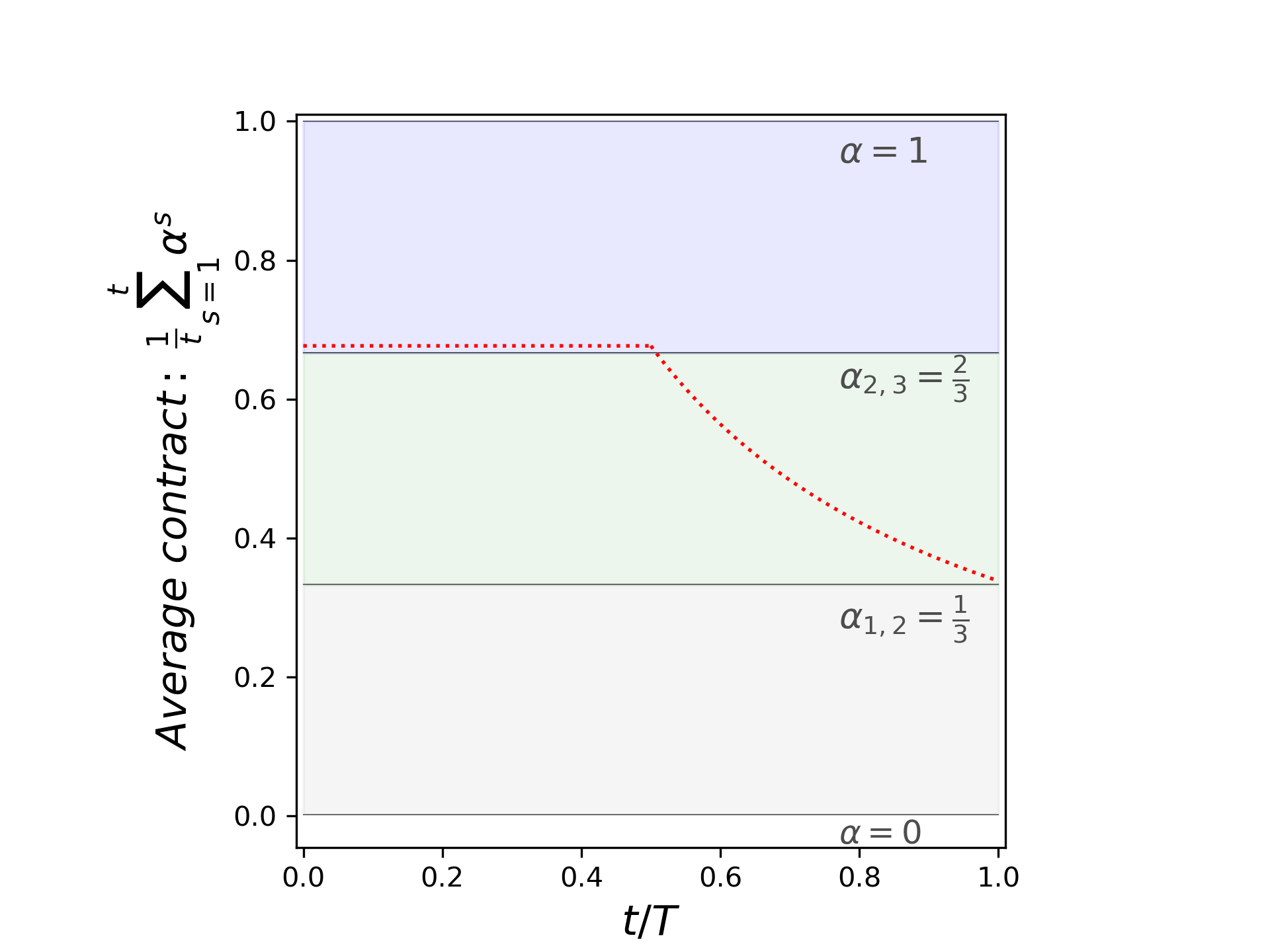}
        \caption{}
        \label{fig:lin-contract-curve_b}
\end{subfigure}
\caption{
Two representations of the same dynamic contract, as applied to
the contract setting described in Figure \ref{fig:example} and repeated for $T$ steps.
The dotted red curve in Figure \ref{fig:lin-contract-curve_a} describes the cumulative contract at time $t$ as a function of $t$, where 
both axes are normalized by~$T$. The shaded areas represent the mean-based best-response regions for the agent: when the cumulative contract is within the purple, green, and gray regions, the learning agent prefers action $3$, action $2$, and $1$, 
respectively. The lines $\criticaltwo$ and $\criticalthree$ are the \emph{indifference curves} between these regions. Figure \ref{fig:lin-contract-curve_b} shows the same dynamic contract, but this time the dotted red curve describes the average contract at time $t$ as a function of the fraction of total time $t/T$.
Pictorially, after steadily building the agent's incentives until time $T/2$, the principal's offered contract ``flatlines'', and this causes the agent to ``free-fall'' through the shaded regions during the remaining time. 
}
\label{fig:lin-contract-curve}
\end{figure}

Figure \ref{fig:lin-contract-curve} shows a graphical representation of the above dynamic contract. 
It turns out that this simple ``free-fall'' contract (see \Cref{def:free-fall}) 
is also an optimal dynamic strategy for the principal, and in fact, this is not a a special property of the current example. In Section \ref{sec:optimal-dynamic-linear} we show that for any linear contract game, there exists an optimal strategy with this form: offer a fixed contract for a period of $\lambda T$ steps and then switch to the zero contract. Moreover, in Section \ref{sec:one-d-contracts} we show that, surprisingly, this simple structure remains optimal also for general non-linear contracts, as long as their dynamics are characterized by a single scalar
parameter.  

\subsection{Summary of Results and Roadmap}
We initiate the study of repeated principal-agent problems with a learning agent; the key takeaways from our work are as follows:

\begin{theorem*} 
[See Theorem~\ref{thm:free-fall-linear} in Section~\ref{sec:linear-free-fall-proof}, combined with the reduction in~\Cref{thm:discrete_to_continuous}]
In success/failure settings, as well as in arbitrary contract settings where the principal restricts to linear contracts, the optimal dynamic contract against a mean-based agent is a free-fall contract. This optimal dynamic contract can be efficiently computed.
\end{theorem*}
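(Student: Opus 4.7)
The plan is to invoke the reduction \Cref{thm:discrete_to_continuous} so that it suffices to optimize in a continuous-time model in which the principal chooses a measurable contract $\alpha(\cdot) : [0,1] \to [0,1]$, and a mean-based agent at each time $t$ plays the one-shot best response to the running time-average $\oalpha(t) := \frac{1}{t}\int_0^t \alpha(s)\,ds$. This reduction is legitimate in the linear-contract setting because the agent's cumulative utility for action $i$ at time $t$ equals $R_i \int_0^t \alpha(s)\,ds - c_i t$; dividing through by $t$ shows the arg-max at time $t$ is exactly the one-shot best response to $\oalpha(t)$.

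The next step is to rewrite the principal's objective in a form that cleanly separates a ``trajectory'' term from an ``endpoint'' term. Let $i(\beta) \in \arg\max_i (R_i \beta - c_i)$, $W(\beta) := R_{i(\beta)} - c_{i(\beta)}$, and $u^*(\beta) := \max_i (R_i \beta - c_i)$. The step function $W$ is non-decreasing, with jumps exactly at the indifference thresholds $\criticali$. Using Revenue $=$ Welfare $-$ Agent utility, together with the fact that a mean-based agent's total utility matches the best-in-hindsight value $u^*(\oalpha(1))$ up to $o(1)$,
\[
\mathrm{Rev}(\alpha) \;=\; \int_0^1 W\bigl(\oalpha(t)\bigr)\,dt \;-\; u^*\bigl(\oalpha(1)\bigr) \;+\; o(1).
\]
Crucially, the second term depends only on the terminal average $\bar{f} := \oalpha(1)$.

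The heart of the proof is a pointwise-domination lemma. Fix any feasible $\alpha(\cdot)$, let $A(t) := \int_0^t \alpha(s)\,ds$, and set $\bar{f} := A(1)$ and $\alpha^* := \sup_{t \in (0,1]} A(t)/t$. Since $A$ is non-decreasing with $A(1) = \bar{f}$, we have $A(t) \leq \bar{f}$; by definition of $\alpha^*$, we also have $A(t) \leq \alpha^* t$; therefore $A(t) \leq \min(\alpha^* t, \bar{f})$. But $\min(\alpha^* t, \bar{f})$ is exactly the cumulative contract of the free-fall strategy (\Cref{def:free-fall}) that offers $\alpha^*$ on $[0, \bar{f}/\alpha^*]$ and $0$ afterwards --- note $\bar{f}/\alpha^* \leq 1$ since $\alpha^* \geq A(1) = \bar{f}$. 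Dividing by $t$ gives $\oalpha_{\text{ff}}(t) \geq \oalpha(t)$ pointwise, so monotonicity of $W$ yields weakly larger welfare for the free-fall contract; since it shares the endpoint $\bar{f}$, it incurs the same agent utility and hence at least the same revenue, proving that the supremum is attained within the free-fall family.

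What remains is the efficient-computability claim, which I view as the main technical point --- not because it is deep, but because it requires care with the step-function revenue landscape. Free-fall contracts form a two-parameter family $(\alpha^*, \tau)$. A short calculation shows that for any $\alpha^*$ inside a single band $[\criticali, \criticaliplus)$, the total time the agent spends in each action (and hence the entire welfare integral) is invariant to the precise choice of $\alpha^*$: the extra time in action $i$ during phase~1 is exactly compensated by less time in action $i$ during phase~2, which together contribute $\bar{f}/\criticali$ independently of $\alpha^*$. Hence the optimal $\alpha^*$ can be restricted to the $O(n)$ thresholds $\criticali$; for each such choice one then optimizes a piecewise-smooth function of $\tau$ with $O(n)$ breakpoints, yielding the optimum in polynomial time. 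Throughout, the $o(T)$ slack from both the mean-based regret guarantee and the discrete-to-continuous passage is absorbed by \Cref{thm:discrete_to_continuous}, allowing the clean continuous analysis above to translate back to the original discrete game.
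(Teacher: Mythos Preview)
Your proof is correct and takes a genuinely different route from the paper's. The paper proceeds by a sequence of local rewriting rules (Lemmas~\ref{lem:no-skip-linear}--\ref{lem:decreasing-to-freefall}): merge consecutive segments with the same action, flip segments that lie on an indifference boundary to the higher action, conclude the action sequence must be consecutively decreasing, and finally decompose each non-free-fall segment into a boundary piece plus a free-fall piece. You instead use a global decomposition $\mathrm{Rev}=\int_0^1 W(\oalpha(t))\,dt - u^*(\oalpha(1))$, observe via the envelope theorem that the agent-utility term depends only on the terminal average $\bar f$, and then note that the free-fall trajectory with the same $\bar f$ and initial slope $\alpha^*=\sup_t A(t)/t$ pointwise dominates the running average, whence monotonicity of $W$ (the welfare of the best-responding action is increasing in $\alpha$) finishes the argument. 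This is shorter and more conceptual; it explains \emph{why} free-fall is optimal (it maximizes the running average everywhere subject to the endpoint constraint). The paper's combinatorial approach, by contrast, is what generalizes to the $\bp$-scaled setting of Section~\ref{sec:one-d-contracts}: there $W$ need not be monotone in the scaling parameter (indeed the paper explicitly notes that Lemma~\ref{thm:rewriting-lemma-2} fails), so your pointwise-domination step would not transfer, whereas the rewriting machinery can be adapted (Lemmas~\ref{lem:no-return}--\ref{lem:free-fall}). For efficient computation you and the paper agree: both reduce to checking $O(n^2)$ candidate free-fall contracts indexed by start and end indifference points; your invariance observation (that, for fixed $\bar f$, the welfare integral depends only on the band containing $\alpha^*$) is a clean way to justify restricting the starting contract to the breakpoints $\criticali$.
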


\begin{theorem*}
[See Theorem~\ref{thm:unbounded-win-win} in Section~\ref{sec:utility-implications-linear}]
Consider the space of repeated principal-agent problems; in a subset of this space of positive measure, both the principal and agent achieve unboundedly better expected utilities from the principal's optimal dynamic contract compared to the optimal static contract.
\end{theorem*}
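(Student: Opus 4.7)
The plan is to (i) exhibit an explicit instance $I^\star$ on which the principal's optimal dynamic contract yields strictly higher utility for both players compared to the optimal static contract; (ii) extend this witness to a positive-measure open subset of instances via continuity; and (iii) scale within a parametric family to obtain arbitrarily large improvements.

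For step (i), I would construct an instance with a small action set (e.g., the null action plus three nontrivial effort levels) whose cost/reward parameters are tuned so that the principal's optimal free-fall---guaranteed to exist and be efficiently computable by Theorem~\ref{thm:free-fall-linear}---has its argmax $(\lambda^\star,\alpha^\star)$ at an interior point of the feasible set, strictly inside the region where the agent's terminal cumulative utility $T\max_i(\lambda^\star \alpha^\star p_i - c_i)$ remains positive. At such an interior optimum the effective averaged contract $\lambda^\star\alpha^\star$ strictly exceeds the optimal static $\alpha^{\mathrm{stat}}$, so (since the agent's earned utility under a mean-based learner is monotone in $\lambda\alpha$, up to $o(T)$) we obtain $\Delta_A>0$; the principal's gain $\Delta_P>0$ is then verified by direct computation of the free-fall revenue, analogously to the Section~\ref{sec:example} arithmetic.

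For step (ii), the relevant utilities are continuous in the instance parameters on open regions where the corresponding arguments of the maxes are unique: $u_P^{\mathrm{stat}}$ is a pointwise maximum over finitely many smooth functions of $(c_i,p_i)$, and $u_P^{\mathrm{dyn}}, u_A^{\mathrm{dyn}}$ are continuous by the closed-form free-fall characterization of Theorem~\ref{thm:free-fall-linear}. Hence $\{I : \Delta_P(I)>0 \text{ and } \Delta_A(I)>0\}$ is open and has positive Lebesgue measure. For step (iii), I would embed $I^\star$ in a parametric family $\{I_M\}_{M\ge 1}$---for instance by progressively refining the Pareto frontier with additional intermediate actions indexed by $M$---so that both $\Delta_P(I_M)$ and $\Delta_A(I_M)$ diverge as $M\to\infty$, and applying the continuity argument around each $I_M$ produces positive-measure neighborhoods witnessing arbitrarily large improvements.

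The main obstacle is step (i). As the illustrative example in Figure~\ref{fig:example} shows, the principal's marginal incentive to extend the free-fall generically remains positive until the agent is fully extracted, driving $\Delta_A$ down to zero even when $\Delta_P>0$. One must therefore engineer the action set so that the principal's objective in $\lambda$ has a strict \emph{interior} maximum---that is, so that past a certain intermediate action the incremental free-fall revenue drops sharply enough for the principal to prefer an earlier stopping time while the agent still retains positive residual. Verifying that this engineering is simultaneously compatible with a strict principal improvement (which pushes toward longer free-falls) is the crux, and requires explicit case analysis of how the principal's piecewise-linear objective in $\lambda$ behaves across the intervals defined by the free-fall trajectory.
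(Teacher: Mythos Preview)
Your three-step outline (witness instance $\to$ continuity for positive measure $\to$ parametric family for unboundedness) matches the paper's architecture, and your observation that the agent's terminal utility equals the static utility at the averaged contract $\lambda^\star\alpha^\star$ is correct and is exactly the mechanism the paper exploits. However, the proposal is a plan rather than a proof: you explicitly flag step~(i) as ``the main obstacle'' and leave it at ``requires explicit case analysis,'' and step~(iii) is even more schematic (``progressively refining the Pareto frontier'' does not explain why $\Delta_P$ or $\Delta_A$ should diverge). The entire content of the theorem lives in those two steps, so as written nothing has been established.

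The paper resolves these gaps with a single explicit family indexed by the number of actions $n$. The key idea---which your proposal does not anticipate---is to rig the instance so that \emph{every} breakpoint $\alpha_i$ yields the \emph{same} static principal utility (namely $1$), by taking $R_i=2^i$ and choosing costs recursively so that $\alpha_i=1-2^{-i}$. A tiny perturbation then makes the optimal static contract sit at $\alpha_2$, with agent utility $O(1)$. The free-fall from $\alpha_n$ then harvests $\Theta(1)$ principal utility from \emph{each} of the $\Theta(n)$ best-response regions it passes through (because the free-fall reward in region $i$ is $R_i$ times a duration that scales like $R_i^{-1}$), giving $\Delta_P=\Theta(n)$. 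Crucially, the paper shows the principal's optimal stopping point is at action $\Theta(\log n)$, not the null action: the marginal gain from falling through region $i$ is $\Theta(1)$, but it also dilutes the running average (already $\Theta(n)$) by a multiplicative factor, and this tradeoff breaks unfavorably once $2^i \ll n$. Since the agent's breakpoint utility is $(1+i)/2$, stopping at $i=\Theta(\log n)$ yields $\Delta_A=\Theta(\log n)$. This single construction simultaneously delivers your steps (i) and (iii); positive measure then follows by perturbing the rewards. Your abstract argument about interior maxima is correct in spirit, but without this ``all static contracts tied'' device there is no obvious way to force the free-fall to stop early while the principal still gains unboundedly.
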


\noindent One takeaway from the previous theorem is that the principal and agent can \emph{both} benefit when the agent commits to mean-based rather than no-swap-regret learning, in stark contrast to auctions (where a buyer committing to a mean-based strategy is left with zero payoff, because the auctioneer can extract the full surplus). We generalize our findings on the optimality of free-fall contracts among any dynamic contract with \emph{single-dimensional scaling}. 


\begin{theorem*}
[See Theorem~\ref{thm:p-linear} in Section~\ref{sec:one-d-contracts}, combined with the reduction in~\Cref{thm:discrete_to_continuous}]
In arbitrary contract settings, there is a free-fall contract that is optimal among dynamic contracts. A dynamic contract has single-dimensional scaling if it starts from an arbitrary contract $\contract\in \mathbb{R}_{\geq 0}^{m}$, and at every time step $t\le T$ plays $\alpha^t \contract$ for some scalar $\alpha^t$. 
\end{theorem*}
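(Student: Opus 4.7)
The plan is to reduce to Theorem~\ref{thm:free-fall-linear} by observing that, once the base contract $\contract$ is fixed, a single-dimensional scaling contract is fully described by a scalar sequence $(\alpha^t)_{t=1}^T$, which puts us in the same one-parameter regime handled in the linear case. Concretely, letting $F_i$ denote the outcome distribution of action $i$, I would introduce the effective quantities $q_i := \langle F_i, \contract\rangle$ (expected agent payment under a unit scaling) and $R_i := \langle F_i, r\rangle$ (expected principal reward before payment). Under the contract $\alpha^t\contract$, the agent's per-step utility from action $i$ becomes $\alpha^t q_i - c_i$, and the principal's per-step utility becomes $R_i - \alpha^t q_i$. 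Both are affine in the scalar $\alpha^t$, and a mean-based agent's play at time $t$ depends only on the cumulative quantities $\sum_{s\le t}\alpha^s q_i - t\, c_i$ across~$i$.

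Next I would observe that the induced instance $\{(q_i, R_i, c_i)\}_i$ has the one-dimensional structure required by Theorem~\ref{thm:free-fall-linear}: sorting actions by $q_i$ (discarding dominated actions) produces an increasing sequence of indifference thresholds $\criticaliplus$ at which the mean-based agent's best response shifts between consecutive actions, exactly as in the partition shown in Figure~\ref{fig:lin-contract-curve}. The proof of Theorem~\ref{thm:free-fall-linear} then applies essentially verbatim to yield, for the fixed $\contract$, an optimal scalar schedule of free-fall form $(\alpha,\ldots,\alpha,0,\ldots,0)$ for some indifference value $\alpha$ and switch time $\lambda T$. Since the overall single-dimensional scaling problem decomposes as an outer maximization over $\contract\in\mathbb{R}_{\geq 0}^m$ with an inner free-fall optimum, the global optimum is also a free-fall contract of the form $(\alpha^\star\contract^\star,\ldots,\alpha^\star\contract^\star, \mathbf{0},\ldots,\mathbf{0})$. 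Composing with the reduction of Theorem~\ref{thm:discrete_to_continuous} then transfers the result from the continuous analysis to the discrete $T$-step game, and efficient computability follows since for each fixed $\contract$ the inner problem reduces to the poly-time procedure of Section~\ref{sec:linear-free-fall-proof}.

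The main obstacle I anticipate is checking that the proof of Theorem~\ref{thm:free-fall-linear} does not secretly use the identity $R_i = q_i$ that holds in the canonical linear-contract case $\contract = r$. In our more general setting, the agent's attractiveness ordering (by $q_i$) need not coincide with the principal's preference ordering (by $R_i$), so it is a priori conceivable that the principal would rather hold the agent at a particular action than allow him to free-fall through the action space. To resolve this, I would revisit the indifference-region analysis of Section~\ref{sec:linear-free-fall-proof} and verify that the argument relies only on three ingredients: the affinity of both players' utilities in the scalar contract; the monotone shift of the mean-based best response as the cumulative scalar changes; and the fact that during the zero-contract phase the principal receives $R_i \ge 0$ at zero payment cost from whichever action the agent falls through. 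All three hold regardless of whether $R_i$ and $q_i$ are aligned, which is precisely what makes the reduction go through.
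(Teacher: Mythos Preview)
Your reduction correctly identifies the one-parameter structure, and your anticipated obstacle is exactly the right one---but your resolution of it is where the gap lies. The proof of Theorem~\ref{thm:free-fall-linear} does \emph{not} rely only on the three ingredients you list. It crucially invokes Lemma~\ref{thm:rewriting-lemma-2}, which asserts that whenever the agent is indifferent between actions $i$ and $i+1$ on a segment, the principal weakly prefers that the agent play $i+1$. The proof of that lemma computes
\[
u_P(\alpha^k,i+1)-u_P(\alpha^k,i)=(R_{i+1}-R_i)(1-\alpha^k)\ge 0,
\]
which hinges on the fact that in the linear setting the ordering of actions by the agent's attractiveness $q_i$ coincides with the ordering by the principal's reward $R_i$. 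In your $\bp$-scaled reduction, the agent orders actions by $q_i=\langle F_i,\bp\rangle$, and the analogous difference becomes $(R_{i+1}-R_i)-\alpha^k(q_{i+1}-q_i)$, which can be negative: there is no reason $R_{i+1}>R_i$ once actions are indexed by $q_i$, nor is there a natural cap $\alpha^k\le 1$. This is precisely the identity you worried about, and it is used---Lemma~\ref{thm:rewriting-lemma-2} is invoked in both Lemma~\ref{lem:decreasing-linear} (to force a decreasing sequence) and Lemma~\ref{lem:decreasing-to-freefall} (to push boundary segments into the higher region and merge them backward).

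The paper explicitly flags this failure and replaces Lemma~\ref{thm:rewriting-lemma-2} with a different mechanism. Instead of always preferring the higher action on a boundary, it proves a ``no-return'' lemma (Lemma~\ref{lem:no-return}): any trajectory that visits a region for one segment and then returns can be rewritten as a convex combination of two shorter trajectories (one that scales up the boundary segment, one that scales up the prefix before it), so one of the two is at least as good. This convexity trick, together with the no-skip property, forces the action sequence to be monotone; a separate argument (Lemma~\ref{lem:no-increasing}) disposes of the increasing case, and a second convex-combination argument (Lemma~\ref{lem:free-fall}) converts decreasing trajectories to free-fall. None of these steps require the principal to prefer the higher action at indifference. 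So the theorem is true, but the route through Theorem~\ref{thm:free-fall-linear} that you propose does not go through as stated; you would need to supply the convexity arguments of Lemmas~\ref{lem:no-return}--\ref{lem:free-fall} in place of the failed rewriting rule.
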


\noindent 
In Section~\ref{sec:general-contracts} we demonstrate that in absence of single-dimensional scaling, there may not exist a free-fall contract that is optimal among dynamic contracts. 

Finally, we are the first to investigate the impact of not knowing the time horizon when optimizing against a no-regret learner. Of course, the learning \emph{agent} can guarantee no-regret without knowing the time horizon (see e.g., \cite{cesa2006prediction}, Section 2.3). 
We show that even mild uncertainty of the \emph{principal} about the time horizon significantly degrades her ability to outperform the best static contract, and we quantify this degradation as follows:

\begin{theorem*}
[See Theorems~\ref{thm:unknown-time-horizon}-\ref{thm:unknown-horizon-converse} in Section~\ref{sec:unknown-horizon}]
For any contract problem and error-tolerance parameter $\eps > 0$, there exists is some minimum time uncertainty parameter $\gamma$ so that for any minimum time horizon $\underline{T}$, no randomized dynamic contract can guarantee the principal a $(1 + \eps)$-multiplicative advantage over the optimal static contract simultaneously for every time horizon $T$ in the range $[\underline{T}, \gamma \underline{T}]$. Conversely, for any contract problem and time uncertainty parameter $\gamma > 1$, there is some nonzero error-tolerance parameter $\eps > 0$ such that for a sufficiently large time horizon minimum $\underline{T}$, there is a randomized dynamic contract that can guarantee the principal a $(1 + \eps)$-multiplicative advantage over the optimal static contract simultaneously for every time horizon $T$ in the range $[\underline{T}, \gamma \underline{T}]$.
\end{theorem*}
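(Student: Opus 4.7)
The theorem has two directions: the impossibility (main) for large uncertainty $\gamma$, and a matching achievability (converse). I would handle each separately, relying on the structural result for free-fall contracts from Theorem~\ref{thm:free-fall-linear}. Denote by $\text{OPT}_s$ the optimal static per-round revenue and by $V^\star$ the optimal (known-horizon) dynamic per-round revenue achieved by the free-fall strategy of Theorem~\ref{thm:free-fall-linear}; in the interesting cases, $V^\star > \text{OPT}_s$.

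For the \emph{achievability} direction, the plan is to exhibit an explicit single-cycle ``free-fall with static tail'' contract: given $\gamma > 1$ and sufficiently large $\underline{T}$, run the free-fall contract of Theorem~\ref{thm:free-fall-linear} once for a single buildup-plus-freefall epoch of total length $\tau_0$ fitting within $\underline{T}$ (e.g., $\tau_0 = \underline{T}/2$), then offer the optimal static contract for all remaining rounds. The initial epoch contributes revenue $V^\star \tau_0$, exceeding the static baseline by a bonus of $(V^\star - \text{OPT}_s)\tau_0 = \Theta(\underline{T})$. Once the agent's state is (approximately) depleted to the null action at the end of the epoch, the static contract proceeds to extract revenue at rate $\text{OPT}_s$ in steady state. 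Thus for every $T \in [\underline{T}, \gamma \underline{T}]$ we obtain $R(T) \geq \text{OPT}_s \cdot T + \Theta(\underline{T})$, yielding a uniform $(1+\eps)$-multiplicative advantage for $\eps = \Omega(1/\gamma) > 0$, as desired.

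For the \emph{impossibility} direction, fix $\eps > 0$. The crucial ingredient is a ``no free lunch'' structural lemma: for any (possibly randomized) contract sequence $\sigma$ and horizon $T$, the expected cumulative excess satisfies
\[
\E[R_\sigma(T)] - \text{OPT}_s \cdot T \;\leq\; (V^\star - \text{OPT}_s)\cdot \tau^\star_\sigma(T) + O(1),
\]
where $\tau^\star_\sigma(T)$ is the length of the longest buildup-plus-freefall epoch completed by $\sigma$ in $[0, T]$. The intuition is that per-round excess over $\text{OPT}_s$ can only be generated in the free-fall portion of an active epoch, whose duration is bounded by the preceding buildup (paid at rate $\text{OPT}_s$), and the agent's post-epoch state depletion prevents efficient repetition. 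Now, if $\sigma$ achieves uniform $(1+\eps)$-advantage on $[\underline{T}, \gamma \underline{T}]$, the bound at $T = \underline{T}$ forces the dominant epoch to be completed by time $\underline{T}$, so $\tau^\star_\sigma \leq \underline{T}$. Applying the lemma at $T = \gamma \underline{T}$ yields $\eps \cdot \text{OPT}_s \cdot \gamma \underline{T} \leq (V^\star - \text{OPT}_s)\cdot \underline{T} + O(1)$, i.e., $\eps \leq (V^\star - \text{OPT}_s)/(\text{OPT}_s \cdot \gamma) + O(1/(\gamma \underline{T}))$. Choosing $\gamma$ large enough that the first term falls below $\eps/2$ yields the required impossibility for all sufficiently large $\underline{T}$; small $\underline{T}$ is covered trivially since the achievable multiplicative advantage is uniformly bounded by a problem constant, so any target $\eps$ strictly below this bound already fails once $\gamma$ passes the threshold.

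The main obstacle is proving the ``no free lunch'' lemma. The key step is a careful potential-function argument that tracks the mean-based agent's vector of cumulative utilities and shows that multi-epoch strategies cannot outperform single-cycle performance in the amortized sense---essentially that between any two epochs, the ``recovery'' cost of restoring the agent's state to one that again prefers the welfare-maximizing action offsets any subsequent free-fall bonus. Randomized strategies can then be handled by applying the deterministic bound pointwise in the support of the distribution and taking expectations, which preserves the inequality.
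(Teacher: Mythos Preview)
Both directions have genuine gaps.

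\textbf{Converse.} Your claim that after the free-fall epoch ``the static contract proceeds to extract revenue at rate $\text{OPT}_s$ in steady state'' is false. After the epoch of length $\tau_0$, the agent's historical-average contract sits at some value $\alpha_j$; if $\alpha_j < \alpha_\star$ (the optimal-static breakpoint), then offering $\alpha_\star$ thereafter keeps the running average equal to $\alpha_\star - \tau_0(\alpha_\star-\alpha_j)/(\tau_0+t)$, which is \emph{strictly} below $\alpha_\star$ for all finite $t$. Hence the mean-based agent never plays the optimal-static action and the per-round revenue stays bounded away from $\text{OPT}_s$, eventually erasing the $\Theta(\underline{T})$ bonus. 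The paper splits into two cases. When the optimal free-fall can be taken to end at or above $\alpha_\star$, a version of your construction works (continue free-falling to $\alpha_\star$, then hold; during the extra fall the per-round revenue is $R_a \geq R_\star \geq \text{OPT}_s$). When the free-fall must start at $\alpha_\star$ and end strictly below it, the paper instead builds a \emph{randomized} mixture over free-fall trajectories at geometrically spaced scales $S_i = \mu^i\overline{T}$ and combines them inductively, achieving advantage $(\eps/4)^{O(\log\gamma)}$ uniformly over the window.

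\textbf{Impossibility.} Your ``no free lunch'' lemma is ill-posed for general dynamic contracts (which need not decompose into buildup-plus-freefall epochs), and even granting it, the step ``the bound at $T=\underline{T}$ forces $\tau^\star_\sigma \leq \underline{T}$'' is unjustified at time $\gamma\underline{T}$: nothing prevents a longer epoch later in the window. You correctly anticipate that a potential argument is needed, but the paper's potential is specific: it is a piecewise-linear function $\psi$ of the historical-average linear contract, chosen so that on each segment the excess over static plus the change in time-weighted potential $\psi(\overline{\alpha})\cdot t$ is bounded by $\int \psi(\overline{\alpha}(s))\,ds$. Summing gives $\psi(\overline{\alpha}(t)) \leq -\eps R_\star + \frac{1}{t}\int_0^t \psi(\overline{\alpha}(s))\,ds$, whose extremal solution satisfies $f'(t)=-\eps R_\star/t$ and hence drops by $\eps R_\star \ln\gamma$ over $[\underline{T},\gamma\underline{T}]$. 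Since $0\leq\psi\leq\psi(\alpha_n)$, one obtains $\gamma \leq \exp(\psi(\alpha_n)/(\eps R_\star))$. The logarithmic dependence of the feasible $\eps$ on $1/\gamma$---rather than the linear one your lemma would give---reflects that time-weighted potential regenerates as time passes, which a single-epoch accounting misses.
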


\noindent The computational study of repeated contracts, in particular with learning, raises many open questions, e.g.: What is the optimal dynamic contract when the principal does not restrict to one-dimensional dynamics, and what is the computational complexity of finding it? What is the optimal dynamic contract against a learning agent with a hidden type (unifying our contract model with the auction model of~\cite{BravermanMaoSchneiderWeinberg2018selling}), or against a \emph{team} of multiple learning agents? What is the effect on welfare and utilities when, rather than a learner and optimizer, we have \emph{two} learning players?

\section{Model}
\label{sec:model}
We first present basic (non-repeated) contracts, and the class of linear contracts; a familiar reader may wish to skip to Sections~\ref{sub:model-repeated}-\ref{sec:continuous} on repeated contract settings (discrete and continuous). 

{\bf Single-shot contract setting.} There are two players, a \emph{principal} and an \emph{agent}. 
The agent has a finite set $[n]$ of $n>1$ \emph{actions}, among which it chooses an action $a$ and incurs a corresponding \emph{cost} $c_a \geq 0$ (in addition to $a$ we will use $i$ to index actions).
W.l.o.g.~the actions are sorted by cost ($c_1 < c_2 < ... < c_n$) and the cost of the first (\emph{null}) action is zero ($c_1 = 0$). 
There is a finite set $[m]$ of $m>1$ possible \emph{outcomes}, 
and every action~$a$ is associated with a probability distribution $F_a \in \Delta^m$ over the outcomes. The null action leads with probability 1 to the first (\emph{null}) outcome.
Every outcome $o$ is associated with a finite \emph{reward} $r_o\ge 0$ for the principal (in addition to $o$ we will use $j$ to index rewards). 
We assume w.l.o.g.~that $r_1 \leq r_2 \leq ... \leq r_m$ and $r_1 = 0$. 
We denote the expected reward by $R_a = \E_{o\sim F_a} [r_o]$ for action $a$. 
As is standard we assume no dominated actions: \begin{enumerate*}[label=\roman*)]
\item if $c_a<c_{a'}$ then $R_a<R_{a'}$; \item for every action there exists a contract that uniquely incentivizes it.
\end{enumerate*}
The contract setting (a.k.a.~principal-agent problem) $(c,F,r)=(\{c_a,F_a\}_{a=1}^n,\{r_o\}_{o=1}^m)$ is known to both players.

{\bf The game.} The game in the basic (non-repeated) setting has the following steps: 
\begin{enumerate*}[label=(\arabic*)]
\item The principal commits to a \emph{contract} $\textbf{p} = (p_j)_{j=1}^m$, $p_j \geq 0$, where $p_j\le p_{\max}$ is the non-negative amount the principal will pay the agent if outcome $j$ is realized.%
\footnote{The non-negativity of the contractual payments is known as \emph{limited liability}~\cite{Innes90}.
Without it -- or some other form of risk averseness of the agent -- he could simply ``buy the project'' from the principal and enjoy the rewards directly.}
In particular, $\contract$ can be the \emph{zero contract} in which $p_j=0$ for all $j$.
\item The agent selects an action $a \in [n]$, unobservable to the principal, and incurs a cost $c_a$. 
\item An observable outcome $o$ is realized according to distribution~$F_a$. The principal receives reward $r_o$ and pays the agent $p_o$. 
\end{enumerate*}
The principal thus derives a \emph{utility} (\emph{payoff}) of $r_o - p_o$, and the agent of $p_o - c_a$. 

{\bf Expected utilities and optimality.} In expectation over the outcomes, the utilities from contract $\contract$ and action $a$ are $u_P(\contract, a)=R_a - \mathbb{E}_{o\sim F_a}[p_o]$ for the principal, and $u_A(\contract, a)=\mathbb{E}_{o\sim F_a}[p_o] - c_a$ for the agent. 
Summing these up we get the expected welfare $R_a-c_a$ from the agent's chosen action~$a$.
For a given contract $\contract$, let $\BR(\contract)=\arg\max_{a} u_A(\contract, a)$  be the set of actions incentivized by this contract, i.e., maximizing the agent's expected utility (usually this will be a single element, but in the case of ties we include all actions in $\BR(p)$).%
\footnote{{The standard tie-breaking assumption, according to which the agent breaks ties in favor of the principal, is less relevant here since we want our analysis to apply to all learning algorithms, regardless of the way they break ties.}}
The goal of the contract designer is to maximize the principal's expected utility, also known as \emph{revenue}. Such a contract is referred to as \emph{optimal}.



	
{\bf Linear contracts.} In a linear contract with parameter $\alpha\in[0,1]$, the principal commits to paying the agent a fixed fraction (commission)~$\alpha$ of any obtained reward.  
Thus by choosing action $a$, the agent gets expected utility $\alpha R_a - c_a$, 
and the principal gets $(1-\alpha)R_a$. 
As $\alpha$ is raised from~$0$ to~$1$, the agent's expected utility is affected less by the action cost, and the agent's incentives align more with the principal's and with social welfare. This intuition is formalized by~\cite{DuttingRT19}, showing that as $\alpha$ increases, the agent responds with actions that have increasing costs, increasing expected rewards, and increasing expected welfares.%
\footnote{We assume w.l.o.g.~that for every action $a$ there is a linear contract $\alpha$ which uniquely incentivizes it (otherwise when focusing on linear contracts we may omit this action from the setting).}
The \emph{critical $\alpha$} at which the agent switches from action $i-1$ to action $i$ (for $i>1$) is denoted by $\criticali=(c_i-c_{i-1})/(R_i-R_{i-1})$, and is also referred to as an \emph{indifference point} or \emph{breakpoint}. For $i=1$ we define $\criticalone=0$. Using this notation, for every linear contract $\alpha\in(\criticali,\criticaliplus)$, the agent plays action $i$. In the \emph{linear contract setting}, the focus is on linear contracts and only such contracts are allowed. 

\subsection{Repeated Contract Setting: Discrete Version}
\label{sub:model-repeated}

We study a repeated contract setting $(c,F,r,T)$, in which the above game $(c,F,r)$ is repeated for $T$ discrete rounds between the same principal and agent. The number of rounds $T$ is called the \emph{time horizon}.  
The setting is known to both players,%
\footnote{In Section~\ref{sec:unknown-horizon} we consider what happens when $T$ is unknown to the principal. The other parameters of the setting, if unknown, can be easily learned via sampling.}
who update the contracts and actions in each round.
The outcomes of the actions are drawn \emph{independently} per round (past outcomes affect future outcomes only through learning). 
Denote the contract, action, realized outcome and reward at time $t\in [T]$ by $\contract^t,a^t,o^t,r^t$, respectively.
The agent's payoff at time $t$ is $p^t_{o^t}-c_{a^t}$.
The sequence $(\contract^t)_{t=1}^T$ of contracts is called a \emph{dynamic} contract, and the $T$ pairs $\{(\contract^t,a^t)\}_{t=1}^T$ form the \emph{trajectory of play}.
We define the following class: 
\begin{definition}
\label{def:free-fall}
A \emph{free-fall} contract is a dynamic contract in which the principal offers a (single-shot) contract $\contract$ for the first $T'\le T$ rounds, and then offers the zero contract for the remaining rounds.
\end{definition}

\noindent{\bf Learning agent.} 
The agent's approach to choosing an action is learning-based, by applying a no-regret algorithm (rather than based on myopic best-responding, as in the one-shot setting). 
Our analysis applies with \emph{full feedback} on the performance of each action, where the agent observes 
the 
expected payoffs of all actions (whether taken or not --- {e.g.~by observing someone else take that action}), or with \emph{bandit feedback}, where the agent observes only the {achieved} payoff of the action taken. 
A delicate issue is that, unlike the standard scenario of learning in games, the payments %
for each action are \emph{stochastic}. Thus, the agent must not only learn which action to take, but also the expected payment from each action. When $T$ is large enough,  
the extra learning has a vanishing impact, and does not affect the analysis of players' utilities and strategies. 

Our main focus is on the prominent family of \emph{mean-based} algorithms. 
The idea behind mean-based algorithms is that they rarely pick an action whose current mean is significantly worse than the current best mean. There exist such algorithms with both full and bandit  feedback that are mean-based and achieve no-regret. 
In our setting, let 
$u_{i}^{t}$ be the expected utility the agent would achieve from taking action $i$ at round $t$, 
and let {\small$\sigma_{i}^{t} = \sum_{t'=1}^{t-1} u_i^{t'}$}
represent the cumulative utility achievable from action $i$ up to time $t$ given the principal's trajectory of play. 
Then:

\begin{definition}[\cite{BravermanMaoSchneiderWeinberg2018selling}]
\label{def:mean-based}
A learning algorithm is
\emph{$\gamma(T)$-mean-based} if whenever
{\small$\sigma_{i}^{t} < \sigma_{i'}^{t} - \gamma(T)\cdot T$}, then the probability that the algorithm takes action $i$ in round $t$ is at most $\gamma(T)$. We say an algorithm is \emph{mean-based} if it is $\gamma(T)$-mean-based for some $\gamma(T) = o(1)$.\footnote{Some small changes need to be made to this definition for the partial-information (bandits) setting -- see Definition \ref{def:mean-based-bandits} in Appendix \ref{sec:mean-based-partial-info}.}
\end{definition}

\noindent{\bf Optimal dynamic contract.} 
The design goal in the repeated setting is to find an \emph{optimal} dynamic contract: a sequence $(\contract^t)_{t=1}^T$ that maximizes the total expected revenue against a learning agent (whether mean-based or no-swap-regret, where in either case we assume the worst-case such learning algorithm). 
In the linear contract setting, the sequence $({\alpha}^t)_{t=1}^T$ is composed of linear contracts. If it maximizes the total expected revenue among all linear contract sequences, we say it is the optimal dynamic \emph{linear} contract. We remark that it is without loss of generality to consider only linear contracts with $\alpha\le 1$.%
\footnote{This is a non-trivial consequence of our proof machinery. The proof appears as Observation~\ref{obs:alpha-less-than-one} in  Appendix~\ref{sec:observations} for completeness.}

Note that, as described here, the contract sequence is fixed by the principal at the beginning of the game. We refer to such a principal as \emph{oblivious}. If the principal can choose $\contract^{t}$ as a function of the agent's previous actions, we say the principal is \emph{adaptive}. Our positive results (showing the principal can guarantee at least some amount of utility) hold even for oblivious principals, and our negative results hold even for adaptive principals. 

{\bf Optimal static contract.}
In a repeated setting, a \emph{static contract} is a sequence of contracts in which the same one-shot contract is played repeatedly.
The repeated game with a static contract and a regret-minimizing agent is, in the limit, equivalent to the classic one-shot contract game with a best-responding agent (Observation \ref{obs:best-response}). 
A natural benchmark for dynamic contracts is thus the \emph{optimal} static contract, in which the optimal one-shot contract is played repeatedly.



\subsection{Repeated Contract Setting: Continuous Version} 
\label{sec:continuous}
To simplify the technical analysis, we now present a continuous version of our repeated contracts model. For the remainder of the paper we will primarily work in the continuous-time model. 

{\bf Reduction to continuous time.} 
In \cite{deng2019strategizing}, the authors consider the problem of strategizing against a mean-based learner in a repeated bi-matrix game, and show it reduces to designing dynamic strategies for a simplified continuous-time analogue (note that the choice of continuous-time analogue is tailored to mean-based learning -- it is not intended to be a special case of a general discrete-continuous reduction against any learner). 
We pursue a similar reduction here, and show (in \Cref{thm:discrete_to_continuous}) how to reduce the problem of designing dynamic contracts in the discrete-time setting (Section~\ref{sub:model-repeated}), to a simpler problem in a continuous-time setting. We later extend the reduction to settings with an unknown time horizon (see Theorem~\ref{thm:discrete_to_continuous_unknown_time} in Section~\ref{sec:unknown-horizon}). 

{\bf Trajectories of continuous play.} 
In the continuous setting, rather than specifying the trajectory of play by a sequence of $T$ contracts and responses, we instead specify it by a finite sequence $\pi$ of tuples $\{(\contract^{k}, \tau^{k}, a^{k})\}_{k=1}^{K}$, 
each representing a ``segment'' of play   
where the principal plays a constant contract {and the agent responds with a constant action}. 
Here, each $\contract^{k} \in \mathbb{R}_{\geq 0}^{m}$ represents an arbitrary contract, 
each $\tau^k \in \mathbb{R}_{\geq 0}$ represents the (fractional) amount of time that the principal presents this contract to the agent, and each $a^{k} \in [n]$ represents the action the agent takes during this time. 
In the linear contract setting, we use the notation $\alphaseg^k$ instead of $\contract^k$. 
We sometimes refer to $\pi$ as a contract, by which we mean the dynamic contract composed of $\contract^1,\dots.\contract^K$ for segments of length $\tau^1,\dots,\tau^K$. 
 
To form what we call a \emph{valid} trajectory of play against a mean-based learner, the responses $a^{k}$ of the agent must satisfy certain constraints. 
Let 
$$
\cT^k = \sum_{k'=1}^k \tau^{k'};~~~
\rho^{k} = \sum_{k'=1}^{k}(\contract^{k'}\tau^{k'}) / \cT^{k} 
$$ be the total duration of the first $k$ segments, and 
the average contract offered by the principal for the first $k$ segments, respectively. Then each $a^{k}$ (for $k > 1$) must satisfy $a^{k} \in \BR(\rho^{k-1})$ and $a^{k} \in \BR(\rho^{k})$. In words, $a^{k}$ must be a best-response to the historical average contract at both the beginning and end of segment $k$ (and therefore also throughout segment $k$).

The following is a continuous analogue of Definition~\ref{def:free-fall}.

\begin{definition}
A \emph{free-fall} trajectory $\pi$ is a game trajectory in which $\contract^{k} = {\bf 0}$ for all $k > 1$. 
\end{definition}

\noindent{\bf Optimal trajectory.} 
The average {expected} utility of the principal along trajectory $\pi$ is given by

$$\Util(\pi) = \frac{\sum_{k=1}^{K} \tau^{k}u_P(\contract^k, a^k)}{{\cT^K}}.$$   

\noindent
Let $\Uopt = \sup_{\pi} \Util(\pi)$, where the $\sup$ runs over all valid trajectories of arbitrary finite length. We can think of $\Uopt$ as the maximum possible {expected} utility of the principal in the continuous setting game. The following theorem (a direct analogue of Theorem 9 in \cite{deng2019strategizing}) connects $\Uopt$ to what is achievable by the principal in our original discrete-time game. 



\begin{theorem}\label{thm:discrete_to_continuous}
Fix any {repeated} principal-agent problem {with $T$ rounds}, and let $\Uopt$ denote the optimal expected utility of a principal in the continuous analogue. Then:
\begin{enumerate}
    \item For any $\eps > 0$, there exists an oblivious strategy for the principal that gets at least $(\Uopt - \eps) T - o(T)$ {expected} utility for the principal against an agent running any mean-based algorithm $\Alg$.
    \item For any $\eps > 0$, there exists a mean-based algorithm $\Alg$ such that no (even adaptive\footnote{In the partial-information (bandits) setting, this result only holds for \emph{deterministic} adaptive principals and not for randomized adaptive principals (in the full-information setting, it holds for either); see Appendix \ref{sec:mean-based-partial-info} for further discussion.}) principal can get more than $(\Uopt + \eps) T + o(T)$ {expected} utility against an agent running $\Alg$. 
\end{enumerate}
\end{theorem}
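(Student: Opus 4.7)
The plan is to prove both parts by carefully mapping between discrete-time play and the continuous-time trajectory abstraction, with the mean-based property ensuring the bridge is tight up to $o(T)$ terms. The structure follows Theorem~9 of \cite{deng2019strategizing}, adapted from bimatrix games to contracts: the principal's action space is now the continuous set of contracts in $\mathbb{R}_{\geq 0}^m$, and the agent's expected utility from action $a$ under contract $\contract$ is the linear functional $\mathbb{E}_{o\sim F_a}[p_o] - c_a$, which is well-behaved in $\contract$. This linearity is what lets the running empirical average $\rho^t$ serve as a sufficient statistic for the agent's cumulative utilities $\sigma_i^t$, which is the reason the continuous-time abstraction is faithful.

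For Part~1 (achievability), I start with a valid continuous trajectory $\traj = \{(\contract^k, \tau^k, a^k)\}_{k=1}^K$ whose average utility is at least $\Uopt - \eps/2$. The oblivious strategy is to play $\contract^k$ for roughly $\lfloor \tau^k T\rfloor$ consecutive rounds. First I perturb each $\contract^k$ slightly to $\tilde{\contract}^k$ so that $a^k$ is the \emph{unique} best response to both $\rho^{k-1}$ and $\rho^{k}$ with a strict indifference margin $\delta > 0$; this costs only $O(\delta)$ in per-round utility, which I fold into $\eps/2$ by choosing $\delta$ small. I then argue that inside segment $k$ the running empirical average moves monotonically from $\rho^{k-1}$ to $\rho^{k}$, so by the strict-margin condition the cumulative utility $\sigma^t_{a^k}$ exceeds every other $\sigma^t_{a'}$ by at least $\gamma(T)\cdot T$ after a warm-up of $O(\gamma(T) T /\delta)$ rounds. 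Definition~\ref{def:mean-based} then forces the agent to take action $a^k$ with probability $1-\gamma(T)$ throughout the rest of the segment, and standard concentration controls the stochasticity of realized outcomes and payments. Summing the warm-up losses over the $K = O(1)$ segments yields the $(\Uopt - \eps) T - o(T)$ bound.

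For Part~2 (impossibility), I fix a specific $\gamma(T)$-mean-based no-regret algorithm with $\gamma(T) = o(1)$ (multiplicative weights on the $u_i^t$'s with learning rate $\eta \sim 1/\sqrt{T}$ suffices). Given any adaptive principal strategy, I partition the $T$ rounds into $K = O(1/\eps)$ contiguous blocks so that inside each block the running average $\rho^t$ is approximately constant: split at each time where the best-response set of $\rho^t$ changes, and additionally every $\eps T$ rounds to cap $K$. Within a block from $t_{\text{start}}$ to $t_{\text{end}}$, the mean-based property implies the agent's realized distribution over actions is supported, up to $o(1)$ probability mass, on actions $a$ satisfying $\sigma^{t}_a \geq \max_{a'} \sigma^t_{a'} - \gamma(T) T$ for most $t$ in the block; a short calculation converts this to $a \in \BR(\rho^{t_{\text{start}}}) \cap \BR(\rho^{t_{\text{end}}})$ up to the margin. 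I can therefore group the block into at most $n$ sub-segments $(\bar{\contract}_{\text{block}}, \tau, a)$, and the concatenation over all blocks forms a valid continuous trajectory whose average utility equals the principal's realized utility up to an $o(1)$ error. Applying $\Util(\traj) \leq \Uopt$ and summing over blocks gives the claim.

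The main obstacle is Part~2, specifically certifying that the constructed sub-segments form a \emph{valid} trajectory, i.e., that the action $a$ chosen for a sub-segment lies in $\BR$ of the average contract at both endpoints. The mean-based guarantee provides this only up to the $\gamma(T) T$ cumulative slack, which is precisely why the theorem statement allows an $\eps$ gap and why the model's validity condition is phrased as (possibly non-unique) best-response membership rather than strict optimality. The footnote's caveat for bandit feedback enters here as well: with bandit feedback, a \emph{randomized} adaptive principal can correlate its contract with the agent's observed outcome to bias the agent's estimates $\hat{u}_i^t$, breaking the discrete-to-continuous correspondence; this is why the bandit version of Part~2 is restricted to deterministic adaptive principals, as handled separately in Appendix~\ref{sec:mean-based-partial-info}.
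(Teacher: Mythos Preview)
Your Part~1 is essentially the paper's argument. The one difference is how degenerate (boundary) segments are handled: you perturb the contracts $\contract^k$ themselves, whereas the paper \emph{inserts} short auxiliary segments $(q^k,\delta^k)$ before each $(\contract^k,\tau^k,a^k)$ that nudge the running average strictly into the interior of the $a^k$ region (Lemma~\ref{lem:non-degenerate}). Your version is harder to make modular, because perturbing $\contract^k$ shifts every subsequent $\rho^{k'}$ but leaves $\rho^{k-1}$ untouched, so you cannot independently create a strict margin at the \emph{start} of segment $k$ without cascading changes through earlier segments. This is fixable, but the inserted-segment device is cleaner.

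Your Part~2 has a genuine gap. Fixing multiplicative weights as the agent's algorithm is not adversarial enough to certify the upper bound. When $|J_t|>1$ (several near-best-responses), MW \emph{randomizes} over $J_t$, so the principal's expected per-round utility is a convex combination $\sum_{j} w_j\, u_P(p^t,j)$ that can exceed $u_P(p^t,a)$ for every exact best response $a\in\BR(\rho^{t})$. Any valid continuous trajectory through round $t$ is forced to use some $a\in\BR(\rho^{t-1})\cap\BR(\rho^t)$, and this set is contained in $J_t$ but can be strictly smaller. Hence the trajectory you build need not upper-bound what the principal actually earned against MW. Your sentence ``a short calculation converts this to $a\in\BR(\rho^{t_{\text{start}}})\cap\BR(\rho^{t_{\text{end}}})$ up to the margin'' is exactly the step that fails: the $\gamma(T)T$ slack in $J_t$ does \emph{not} shrink the approximate best-response set down to the exact one, and the $\eps$ in the theorem statement is slack in utility, not slack in trajectory validity. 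Relatedly, your block construction (``split at each BR change, and additionally every $\eps T$ rounds to cap $K$'') cannot cap $K$: adding more split points does not reduce their number, and an adversarial principal can force many BR changes.

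The paper's fix is to exhibit a \emph{different} mean-based algorithm: the agent observes the posted $p^t$ and deterministically plays $\arg\min_{j\in J_t} u_P(p^t,j)$, i.e.\ the near-best-response that is worst for the principal. This is still $\gamma(T)$-mean-based since it only ever plays actions in $J_t$. Against this agent the principal's per-round utility is $\min_{j\in J_t} u_P(p^t,j)$, which is at most $u_P(p^t,a)$ for \emph{every} $a\in J_t$, and in particular for whichever exact best response the trajectory segment uses. The trajectory is then simply one segment per round, $(p^t,1/T_0)$, subdivided where it crosses BR boundaries; no bound on the number of segments is needed or attempted.
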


\noindent The proof of Theorem \ref{thm:discrete_to_continuous} closely follows the proof in \cite{deng2019strategizing} and is deferred to Appendix \ref{sec:reduction}. 

One important thing to note about Theorem \ref{thm:discrete_to_continuous} is that the first part is highly constructive -- in fact, the discrete-time strategy for the principal corresponding to a trajectory $\pi$ is essentially the straightforward extrapolation, which plays each contract $\contract^k$ for {\small $\itc{\frac{\tau^k}{\cT^K}} T$} rounds 
(although a slight perturbation of this is necessary to account for segments that do not have a unique best-response). This means that when we show, in Theorem \ref{thm:free-fall-linear}, that the utility-optimizing $\pi$ for $\Uopt$ takes the form of a free-fall trajectory, 
we are simultaneously showing that a free-fall dynamic contract is asymptotically optimal in the original discrete-time setting. 

Finally, note that all the above definitions (and the reduction of Theorem \ref{thm:discrete_to_continuous}) extend to the specific case where the learner is only allowed to use \emph{linear contracts}. In this setting, we will write $\oalpha^{k} = \sum_{k'=1}^{k}\alpha^{k'}\tau^{k'} / \sum_{i=1}^{k} \tau^{k'}$ in place of $\rho^k$.

\section{Linear Contracts}
\label{sec:optimal-dynamic-linear}
In this section we focus on the case where the principal restricts to using only linear contracts in every step of the interaction with the agent (one example is when there are $m=2$ outcomes, such as success and failure. In this case, arbitrary contracts can be described as linear contracts). 
We begin by showing that without loss of generality, 
optimal dynamic contracts take the form of free-fall contracts. 
We begin by showing a 
in Section \ref{sec:linear-free-fall-proof}, and in Section \ref{sec:utility-implications-linear} we analyze the implications of optimal contracts on the welfare and on the agent's utility. In particular, we show that dynamic contracts that are optimal for the principal can improve the utilities for both players compared to their utilities under the best static contract.   

\subsection{Free-Fall Contracts are Optimal}
\label{sec:linear-free-fall-proof}

The following theorem shows that in the linear contract setting, free fall contracts are optimal dynamic contracts. We state our theorem in the language of the continuous-time reduction of Theorem \ref{thm:discrete_to_continuous}.

\begin{theorem}
\label{thm:free-fall-linear}  
Let $\pi$ be any linear dynamic contract. Then, there exists a free-fall linear contract $\pi'$ where $\Util(\pi') \geq \Util(\pi)$, and which can be computed in time polynomial in the problem size.
\end{theorem}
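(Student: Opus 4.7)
The plan is to prove this via an exchange argument in the continuous-time formulation, leveraging Theorem~\ref{thm:discrete_to_continuous} to reduce to optimizing over continuous trajectories.

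After normalizing so that $\cT^K = 1$, the principal's utility decomposes as
\[
\Util(\pi) \;=\; \int_0^1 R_{a(t)}\,dt \;-\; \int_0^1 \alpha(t)\,R_{a(t)}\,dt,
\]
with the agent's action $a(t) \in \BR(\bar{\alpha}(t))$ determined by the running average $\bar{\alpha}(t) = t^{-1}\int_0^t \alpha(s)\,ds$ and the precomputed breakpoints $\{\alpha_{i-1,i}\}_{i=2}^n$. In a free-fall trajectory parameterized by an initial contract $\alpha^\star$ and duration $\lambda$, the running average equals $\alpha^\star$ on $[0,\lambda]$ and then decays as $\alpha^\star \lambda / t$ on $(\lambda, 1]$, so the agent plays $\BR(\alpha^\star)$ initially and then monotonically ``falls'' through progressively lower-cost actions as $\bar{\alpha}(t)$ sweeps downward across the breakpoints.

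The argument has two main structural steps. First, I would show WLOG that the running average $\bar{\alpha}(t)$ is monotonically non-increasing in $t$: if $\bar{\alpha}$ ever rises after falling, shifting contract mass earlier (while preserving $\int_0^1 \alpha\,dt$) can be used to monotonize the trajectory without decreasing utility. Second, under monotonicity, I would consolidate all positive-contract time into a single initial segment at some $\alpha^\star$: the resulting free-fall trajectory has $\bar{\alpha}$ decaying as $\alpha^\star \lambda / t$ and can be arranged to cross the same set of breakpoints, while the payment integral $\int \alpha\, R_{a(t)}\,dt$ is minimized because all payment now occurs while the agent plays the single top action $\BR(\alpha^\star)$ rather than being spread across lower-action segments. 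Polynomial-time computability then follows because within each best-response interval for $\alpha^\star$, decreasing $\alpha^\star$ while rescaling $\lambda$ to preserve $\alpha^\star \lambda$ strictly increases phase-1 utility; hence WLOG $\alpha^\star$ is at one of the $O(n)$ breakpoints, and for each such $\alpha^\star$, the optimal $\lambda$ can be chosen from $O(n)$ breakpoint-induced candidates in the decay phase, giving $O(n^2)$ time overall.

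The main obstacle is rigorously justifying the exchange step. The tension is that moving contract mass earlier pushes $\bar{\alpha}$ higher over an interval, which tends to induce the agent to play higher-cost/higher-reward actions (good for the reward integral $\int R_{a(t)}\,dt$) but simultaneously raises the per-unit-time payment $\alpha(t) R_{a(t)}$ (bad for the payment integral). The key quantitative fact is that the linear breakpoint structure satisfies $\alpha_{i-1,i} = (c_i - c_{i-1})/(R_i - R_{i-1})$; this identity is precisely what balances the payment and reward changes at each breakpoint crossing so that the net effect of concentrating contract mass early is non-negative. A careful per-crossing interchange inequality exploiting this identity, and accounting for the possibility that the rearrangement creates or removes breakpoint crossings, is what closes the argument.
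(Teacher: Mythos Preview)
There is a genuine gap in the exchange argument. Your claim that in the free-fall trajectory ``the payment integral $\int \alpha\, R_{a(t)}\,dt$ is minimized because all payment now occurs while the agent plays the single top action'' is backward: the top action has the \emph{largest} expected reward $R_{a}$, so concentrating contract mass there raises, rather than lowers, the payment per unit of contract mass. Concretely, take two actions with $R_1=1$, $R_2=2$, breakpoint $\alpha_{1,2}=1/2$, and compare the monotone trajectory $\{(1/2,\,1/2,\,2),\,(3/10,\,1/2,\,1)\}$ (payment $0.65$, reward $1.5$, utility $0.85$) with the free-fall $\{(1/2,\,4/5,\,2),\,(0,\,1/5,\,1)\}$ ending at the same $\overline{\alpha}(1)=0.4$ (payment $0.8$, reward $1.8$, utility $1.0$). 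Free-fall wins, but not because payment dropped --- payment went up; reward went up more. You partly acknowledge this tension in your final paragraph but then appeal to the agent-indifference identity $\alpha_{i-1,i}=(c_i-c_{i-1})/(R_i-R_{i-1})$, which by itself does not resolve it.

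The fact that actually closes the argument is simpler: at any boundary $\alpha_{i,i+1}$ the principal weakly prefers action $i{+}1$ to action $i$, since $(1-\alpha_{i,i+1})R_{i+1}\ge(1-\alpha_{i,i+1})R_i$ (using only $\alpha_{i,i+1}\le 1$ and $R_{i+1}>R_i$). With this in hand both of your structural steps become clean segment-level rewrites with no mass-shifting. For monotonization: if the action sequence ever increases, then (because consecutive actions differ by at most one) some segment lies entirely on a best-response boundary while playing the lower action; relabel it to the higher action and merge. For consolidation: take the last non-free-fall segment with index $k>1$, split it into a piece at the boundary contract $\alpha_{a^k,a^k+1}$ followed by a zero-contract piece (same total time and same total mass, hence same endpoints), relabel the boundary piece to action $a^k{+}1$, and merge it with the preceding segment; iterate. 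Your polynomial-time argument is essentially right, except that ``strictly increases'' should be ``weakly'': holding $\alpha^\star\lambda$ fixed and moving $\alpha^\star$ within a best-response interval leaves the utility unchanged, which is still enough to place $\alpha^\star$ at a breakpoint without loss.
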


\paragraph{Proof overview.}

We will present a series of ``rewriting'' rules, which will allow us to replace a given dynamic contract $\pi$ with a simpler, more constrained, dynamic contract $\pi'$ with utility at least as large as $\pi$. At the conclusion of our sequence of rewriting steps, we will see that our contract takes the form of a free-fall contract, thus implying that there is an optimal free-fall contract.

We begin not with a rewriting rule, but instead a general observation about the structure of dynamic linear contracts --- namely, that it is impossible for an agent to ``skip over'' an action. That is, if an agent is playing action $i$ at some point, and action $j$ at some later point, there must exist segments of non-zero duration where the agent plays each of the intermediate actions between $i$ and $j$. Formally, we can write this as follows.

\begin{lemma}\label{lem:no-skip-linear}
If $\pi = \{(\alpha^k, \tau^k, a^k)\}$ is a dynamic linear contract, then for any $k$, $|a^{k} - a^{k+1}| \leq 1$ (i.e., in any two consecutive segments, the learner's action can change by at most one).
\end{lemma}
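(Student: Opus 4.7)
The key observation is that for linear contracts the best-response correspondence is extremely structured: as $\alpha$ ranges over $[0,1]$, the agent's best response traces out the actions $1,2,\dots,n$ in order, with the unique best response on each open interval $(\alphaseg_{i-1,i}, \alphaseg_{i,i+1})$ being action $i$, and with $\BR(\alpha)$ equal to exactly two consecutive actions $\{i, i+1\}$ at each interior breakpoint $\alpha = \alphaseg_{i,i+1}$. So my plan is to reduce the lemma to this structural fact about $\BR(\cdot)$ for linear contracts, then apply the validity constraint at the boundary between segments $k$ and $k+1$.

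First I would establish (or just recall, as it follows from the definitions in Section~\ref{sec:model}) the claim that for any linear contract $\alpha$, the set $\BR(\alpha)$ is either a singleton $\{i\}$ or a pair of consecutive indices $\{i,i+1\}$. This comes from the fact that $u_A(\alpha,i) = \alpha R_i - c_i$ is linear in $\alpha$, so the agent's optimal utility $\max_i u_A(\alpha,i)$ is the upper envelope of $n$ lines and is therefore piecewise linear and convex in $\alpha$; the standing assumption that every action is uniquely incentivizable by some linear contract guarantees that every action appears as a piece of this envelope with a non-empty interval, ordered by increasing $R_i$ (equivalently, increasing $c_i$). Consequently each breakpoint $\alphaseg_{i,i+1}$ lies strictly between the open regions of action $i$ and action $i+1$, and no third action can be tied at that breakpoint.

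Next I would invoke the validity conditions from the continuous-time model of Section~\ref{sec:continuous}. By definition of a valid trajectory, $a^{k} \in \BR(\osigma^{k})$ and $a^{k+1} \in \BR(\osigma^{k})$ (each action must be a best response to the average contract at its boundary, and $\osigma^{k}$ is the common boundary value between segments $k$ and $k+1$). So both $a^{k}$ and $a^{k+1}$ lie in the single set $\BR(\osigma^{k})$. By the structural fact above, this set contains at most two actions, and those two actions (if distinct) have consecutive indices. Therefore $|a^{k} - a^{k+1}| \le 1$, which is precisely the conclusion of the lemma.

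The only potential subtlety, and what I would treat as the main thing to justify carefully, is the non-degeneracy claim in the first step: at a breakpoint $\alphaseg_{i,i+1}$ no third action $j \notin \{i, i+1\}$ is tied for best response. Under the paper's standing assumption of no dominated actions together with each action being uniquely incentivizable, this is immediate from convexity of the upper envelope, but it is worth stating explicitly since without it $\BR$ could in principle be larger than $\{i,i+1\}$ at a breakpoint and the lemma would fail.
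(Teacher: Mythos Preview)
Your proposal is correct and follows essentially the same approach as the paper: the paper's proof simply notes that $a^k, a^{k+1} \in \BR(\oalpha^k)$ by the validity constraint at the common boundary, and that $\BR(\alpha)$ is always of the form $\{i\}$ or $\{i,i+1\}$, from which the conclusion is immediate. Your version adds the explicit justification (upper-envelope convexity plus the no-dominated-actions assumption) for why $\BR(\alpha)$ has this structure, which the paper treats as already established in Section~\ref{sec:model}; one small notational slip is that the average historical linear contract is denoted $\oalpha^{k}$ in the paper, not $\osigma^{k}$.
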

\begin{proof}
Note that $a^k$ and $a^{k+1}$ must both be best responses to the average historical contract $\oalpha^k$, i.e., $a^k, a^{k+1} \in \BR(\oalpha^k)$. Since $\BR(\alpha)$ is always of the form $\{i\}$ or $\{i, i+1\}$, the conclusion follows.
\end{proof}

Note that the proof of Lemma \ref{lem:no-skip-linear} relies on the ``linear topology'' of the best-response regions in Figure \ref{fig:lin-contract-curve} (i.e., any non-zero boundary between best-response regions connects two consecutive actions of the agent). This property is \emph{not} true for general contracts or general games; however, we will later see that Lemma \ref{lem:no-skip-linear} also holds for the class of $\bp$-scaled contracts introduced in Section \ref{sec:one-d-contracts}. 

We now proceed to introduce our rewriting rules. The first rewriting rule we present is very general (and in fact applies to any game): we will show that without loss of generality, no two consecutive segments of a dynamic contract induce the same action for the learner. Intuitively, for any time interval in which a mean-based agent plays a single action, we can replace the contracts in this interval with their average and obtain overall a revenue-equivalent dynamic contract. Formally, we can phrase this as follows.

\begin{figure}
    \centering
    \includegraphics[width=0.6\linewidth]{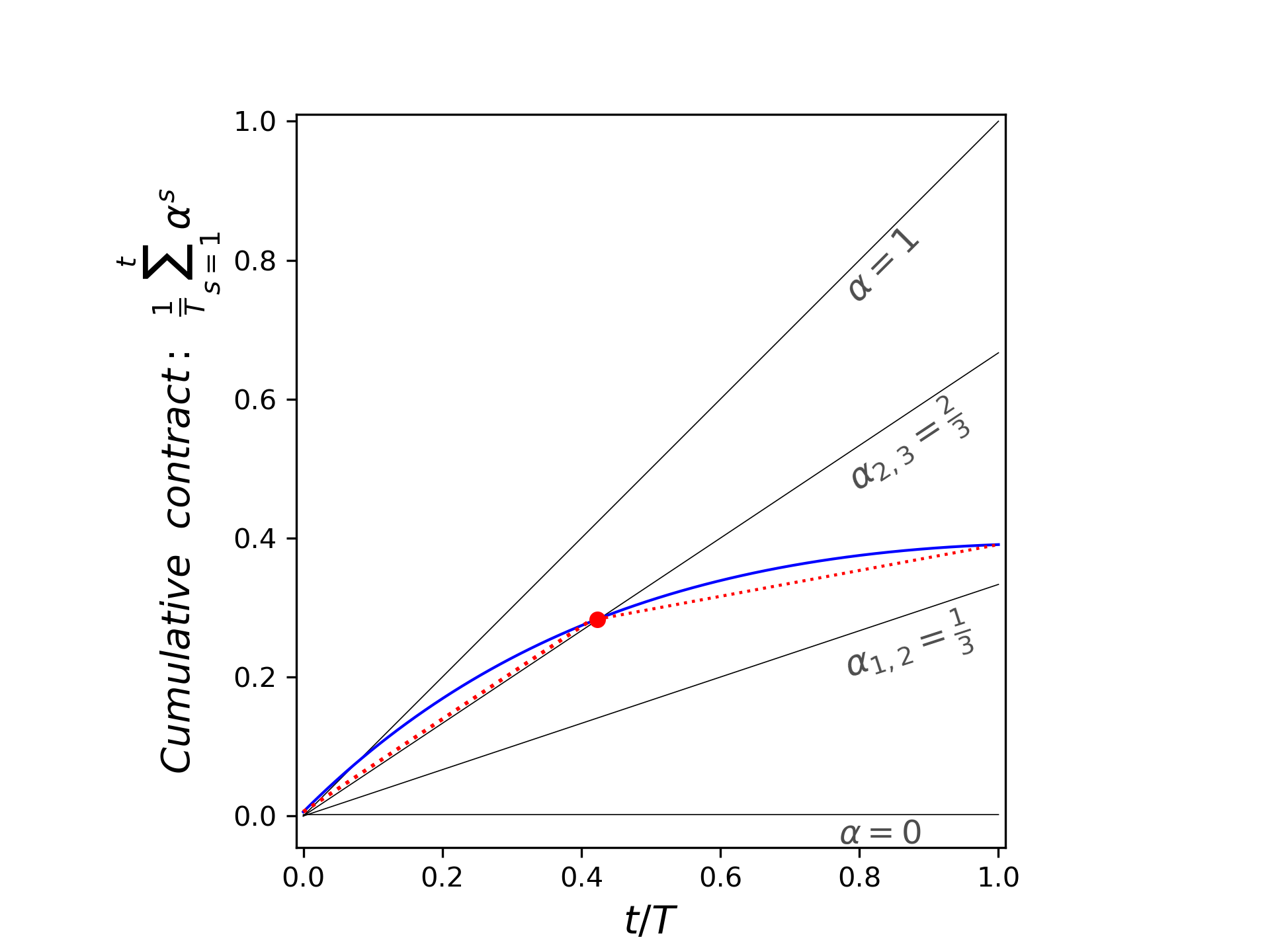}
    \caption{An illustration of Lemma \ref{thm:rewriting-lemma-1}.  The plot shows the cumulative  contract over time for the contract game shown in Figure \ref{fig:example},  repeated to $T$ steps, where both axes are normalized by $T$. The lemma shows how  
    arbitrary dynamic strategies, as the one shown in the blue curve, can be re-written as piecewise stationary strategies, depicted in the dotted red curve, inducing similar behavior by the agent and the same utilities.} 
    \label{fig:lin-contract-avg-alphas}
\end{figure}

\begin{lemma} \label{thm:rewriting-lemma-1}
Let $\pi$ be any linear dynamic contract. Then there exists a linear dynamic contract $\pi'$ such that $\Util(\pi') \geq \Util(\pi)$ and no two consecutive segments of $\pi'$ share the same agent action ($a^k \neq a^{k+1}$).
\end{lemma}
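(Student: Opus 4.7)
The plan is to exhibit an explicit ``merging'' operation which, given any pair of consecutive segments that share the same agent action, replaces them with a single segment such that (a) the resulting trajectory is still valid against a mean-based learner, and (b) the principal's utility is preserved. Iterating this merging until no two consecutive segments share the same action yields the desired $\pi'$.

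Concretely, given $\pi = \{(\alpha^k, \tau^k, a^k)\}_{k=1}^K$ with $a^j = a^{j+1} = i$ for some index $j$, I would define $\pi'$ by replacing these two segments with a single segment $(\alpha', \tau', i)$, where $\tau' = \tau^j + \tau^{j+1}$ and $\alpha' = (\alpha^j \tau^j + \alpha^{j+1} \tau^{j+1})/(\tau^j + \tau^{j+1})$, and leaving all other segments untouched. The crucial observation for validity is that both the numerator $\sum \alpha^{k'} \tau^{k'}$ and denominator $\sum \tau^{k'}$ defining the historical average $\oalpha^k$ are invariant under this merging at every ``boundary'' that remains (i.e., at every $\cT^k$ with $k \notin \{j\}$). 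Consequently, every unaffected segment retains the same best-response constraints as in $\pi$, and for the new merged segment we need only verify $i \in \BR(\oalpha^{j-1})$ and $i \in \BR(\oalpha^{j+1})$, both of which already held in the original trajectory by the assumption that $a^j = a^{j+1} = i$ was a valid response.

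For utility preservation, I would use the fact that in a linear contract the per-unit-time utility $u_P(\alpha, i) = (1-\alpha)R_i$ is linear in $\alpha$ for fixed $i$. A one-line computation then gives
\[
\tau^j (1-\alpha^j) R_i + \tau^{j+1}(1-\alpha^{j+1})R_i = (\tau^j + \tau^{j+1})(1 - \alpha')R_i,
\]
so the contribution of the merged segment to the numerator of $\Util(\pi)$ equals the sum of the contributions of the two original segments. Since the total duration $\cT^K$ is unchanged and all other segments contribute identically, $\Util(\pi') = \Util(\pi)$. Finally, each merging operation strictly decreases the number of segments by one, so after at most $K-1$ iterations we reach a trajectory in which no two consecutive segments share the same action.

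The only subtlety I expect is making the invariance of the historical averages at unaffected boundaries explicit, since this is what guarantees we do not inadvertently break the validity of later segments; once that observation is in place, everything else is essentially a one-line linear-algebra calculation that leverages the linearity of $u_P$ in $\alpha$. Note that linearity of the contract is used only for the utility-preservation step; the validity argument would go through for any family of contracts in which the best-response constraints depend only on the historical average.
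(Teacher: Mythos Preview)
Your proposal is correct and follows essentially the same approach as the paper: merge two consecutive same-action segments into a single segment with the time-weighted average contract, observe that this preserves both the total duration and the principal's utility (via affinity of $u_P$ in the contract), and iterate until no such pair remains. Your validity argument is in fact slightly cleaner than the paper's main-text proof: you observe directly that the historical averages $\oalpha^{j-1}$ and $\oalpha^{j+1}$ at the endpoints of the merged segment are unchanged, so the endpoint best-response conditions (which is all the formal definition of a valid trajectory requires) follow immediately from the validity of $\pi$, whereas the paper gives a longer intermediate-value argument showing $a$ is a best response throughout the merged segment.
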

\begin{proof}
Let $\pi = \{(\alpha^k, \tau^k, a^k)\}_{k=1}^{K}$ be any linear dynamic contract. Assume that for some $k$, $a^k = a^{k+1} = a$. Then the linear dynamic contract $\pi'$ formed by replacing the two segments $(\alpha^k, \tau^k, a)$ and $(\alpha^{k+1}, \tau^{k+1}, a)$ with the single segment $((\alpha^k\tau^k + \alpha^{k+1}\tau^{k+1})/(\tau^{k} + \tau^{k+1}), \tau^k + \tau^{k+1}, a)$ has the property that $\Util(\pi') \geq \Util(\pi)$. To see this, observe that the same action $a$ is played throughout the entire interval, and the average payout to the agent is the same. Therefore, in fact we have $\Util(\pi') = \Util(\pi)$. It only remains to confirm that this is still a valid dynamic contract (i.e.,~that each prescribed action is still a best response in the corresponding segment).

To see this, observe first that the cumulative contract at the start (respectively, end) of the merged segment in $\pi'$ is the same as the start of segment $k$ (respectively, end of segment $k+1$) in $\pi$. Therefore, all segments before and after the merged segment are still correct. To confirm the merged segment, we need only confirm that $a$ is a best response on the merged segment in $\pi'$ using the fact that it was a best response in both segments $k$ and $k+1$ in $\pi$. 

For this, let $\alpha_0$ denote the cumulative contract after the first $k-1$ segments, $\alpha_1$ denote the cumulative contract after the first $k+1$ segments, $\alpha'(t)$ denote the cumulative contract of $\pi'$ during a time $t$ in the merged segment, and $\alpha(t)$ denote the cumulative contract of $\pi$ during a time $t$ in segments $k$ or $k+1$. Observe first that for every $x$ between $\alpha_0$ and $\alpha_1$, there is some $t$ such that $\alpha(t) = x$. Because $a$ is a best response on the entire segments $k$ and $k+1$, this means that $a$ is a best response to $x$ for all $x$ between $\alpha_0$ and $\alpha_1$. Moreover, observe that $\alpha'(t)$ lies between $\alpha_0$ and $\alpha_1$ for all $t$. Therefore, $a$ is indeed a best response to $\alpha'(t)$ for all $t$ in the merged segment, and the dynamic contract is valid.

By repeatedly applying this merging of segments, we can obtain a linear dynamic contract $\pi'$ satisfying the constraints of the lemma.
\end{proof}

Figure \ref{fig:lin-contract-avg-alphas} illustrates the above lemma graphically. The figure displays the cumulative contract over time for the contract game depicted in Figure \ref{fig:example}. The blue curve represents the trajectory of an arbitrary dynamic contract strategy under which the agent's best response is to take action $3$ until time $t/T \approx 0.425$, and then take action $2$ in the remaining time. The crossing point between the best response regions is marked with a red dot. Lemma \ref{thm:rewriting-lemma-1} demonstrates that we can replace the blue trajectory with the simpler trajectory depicted in red. In this red trajectory, every region between two consecutive $\alpha$ values is crossed by a single linear segment (i.e., a piecewise-stationary trajectory), resulting in the same behavior by the agent and the same revenue.

\vspace{5pt}
Our second rewriting rule is specific to linear contracts. It shows that for every linear contract in which the agent is indifferent between two actions, it is beneficial for the principal to shift the contract infinitesimally so that the agent prefers the action with the higher expected reward.

\begin{lemma}\label{thm:rewriting-lemma-2}
Let $\pi = \{(\alpha^k, \tau^k, a^k)\}_{k=1}^{K}$ be a dynamic linear contract where during segment $k$ the agent is indifferent between actions $i$ and $i+1$ (i.e., $\BR(\oalpha^{k-1}) \cap \BR(\oalpha^{k}) \supseteq \{i, i+1\}$), but $a^{k} = i$. If we form $\pi'$ by replacing $a^{k}$ with $i+1$, then $\Util(\pi') \geq \Util(\pi)$ (the principal always prefers that the agent plays the action with higher expected reward).
\end{lemma}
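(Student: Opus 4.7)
The plan is to observe that modifying only the action $a^k$ (keeping every contract $\alpha^{k'}$ and duration $\tau^{k'}$ fixed) leaves the validity of every other segment of the trajectory untouched, while changing the segment-$k$ principal utility in a predictable way. With this reduction in place, the inequality $\Util(\pi') \ge \Util(\pi)$ collapses to a one-line monotonicity check on the per-segment payoff formula $u_P(\alpha, a) = (1-\alpha)R_a$.

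First, I would verify that $\pi'$ is a valid trajectory. The historical averages $\oalpha^{k'}$ depend only on $\{\alpha^{k''}, \tau^{k''}\}_{k'' \le k'}$ and on no $a^{k''}$, so every $\oalpha^{k'}$ is unchanged by the substitution. Therefore the best-response conditions $a^{k'} \in \BR(\oalpha^{k'-1}) \cap \BR(\oalpha^{k'})$ continue to hold for all $k' \ne k$ automatically. For segment $k$ itself, the hypothesis $\{i, i+1\} \subseteq \BR(\oalpha^{k-1}) \cap \BR(\oalpha^{k})$ immediately yields $i+1 \in \BR(\oalpha^{k-1}) \cap \BR(\oalpha^{k})$, which is exactly what validity of the modified segment requires.

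Second, I would compare utilities. Every segment except $k$ contributes identically to $\Util(\pi)$ and $\Util(\pi')$, so it suffices to evaluate the segment-$k$ difference:
\[
\Util(\pi') - \Util(\pi) \;=\; \frac{\tau^k}{\cT^K}\bigl(u_P(\alpha^k, i+1) - u_P(\alpha^k, i)\bigr) \;=\; \frac{\tau^k (1-\alpha^k)(R_{i+1} - R_i)}{\cT^K}.
\]
By the no-dominated-actions assumption, $R_{i+1} > R_i$; and by the WLOG restriction $\alpha^k \le 1$ (justified by Observation~\ref{obs:alpha-less-than-one} in Appendix~\ref{sec:observations}), we have $1 - \alpha^k \ge 0$. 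Hence the right-hand side is nonnegative, yielding the desired conclusion.

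The only step that requires real care is the validity check: one must explicitly observe that because the agent's realized actions do not feed back into the contracts offered, altering a single $a^k$ propagates no side effects to any $\oalpha^{k'}$, and consequently every other segment's best-response condition is preserved for free. Everything after that is essentially direct substitution, so I do not anticipate a deeper obstacle.
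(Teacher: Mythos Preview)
Your proposal is correct and follows essentially the same computation as the paper's proof, with the added (and welcome) explicit check that $\pi'$ remains a valid trajectory. One small caveat: citing Observation~\ref{obs:alpha-less-than-one} to justify $\alpha^k \le 1$ is circular, since that observation is itself proved by invoking Lemma~\ref{thm:rewriting-lemma-2}; you should instead appeal directly to the model definition of linear contracts, which already restricts $\alpha \in [0,1]$.
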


\begin{proof}
    Since actions in the linear contract setting are sorted by increasing value of expected reward, we have that
    $\Util(\pi') - \Util(\pi) = 
    \frac{\tau^{k}}{\cT^{K}}
    \big( 
    u_P(p^k, i+1) - u_P(p^k, i)
    \big) =
    \frac{\tau^{k}}{\cT^K}
    \big(
    R_{i+1} - R_i)(1-\alpha^k
    \big) \geq 0.$
\end{proof}

Note that the principal can implement the change in the agent's action in Lemma \ref{thm:rewriting-lemma-2} by simply increasing their payment to the agent by an arbitrarily small amount -- this incentivizes the agent to break ties in favor of the action with larger expected reward (which is the action labeled with a larger number). The fact that the principal can implement this change also follows as a direct consequence of the discrete-to-continuous reduction of Theorem \ref{thm:discrete_to_continuous}.

By applying the above two rewriting rules (Lemmas \ref{thm:rewriting-lemma-1} and \ref{thm:rewriting-lemma-2}) along with our observation in Lemma \ref{lem:no-skip-linear}, we can establish our third rewriting rule: it is always possible to rewrite a dynamic contract so that the sequence of actions is a consecutively decreasing sequence.

\begin{figure*}
\centering
\begin{subfigure}[t]{0.45\textwidth}
\centering
\includegraphics[width=\textwidth]{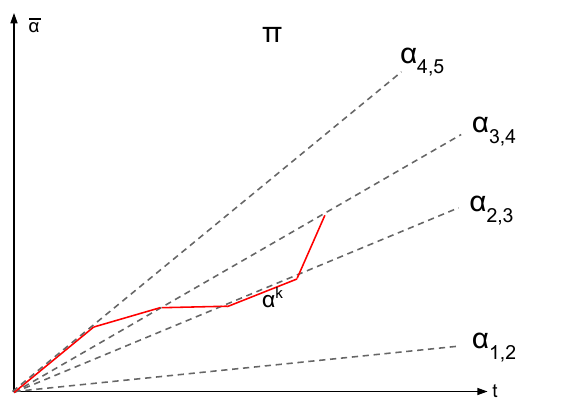}
\caption{Base policy $\pi$. In this example, the $k$th segment is the first segment where the action increases immediately after the segment ($a_{k} = 2$, $a_{k+1} = a_{k} + 1 = 3$).}\label{fig:sub1_1}
\end{subfigure}
\hfill
\begin{subfigure}[t]{0.45\textwidth}
\centering
\includegraphics[width=\textwidth]{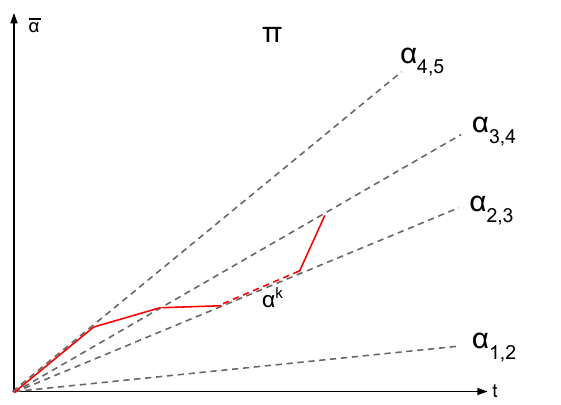}
\caption{Since $a_{k-1} = a_{k+1}$, the $k$th lies along the best response boundary $\alpha_{2, 3}$, and its existence violates Lemma \ref{thm:rewriting-lemma-2} (we can rewrite it as a segment with $a_{k} = 3$, and then further collapse these segments via Lemma \ref{thm:rewriting-lemma-1}).}\label{fig:sub1_2}
\end{subfigure}
\caption{Illustrations for the proof of Lemma \ref{lem:decreasing-linear}.}\label{fig:lemma_decreasing_linear}
\end{figure*}

\begin{lemma}\label{lem:decreasing-linear}
Let $\pi$ be any dynamic linear contract. Then there exists a dynamic linear contract $\pi' = \{(\alpha^k, \tau^k, a^k)\}_{k=1}^{K}$ with $\Util(\pi') \geq \Util(\pi)$ and where $a^1, a^2, \dots, a^K$ is a decreasing sequence of consecutive actions (i.e., $a^{k} = a^{1} - (k-1)$).
\end{lemma}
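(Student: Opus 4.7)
The plan is to iteratively apply the preceding three results (Lemmas \ref{lem:no-skip-linear}, \ref{thm:rewriting-lemma-1}, and \ref{thm:rewriting-lemma-2}) to ``smooth out'' any increase in the action sequence. First I would invoke Lemma \ref{thm:rewriting-lemma-1} to obtain a dynamic contract with no two consecutive segments sharing an action. Combined with Lemma \ref{lem:no-skip-linear}, this forces $a^{k+1} - a^k \in \{-1,+1\}$ at every transition. If no transition is $+1$, the sequence is strictly decreasing by $1$ at each step, and we are done.

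Otherwise, let $k$ be the smallest index with $a^{k+1} = a^k + 1$. By minimality of $k$ (and the absence of duplicates), every earlier transition is $-1$, so for $k \geq 2$ we have $a^{k-1} = a^k + 1 = a^{k+1}$. At $\oalpha^{k-1}$ both $a^{k-1}$ and $a^k$ are best responses (the former because it was played during segment $k-1$; the latter because it starts segment $k$); at $\oalpha^{k}$ both $a^k$ and $a^{k+1}$ are best responses by symmetric reasoning. Hence $\{a^k, a^k+1\} \subseteq \BR(\oalpha^{k-1}) \cap \BR(\oalpha^{k})$, which is precisely the hypothesis of Lemma \ref{thm:rewriting-lemma-2}. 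That lemma then lets me rewrite $a^k \mapsto a^k+1$ without decreasing utility; segments $k-1, k, k+1$ now carry a common action and collapse into a single segment by two applications of Lemma \ref{thm:rewriting-lemma-1}. The edge case $k=1$ is analogous but simpler: validity of segment $1$ only requires $a^1 \in \BR(\oalpha^{1})$, and since $a^2 = a^1+1$ we already have $\{a^1, a^1+1\} \subseteq \BR(\oalpha^{1})$, so the same swap-and-merge goes through and reduces the segment count by one.

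Each iteration strictly reduces the number of segments, so the process terminates in a dynamic contract with no $+1$ transitions and no consecutive duplicates; by Lemma \ref{lem:no-skip-linear} its actions then satisfy $a^k = a^1 - (k-1)$, as claimed. The place where care is required is verifying the hypothesis of Lemma \ref{thm:rewriting-lemma-2} at the pivot index $k$, but this is immediate once one observes that an ``up--down--up'' pattern $a^k+1, a^k, a^k+1$ forces both $\oalpha^{k-1}$ and $\oalpha^{k}$ to lie exactly on the indifference point $\criticaliplus$ (with $i = a^k$); this is the only place the ``linear topology'' of best-response regions (Figure \ref{fig:lin-contract-curve}) enters beyond Lemma \ref{lem:no-skip-linear}, and it is exactly what allows the swap to be legal.
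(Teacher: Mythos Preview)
Your proposal is correct and follows essentially the same route as the paper's proof: both identify the first index $k$ where the action increases, observe that the pattern $a^{k-1},a^k,a^{k+1}=a^k+1,a^k,a^k+1$ (or the degenerate $k=1$ version) forces segment $k$ to lie on the indifference boundary $\alpha_{a^k,a^k+1}$ while playing the lower action, and then invoke Lemma~\ref{thm:rewriting-lemma-2} followed by merging via Lemma~\ref{thm:rewriting-lemma-1}. The only structural difference is cosmetic: the paper first reduces $\pi$ to a normal form satisfying the post-conditions of both rewriting lemmas and then derives a contradiction, whereas you iterate the swap-and-merge directly and track termination via the segment count; the underlying observation is identical.
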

\begin{proof}
Apply the two rewriting rules in Lemmas \ref{thm:rewriting-lemma-1} and \ref{thm:rewriting-lemma-2} to $\pi$ until it satisfies the post-conditions of both lemmas (so, no two consecutive segments incentivize the same action, and any segment on a best-response boundary incentivizes the higher-reward action). Since Lemma \ref{lem:no-skip-linear} implies that consecutive segments cannot skip over an action, this means that every two consecutive actions under $\pi$ are consecutive: either the agent switches to the next higher action or the next lower action each time step. We therefore just must show that any dynamic contract where the agent increases their action can be rewritten as a decreasing contract with at least same payoff.

Consider the first segment in $\pi$ where the agent switches to a larger action, that is, the smallest $k$ such that $a^{k+1} = a^{k} + 1$. Let $a^{k} = j$ (so $a^{k+1} = j+1$). Note that the agent must be indifferent between actions $j$ and $j + 1$ at the end of the $j$th segment (i.e., $\{j, j+1\} \subseteq \BR(\oalpha^{k})$).

Now, there are two cases: either segments $k$ and $k + 1$ are the first two segments of the dynamic contract $\pi$ (i.e., $k=1$), or there exists a $(k-1)$st segment. In the first case, the agent is indifferent between actions $j$ and $j+1$ for the entire first segment (from time $0$ to $\tau^{1}$), but plays action $j$. This contradicts the fact that $\pi$ cannot be reduced further by Lemma \ref{thm:rewriting-lemma-2}.

In the second case, the $(k-1)$st segment must incentivize action $j + 1$ for the learner (since the sequence of actions is decreasing up until segment $k$). But this means that the agent must be indifferent between actions $j$ and $j+1$ also after the $(k-1)$st segment, and thus for the entirety of the $k$th segment ($\{j, j+1\} \subseteq \BR(\oalpha^{k-1})$). Since the agent plays $j$ during the $k$th segment, this also contradicts the fact that $\pi$ cannot be reduced further by Lemma \ref{thm:rewriting-lemma-2} (see Figure \ref{fig:lemma_decreasing_linear} for an example of this reduction).
\end{proof}

We are now almost done -- Lemma \ref{lem:decreasing-linear} shows we can rewrite any dynamic contract so that the agent descends through their action space. We now need only show that the principal should abruptly switch to offering the zero contract after the first segment (instead of slowing the rate of descent through these regions by offering a positive contract). We do this in our final rewriting lemma.

\begin{figure*}
\centering
\begin{subfigure}[t]{0.45\textwidth}
\centering
\includegraphics[width=\textwidth]{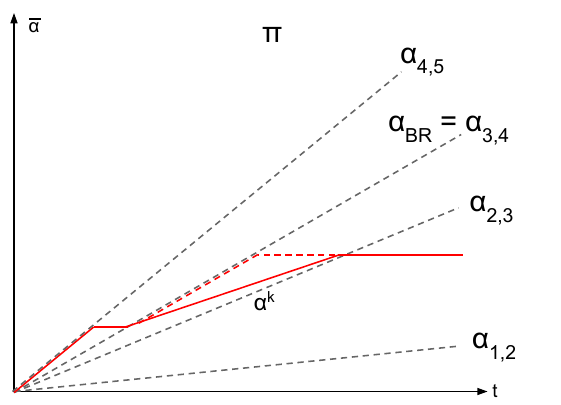}
\caption{Base policy $\pi$. The $k$th segment is the first non-free-fall segment; we decompose it into a segment $\alpha = \alpha_{BR} = \alpha_{3, 4}$ and a free-fall segment with $\alpha = 0$.}\label{fig:sub2_1}
\end{subfigure}
\hfill
\begin{subfigure}[t]{0.45\textwidth}
\centering
\includegraphics[width=\textwidth]{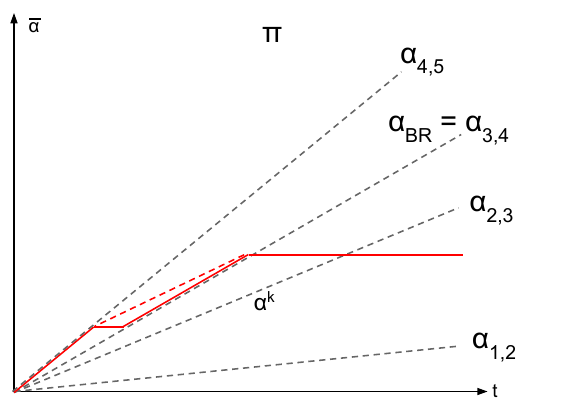}
\caption{The segment with $\alpha = \alpha_{BR}$ can be rewritten via Lemma \ref{thm:rewriting-lemma-2} to lie in region 4, where it can then be further combined with earlier segments via Lemma \ref{thm:rewriting-lemma-1}. This moves the first non-free-fall action earlier in $\pi$.}\label{fig:sub2_2}
\end{subfigure}
\caption{Illustrations for the proof of Lemma \ref{lem:decreasing-to-freefall}.}\label{fig:lemma_decreasing_to_freefall}
\end{figure*}

\begin{lemma}\label{lem:decreasing-to-freefall}
Let $\pi$ be a dynamic linear contract where the agent plays a decreasing sequence of actions. Then there exists a free-fall linear contract $\pi'$ with $\Util(\pi') \geq \Util(\pi)$. 
\end{lemma}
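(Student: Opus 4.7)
I will eliminate non-free-fall segments (indices $k > 1$ with $\alpha^k > 0$) one at a time by pushing their ``mass'' into segment $k-1$, combining the rewriting rules of Lemmas~\ref{thm:rewriting-lemma-1} and \ref{thm:rewriting-lemma-2} with an explicit decomposition step. I will induct on the largest non-free-fall index, showing each iteration strictly decreases it while weakly increasing the principal's utility; after at most $K - 1$ rewritings, $\pi$ becomes a free-fall contract $\pi'$ with $\Util(\pi') \ge \Util(\pi)$.

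Fix the largest $k > 1$ with $\alpha^k > 0$. Since $\pi$ incentivizes a decreasing sequence of consecutive actions (Lemmas~\ref{lem:no-skip-linear} and \ref{lem:decreasing-linear}), we have $a^{k-1} = a^k + 1$, and the average contract at the start of segment $k$ must sit exactly at the breakpoint $\alpha_{BR} := \alpha_{a^k,\, a^k + 1}$ where $\BR(\cdot) \supseteq \{a^{k-1}, a^k\}$. Moreover $\alpha^k \le \alpha_{BR}$, since otherwise the running average would rise rather than fall on segment $k$, contradicting the agent's descent to $a^k$. I will split segment $k$ into two sub-segments of lengths $\tau_1 := \alpha^k \tau^k / \alpha_{BR}$ with contract $\alpha_{BR}$, and $\tau_2 := \tau^k - \tau_1$ with contract $0$. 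Because $\alpha_{BR}\tau_1 = \alpha^k \tau^k$, the average at the end of segment $k$ is preserved and all downstream segments remain valid. During the first sub-segment the average stays constant at $\alpha_{BR}$, so Lemma~\ref{thm:rewriting-lemma-2} lets me reassign its action to $a^{k-1}$, after which Lemma~\ref{thm:rewriting-lemma-1} merges it with segment $k - 1$.

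A short calculation expresses the net utility change as $\tau_1 (1 - \alpha_{BR})(R_{a^{k-1}} - R_{a^k}) \ge 0$, relying only on $\alpha_{BR} \le 1$ and $R_{a^{k-1}} > R_{a^k}$. After the rewriting, segment $k$ has $\alpha = 0$; together with the already-free-fall segments $k+1,\ldots,K$, every index $\ge k$ is now free-fall, so the largest non-free-fall index has strictly decreased. I expect the main delicate point to be verifying that the merged segment $k - 1$ remains valid -- specifically, that $a^{k-1}$ stays a best response throughout. This follows because the starting cumulative contract of segment $k - 1$ is unchanged, its new endpoint remains at $\alpha_{BR}$ (still on the boundary of $\BR$ containing $a^{k-1}$), and the action of the absorbed sub-segment has been reassigned to $a^{k-1}$. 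Once this invariant is established, the induction closes cleanly and at most $K - 1$ applications yield the desired free-fall contract.
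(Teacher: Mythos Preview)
Your proposal is correct and follows essentially the same approach as the paper's proof: pick the last non-free-fall segment, split it into a boundary sub-segment at $\alpha_{BR}$ and a zero sub-segment, invoke Lemma~\ref{thm:rewriting-lemma-2} to reassign the boundary sub-segment's action to $a^k+1$, merge with segment $k-1$ via Lemma~\ref{thm:rewriting-lemma-1}, and iterate. Your explicit computation of the utility gain $\tau_1(1-\alpha_{BR})(R_{a^{k-1}}-R_{a^k})$ and your validity discussion for the merged segment are slightly more detailed than the paper's version, but the argument is the same.
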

\begin{proof}
Let $\pi = \{(\alpha^k, \tau^k, a^k)\}_{k=1}^{K}$ (with $a^k$ decreasing), and consider the last non-free-fall segment $(\alpha^k, \tau^k, a^k)$, i.e., $k$ is the maximal $k$ for which $\alpha^{k} \neq 0$. Assume that $k > 1$ (if not, then $\pi$ is already a free-fall contract).

Let $\alpha_{BR} = \alpha_{a^{k}, a^{k} + 1}$ be the indifference contract for the best-response boundary separating the current action from the previously incentivized action. Consider replacing this segment with the two consecutive segments $(\alpha_{BR}, (\alpha^k/\alpha_{BR})\tau^k, a^k)$, $(0, (1 - \alpha^k/\alpha_{BR})\tau^k, a^k))$. In doing so we essentially are doing the inverse of the first rewriting rule in Lemma \ref{thm:rewriting-lemma-1} -- replacing a single segment with two segments that average to the original segment -- and because of this, the resulting dynamic contract is valid and has the same utility as our original contract (the construction also guarantees both segments stay within this region). But now we have a segment $(\alpha_{BR}, (\alpha^k/\alpha_{BR})\tau^k, a^k)$ that lies along the best-response boundary $\alpha_{BR}$, so by Lemma \ref{thm:rewriting-lemma-2} we can replace it with the segment $(\alpha_{BR}, (\alpha^k/\alpha_{BR})\tau^k, a^k + 1)$ and strictly increase the utility of our dynamic contract (see Figure \ref{fig:lemma_decreasing_to_freefall}).

We can then merge this segment with the previous segment in (which also incentivizes action $a^k + 1$) to obtain a new dynamic contract with strictly greater utility than $\pi$ and whose first non-free-fall action occurs strictly earlier. Repeating this process, we obtain a free-fall contract $\pi'$ with at least the same utility as $\pi$. 
\end{proof}

We can now prove our original theorem.

\begin{proof}[Proof of Theorem~\ref{thm:free-fall-linear}]
From Lemmas \ref{lem:decreasing-linear} and \ref{lem:decreasing-to-freefall} the first part of this theorem (that there exists a free-fall linear contract $\pi'$ with $\Util(\pi') \geq \Util(\pi)$) immediately follows.

To show that we can efficiently compute this free-fall contract, note that the optimal free-fall linear contract might as well start with a segment of the form $(\alpha_{i-1, i}, \tau, i)$ for some indifference contract $\alpha_{i-1, i}$ (if it does not start by offering some indifference contract, we can apply the rewriting rule of Lemma \ref{thm:rewriting-lemma-1} to merge this segment with the following segment, which would incentivize the same action). 

It is also true that the optimal free-fall linear contract might as well \emph{end} at an indifference contract: that is, $\oalpha^{K} = \alpha_{j-1, j}$ for some $j$. To see this, consider a free-fall linear contract $\pi$ that does not end on an indifference contract. It ends with a segment of the form $(0, \tau^K, a)$ for some agent action $a$. Consider the contract $\pi(\tau)$ formed by replacing the duration of the last segment with $\tau$; this operation is valid for all $\tau$ in some interval $[0, \tau_{\max}]$. Note that $\Util(\pi(\tau))$ is a convex function of $\tau$ (it is of the form $(\Util(\pi(0))\cT^{K-1} + u_P(0, a)\tau)/(\cT^{K-1} + \tau)$) so it is maximized when $\tau$ equals one of the endpoints of this interval. But at both endpoints, $\oalpha^{K}$ lies on a best-response boundary (for $\tau = 0$, $\alpha_{a, a+1}$, for $\tau = \tau_{\max}$, $\alpha_{a-1, a}$). 

Since our optimal contract is completely characterized by its start and end points, it can be computed in polynomial time in $n$ by testing all the pairs of indifference points $\{\alpha_{i-1,i}, \alpha_{j-1,j}\}$ with $j \leq i$ as candidates for the start and end points of the optimal initial contract (this pair of indifference points also uniquely specifies the fraction of time that must be spent in free-fall). Note that in the case where in the optimal free-fall contract $i = j$, the optimal contract is the best static contract.  
\end{proof}

\subsection{Implication to Welfare and Agent's Utility}\label{sec:utility-implications-linear}

In the example shown in Section \ref{sec:example}, the dynamic manipulation that the principal made degraded the overall welfare, and all the added profits for the principal were at the expense of the agent. We demonstrate that this is not always the case; there are other scenarios where dynamic manipulations by the principal, which are optimal for maximizing their revenue, can actually be Pareto improvements over the best static contract, increasing the overall welfare.

\begin{example*} (Welfare improvement): 
Consider the setting depicted in Figure \ref{fig:example} with a slight variation where the cost of action $2$ is $1/2 + \epsilon$, with $\epsilon < 1/(2T)$. In this case, the best static contract incentivizes action $1$ and yields a utility of $1/3$ for the principal and zero for the agent.
However, the best dynamic contract remains the same as in the previous analysis: it starts by incentivizing action $2$ for a period of $\frac{2}{3}T$ steps and then transitions to action $1$ by offering zero payments for the remaining time. This results in a utility of $\frac{5}{12}T$ for the principal and zero for the agent, thereby increasing welfare by a factor of $5/4$ without altering the agent's utility. 
\end{example*}


Next, we show the existence of ``win-win'' scenarios where optimal dynamic contracts can enhance the payoffs for both the principal and the agent compared to the best static contract. The improvement in welfare can be substantial, reaching as much as $\mathcal{O}(n)$, essentially achieving full welfare. Specifically, we establish that the multiplicative gap between the utilities of the best static contract and those of the best dynamic contract can be $\mathcal{O}(n)$ for the principal's utility and 
$\mathcal{O}(\log(n))$
for the agent's utility.

\begin{theorem}
\label{thm:unbounded-win-win}
    There exist repeated contract settings where an optimal dynamic contract improves expected welfare by a $\Theta(n)$ multiplicative factor compared to the best static contract, and where the agent's expected utility improves 
    by a factor of $\Theta(\log(n))$. 
    Moreover, these settings have a positive measure. 
\end{theorem}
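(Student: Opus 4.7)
The plan is to exhibit a concrete family of linear contract instances $\{I_n\}_{n\geq 2}$ witnessing both claimed gaps, and then argue positive measure by a perturbation argument. The construction would use $n$ actions with $R_i$ growing exponentially and costs $c_i$ calibrated so that the indifference contracts $\alpha_{i-1,i}$ form a well-chosen increasing sequence approaching $1$. A natural candidate is $R_i = 2^i - 1$ together with $c_i$ determined by requiring $\alpha_{i-1,i} = 1 - 1/i$. This yields strictly increasing indifference contracts (so every action is uniquely incentivized and none is dominated), welfare $R_n - c_n = \Theta(R_n)$, and static-optimal revenue $\max_i (1-\alpha_{i-1,i}) R_i = \max_i R_i/i$ of order $\Theta(R_n/n)$ attained near $i=n$.

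By Theorem~\ref{thm:free-fall-linear}, the optimal dynamic contract is a free-fall starting at some indifference point $\alpha_{i^\star-1,i^\star}$; I would take $i^\star = n$ and tune the initial-phase length $\lambda T$. During the initial phase the principal earns $(1-\alpha_{n-1,n}) R_n = R_n/n$ per unit time, and during free-fall, by the mean-based identity used implicitly in Lemma~\ref{lem:decreasing-linear} (the time-averaged contract at a transition between consecutive actions equals their indifference point), the agent plays each intermediate action $i$ for a duration proportional to $1/\alpha_{i-1,i} - 1/\alpha_{i,i+1}$ at zero principal cost while generating expected reward $R_i$. Summing these contributions shows the principal's revenue is $\Theta(R_n)$, an $n$-fold multiplicative improvement. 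For the agent, the initial phase contributes $\lambda T(\alpha_{n-1,n} R_n - c_n)$ and each free-fall segment contributes $-c_i$ per unit time; substituting the construction and simplifying, the losses telescope into a partial harmonic sum and the net agent utility evaluates to $\Theta((R_n/n)\log n)$, while under the best static contract the agent nets only $\Theta(R_n/n)$, giving the claimed $\Theta(\log n)$ ratio.

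For positive measure, I would observe that every quantity above — which indifference point maximizes the static revenue, the choice of $i^\star$ and $\lambda$ in the optimal free-fall (determined by the polynomial-time procedure in Theorem~\ref{thm:free-fall-linear}), and the two multiplicative lower bounds — is characterized by a finite collection of strict polynomial inequalities in the $2n$ parameters $(c,R)$, so the set of instances exhibiting both gaps contains an open neighborhood of the constructed instance and hence has positive Lebesgue measure. The main technical obstacle is the agent-utility computation: because the mean-based dynamics pin the agent's realized utility near the cumulative utility of the trailing action, one must verify that the telescoping loss during free-fall does not cancel the initial-phase gain, and that the remaining difference grows harmonically rather than collapsing to $O(1)$ or exploding to $\Theta(n)$; this delicate calibration is precisely what dictates the prescription $\alpha_{i-1,i} = 1 - 1/i$ in the construction.
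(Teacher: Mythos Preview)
Your construction has a fatal flaw in the welfare claim. With $R_i = 2^i - 1$ and $\alpha_{i-1,i} = 1 - 1/i$, the principal's static payoff from incentivizing action $i$ is $(1-\alpha_{i-1,i})R_i = R_i/i$, which is \emph{increasing} in $i$. Hence the optimal static contract sits at $\alpha_{n-1,n}$ and already incentivizes action $n$. But then the static welfare is $R_n - c_n$, the maximum welfare achievable by any action, and no dynamic contract can improve welfare at all---certainly not by a factor $\Theta(n)$. (Incidentally, in your instance $R_n - c_n = 1 + \sum_{i=2}^{n} 2^{i-1}/i = \Theta(R_n/n)$, not $\Theta(R_n)$ as you assert.)

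The paper avoids this by designing the instance so that the principal's static payoff $(1-\alpha_i)R_i$ is \emph{constant} across all indifference points (equal to $1$), and then perturbing so that the static optimum lands at action $2$. This is the key idea you are missing: the static benchmark must pin the agent to a \emph{low} action so that the dynamic contract can lift welfare by moving to high actions during the first phase. Concretely the paper takes $R_i = 2^i$ and chooses costs so that $\alpha_i = 1 - 2^{-i}$; then the free-fall time at action $i$ is $\Theta(2^{-i})$ and each free-fall segment contributes $\Theta(1)$ to the principal, summing to $\Theta(n)$. Your calibration $\alpha_{i-1,i} = 1 - 1/i$ instead gives free-fall durations $\Theta(1/i^2)$, so the free-fall contribution $\sum_i R_i/i^2$ is dominated by the top term and is only $\Theta(R_n/n^2)$---this is the quantitative reason your ``summing these contributions shows the principal's revenue is $\Theta(R_n)$'' step fails even on its own terms. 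The agent-side $\Theta(\log n)$ gap in the paper likewise relies on the static optimum being at action $2$ (so the agent's static utility is $\Theta(1)$), together with the observation that the agent's average utility under free-fall equals his utility under the final time-averaged contract, which lands at action $\Theta(\log n)$.
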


\begin{proof}
    The idea of the proof is to look at games where the values for the principal when incentivizing each action are similar, but the actions differ significantly in terms of welfare. Then, by investing a small amount of additional payment in the early stages of the game (compared to the best static contract), the principal can incentivize the agent to substantially improve welfare, initially in the form of higher profits for the agent. This added welfare is then shared between the players during the free-fall stage of the dynamic.
    Consider the following contract game.\footnote{ 
    In this example we shift the rewards with an additive constant such that the reward for the principal when the agent takes the null action equals some constant instead of zero. This simplifies the following analysis and is without loss of generality: 
     as utility functions are, in general, defined up to affine transformations, one can reduce this constant form the principal's utility without changing the incentives in the game.} 
    There are $n>2$ actions, with expected reward $R_i = v^{i}$ for some $v>0$. Concretely, we let $v=2$. The cost of the action are specified recursively by $c_1 = 0$ and $c_i = c_{i-1} + R_{i-1} - \frac{1}{2}$ for $i>1$, yielding $c_{i>1} = \sum_{k=2}^i (2^{k-1} - \frac{1}{2})$. The resulting indifference contracts are thus $\alpha_i = 1 - 2^{-i}$ for $1 < i \leq n$. In this game, the principal has the same utility (of one unit) for all the indifference contracts. 
    The agent's utility under the contract $\alpha_i$, as the reader can verify, is given by $2^i - \frac{3}{2} - \sum_{k=1}^i (2^{k-1} - \frac{1}{2}) = \frac{1}{2}(1 + i)$. Notably, this utility is higher for the higher actions. The welfare of action $i$ is thus $w_i = \frac{1}{2}(3 + i)$. Next, we slightly alter this game by increasing the payoff of action $2$ by a small amount $\epsilon > 0$ such that the optimal static contract is now $\alpha_2$, which yields a utility of $1 + \mathcal{O}(\epsilon)$ for the principal, and the agent is still indifferent under this contract between action $2$ and the null action. In the following analysis, we are mainly interested in large (but finite) $n$. Notice that the optimal static contract is extremely inefficient for large $n$, getting an arbitrarily low (independent of $n$) fraction of the optimal welfare. 

    Now consider an optimal dynamic strategy; by Theorem~\ref{thm:free-fall-linear}, there is an optimal strategy of a free-fall form. We will construct a free-fall contract $\textbf{p}$ that starts at $\alpha_n$, so action $n$ is played initially, where the duration $\lambda T$ of that stage is chosen such that the final action at time $T$ is action 
    $\left\lceil \frac{1}{2}\log(n)\right\rceil$. Specifically, we require  $\lambda \alpha_n + (1-\lambda) \cdot 0 = \alpha_{\left\lceil \frac{1}{2}\log(n)\right\rceil}$, and so 
    $\lambda =\frac{2^n}{2^n - 1}\big( 1 - \frac{1}{\sqrt{n}} \big)$.
    We show that this free-fall strategy bounds the utilities of both players from below.   
    
    \begin{claim}\label{thm:win-win-free-fall-claim}
    In an optimal free-fall contract, the utilities for both players are at least those obtained in the contract described above.
    \end{claim}
    
    The proof for this claim is deferred to the appendix. It consists of three parts: first, we show that an optimal free-fall contract must start at $\alpha_n$. This is done directly by way of contradiction. Then, the proof shows that the last action that is played by the agent in an optimal free-fall contract must be higher than $\frac12 \log n$. The intuition for this part of the proof is that as the principal continues to free fall through lower and lower actions, the marginal gain from each action (which is the expected reward of that action because we are free falling) continues to diminish. At some point, the marginal gain is outweighed by the current average principal utility, which we show should occur at action $\Theta(\log n)$ (since we know the principal can get an average utility of $\Theta(n)$ and the expected reward of action $i$ is $2^i$). Lastly, we compare the utilities of both players in a free-fall strategy that begins at action $n$ and ends at action $\left \lceil \frac12 \log n \right \rceil$ to those of the optimal free-fall strategy and observe that the utilities in the former case bound the respective utilities in the latter case from below. The principal's utility is clearly bounded from below by her utility in our strategy due to optimality. For the agent, the total utility is determined by the stopping point. Since the agent's utility at $\alpha_i$ is increasing with $i$ in our game, we conclude that the agent's utility in an optimal contract is at least $\frac12 \log n$. 

    Now let us calculate the average utilities for the players under our dynamic strategy, averaged over the whole sequence of play. The agent's average utility at the last step is the same as the utility that would have been obtained under the average contract at that time, which is $\frac{1}{2}(1 + \left\lceil \frac{1}{2} \log(n)\right\rceil)$.  
    To calculate the utility for the principal, we define $t_i$ to be the time when the agent switches from action $i$ to action $i-1$. We know that the transition from action $n$ to $n-1$ happens at time $t_{n} = \lambda T$, and until that time the principal gains a utility of one per time unit. 
    After that time, the average contract at time $t$ is the weighted average until $t$ of the contract $\alpha_n$ with weight $\lambda T$ and zero contract with weight $t - \lambda T$. Therefore, the transition times from each action $i$ are given by $t_i = \lambda T \frac{\alpha_{n}}{\alpha_i}$. After time $\lambda T$, the principal pays zero and extracts the full welfare from the agents actions, and so the overall utility for the principal is 
    $\lambda T + \sum_{i=\left\lceil \frac{1}{2} \log(n)\right\rceil}^n (t_{i-1} - t_i) R_{i-1}$. 

    \begin{claim}
        The utility for the principal in the free-fall contract $(\alpha_n, \lambda)$ is $\mathcal{O}(n).$
    \end{claim}
    \begin{proof}
        The utility from region $i$ is  $\lambda T + \sum_{i=\left\lceil \frac{1}{2} \log(n)\right\rceil}^n (t_{i-1} - t_i) R_{i-1}$. The time intervals are 
        $$
        (t_{i-1} - t_i) = \lambda T \alpha_n \Big(\frac{1}{\alpha_{i-1}} - \frac{1}{\alpha_{i}} \Big) = \frac{\lambda T \alpha_n2^i}{(2^i-2)(2^i-1)}.
        $$
        The utility for the principal from region $i > \frac{1}{2} \log(n)$ and large $n$ is thus: 
        $$
        \frac{\lambda T \alpha_n 2^{2i}}{2(2^i-2)(2^i-1)} =\Theta(1).
        $$
        Summing over $n - \frac{1}{2} \log(n)$ such terms yields a utility of $\mathcal{O}(n).$
    \end{proof}

    The above arguments hold similarly also for perturbed versions of this game. For example, shifting the rewards by arbitrary and independent values in the range $[-1,1]$, as well as re-scaling the reward parameter $v$, yielding a positive measure in the parameter space.     
\end{proof}

One interesting point about the example presented in the proof above is that if the agent had used a ``smarter'' learning algorithm that guaranties low swap regret, then the outcome of the best static contract would have been obtained (see Observation \ref{obs:no-swap-regret} in the appendix, following the analysis of \cite{deng2019strategizing}). The agent in this case would have had lower utility. That is, using a better algorithm leads to a worse outcome! The explanation for this  counter-intuitive result is that a mean-based regret-minimizing algorithm is only guaranteed to approach the set of Coarse Correlated Equilibria (CCE),\footnote{In fact, for mean-based algorithms there is a stricter characterization of the set of equilibria to which they may converge~\cite{kolumbus2022auctions}, but the above explanation still holds.}
whereas no-swap-regret dynamics must approach the set of Correlated Equilibria (CE, a subset of the set of CCE's). There are games in which some CCE distributions of play give higher utilities to the players than all CE distributions (see e.g., \cite{feldman2016correlated,kolumbus2022auctions} for examples in auctions, and \cite{brown2023is} for a related example in general games). In other words -- committing to use an algorithm with weaker worst-case guarantees allows, in some cases, obtaining better (non-worst-case) results.     

\section{{General Contracts with Single-Dimensional Scaling}}
\label{sec:one-d-contracts}

Next, we consider general contracts and generalize the result of Theorem \ref{thm:free-fall-linear} to families of one-dimensional (yet non-linear) dynamic contracts for which free-fall contracts are optimal. 


Given any contract $\bp$, the set of $\bp$-scaled contracts are the one-dimensional family of contracts of the form $\alpha\bp$ for some $\alpha \geq 0$. We will consider a principal that is restricted to only play $\bp$-scaled contracts. In the continuous-time formulation of Section \ref{sec:continuous}, this means that each contract $p^{k}$ must be $\bp$-scaled. We will let $p^{k} = \alpha^{k} \bp$, and we will often abuse notation and write $\alpha^{k}$ as shorthand for this contract (e.g., we will specify segments of the trajectory $\pi$ in the form $(\alpha^{k}, \tau^k, a^k)$). Recall that a free-fall contract denotes such a dynamic contract for the principal where $\alpha^{k} = 0$ for all $k > 1$. 

As with linear contracts, note that as $\alpha$ increases from $0$, the contract $\alpha\bp$ incentivizes the agent to play an action in $\BR_{\bp}(\alpha)$ (which is unique except for at most $n$ ``breakpoint'' values of $\alpha$, where the agent is indifferent between two actions). This induces an ordering on over the actions; we will relabel the actions so that actions 1 (the null action), 2, 3, $\dots$ are incentivized for increasing values of $\alpha$. Formally, if the agent has $n$ actions, we have $n$ ``breakpoints'' $0 = \alpha_{0, 1} < \alpha_{1, 2} < \alpha_{2, 3} < \dots < \alpha_{n-1, n}$, where action $i$ belongs to $\BR_{\bp}(\alpha)$ iff $\alpha \in [\alpha_{i-1, i}, \alpha_{i, i+1}]$ (with $\alpha_{n, n+1} = \infty$). 

The goal of this section is to prove the following analogue of Theorem \ref{thm:free-fall-linear}, that free-fall $\bp$-scaled contracts are optimal $\bp$-scaled dynamic contracts.

\begin{theorem}\label{thm:p-linear}
Let $\pi$ be any $\bp$-scaled dynamic contract. Then there exists a free-fall $\bp$-scaled contract $\pi'$ where $\Util(\pi') \geq \Util(\pi)$. 
\end{theorem}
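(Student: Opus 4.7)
The plan is to mirror the rewriting-based proof of Theorem~\ref{thm:free-fall-linear}, re-establishing each of its ingredients in the $\bp$-scaled setting. The key observation is that $\bp$-scaled contracts inherit the ``linear topology'' of best responses from the linear case: as $\alpha$ increases from $0$, the map $\BR_\bp(\alpha)$ traverses the actions $1, 2, \ldots, n$ one at a time through breakpoints $0 = \alpha_{0,1} < \dots < \alpha_{n-1,n}$, and is a singleton off these breakpoints. Consequently, the no-skip property of Lemma~\ref{lem:no-skip-linear} and the segment-merging rule of Lemma~\ref{thm:rewriting-lemma-1} carry over with essentially verbatim proofs; the latter uses only the linearity of the agent's expected payment $\alpha \langle F_a, \bp\rangle$ in the scalar $\alpha$, so averaging $\alpha$'s across a constant-action interval preserves both the agent's utility and best-response validity throughout the interval.

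The main obstacle is re-establishing the tie-breaking rule of Lemma~\ref{thm:rewriting-lemma-2}. A direct computation at an indifference contract $\alpha_{i,i+1}\bp$ yields
\[
u_P(\alpha_{i,i+1}\bp, i+1) - u_P(\alpha_{i,i+1}\bp, i) = (R_{i+1} - R_i) - \alpha_{i,i+1}\bigl(\langle F_{i+1}, \bp\rangle - \langle F_i, \bp\rangle\bigr) = w_{i+1} - w_i,
\]
the welfare gap, where $w_a := R_a - c_a$. In the linear case this is automatically non-negative because $\alpha_{i,i+1} \le 1$ forces the principal's commission to stay below the reward, but for a general $\bp$ no such structural guarantee applies. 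I would resolve this by an action-pruning pre-reduction: any action $i$ whose welfare lies strictly below the $\alpha$-interpolation between the welfares of its two flanking actions can be shown never to appear in an optimal trajectory, because any segment spending duration $\tau$ at an interior contract $\alpha$ in region $i$ can be replaced by a convex combination of segments at the two bordering breakpoints $\alpha_{i-1,i}$ and $\alpha_{i,i+1}$ (with weights chosen so the total duration and average $\alpha$ over the replaced interval are preserved). A direct calculation shows that this replacement changes the principal's utility by $\tau_1(w_{i-1} - w_i) + \tau_2(w_{i+1} - w_i)$, which is strictly positive exactly when $i$ is a welfare-dominated action. After iterating, we are left with an action set over which welfare is concave in the action index, and the tie-breaking rule recovers in the form needed: the principal prefers the higher-indexed action at any indifference point actually reached along an optimal trajectory.

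With these three rewriting rules in place (no-skip, merging, and welfare-respecting tie-breaking), the remaining steps from the linear proof --- the reduction to a decreasing action sequence (Lemma~\ref{lem:decreasing-linear}) and the final free-fall conversion (Lemma~\ref{lem:decreasing-to-freefall}) --- carry over essentially verbatim, since their arguments invoke only the three rewriting rules plus the linear topology of best responses, both of which have been reproduced for $\bp$-scaled contracts. The hardest step in the whole plan is the pruning reduction: one must verify that the convex-combination splitting can be localized so as not to disturb the cumulative-average history $\oalpha^{k-1}, \oalpha^k$ at the endpoints of the segment being replaced, and that the breakpoint sub-segments themselves constitute valid pieces of a trajectory --- this is precisely why the weights $\tau_1, \tau_2$ are chosen to preserve the integrated $\alpha$ across the replacement, and why the ``interior of region $i$'' hypothesis ensures the sub-segment endpoints lie on the required best-response boundaries.
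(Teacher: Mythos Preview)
Your identification of the obstacle is correct: the tie-breaking rule of Lemma~\ref{thm:rewriting-lemma-2} genuinely fails for $\bp$-scaled contracts, since the welfare gap $w_{i+1}-w_i$ need not be non-negative. But your proposed fix via pruning has a real gap.

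First, the gain calculation for your replacement is incorrect. Suppose a segment $(\alpha,\tau,i)$ has $\oalpha^{k-1}=\alpha_{i-1,i}$ and $\oalpha^{k}=\alpha_{i,i+1}$ (crossing region $i$ upward). If you split it into sub-segments $(\alpha_{i-1,i},\tau_1,\cdot)$ followed by $(\alpha_{i,i+1},\tau_2,\cdot)$, then the first sub-segment is a boundary segment at $\alpha_{i-1,i}$ and can indeed play action $i-1$; but the second sub-segment has start average $\alpha_{i-1,i}$ and end average $\alpha_{i,i+1}$, so its action must lie in $\BR_\bp(\alpha_{i-1,i})\cap\BR_\bp(\alpha_{i,i+1})=\{i-1,i\}\cap\{i,i+1\}=\{i\}$. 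You therefore only gain $\tau_1(w_{i-1}-w_i)$, not $\tau_1(w_{i-1}-w_i)+\tau_2(w_{i+1}-w_i)$. No ordering of the two sub-segments lets both switch away from $i$, because the trajectory's average contract must pass continuously through the interior of region $i$, where $i$ is the unique best response.

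Second, even granting your pruning, concavity of $i\mapsto w_i$ does not yield the tie-breaking rule. Take $w_1=0$, $w_2=10$, $w_3=5$: nothing can be pruned (action $3$ has no right neighbor), yet at $\alpha_{2,3}$ the principal strictly prefers action $2$. Your parenthetical ``at any indifference point actually reached along an optimal trajectory'' would need an independent argument that optimal trajectories never cross breakpoints above the welfare-maximizing action, and you have not supplied one.

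The paper's route is different: it abandons the tie-breaking rule entirely. The replacement is a ``no-return'' lemma showing one can assume $a^k\neq a^{k+2}$ as well as $a^k\neq a^{k+1}$; the argument is that two consecutive boundary segments at the same breakpoint can be absorbed into either the boundary segment itself or the preceding prefix, by a convex-combination-of-utilities trick. Together with no-skip and merging this forces the action sequence to be either consecutively increasing or consecutively decreasing. Increasing sequences are then shown to be no better than some static contract (by peeling off the last segment and inducting), and decreasing sequences are reduced to free-fall via a splitting argument similar in spirit to Lemma~\ref{lem:decreasing-to-freefall} but again using a convex-combination step rather than tie-breaking.
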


Our proof of Theorem \ref{thm:p-linear} will parallel our previous proof of \ref{thm:free-fall-linear} in the sense that we will demonstrate how to gradually transform a general $\bp$-scaled contract into a free-fall $\bp$-scaled contract while increasing utility for the principal. The main difficulty in applying the proof of Theorem \ref{thm:p-linear} directly is that the rewriting rule in Lemma \ref{thm:rewriting-lemma-2} no longer holds -- for $\bp$-scaled contracts, it is not the case that segments along a best-response boundary should always incentivize the larger action for the agent. In the proof below, we forego the use of this rewriting rule by instead using the weaker condition that we cannot have two consecutive segments along a best-response boundary (Lemma \ref{lem:no-return}). Note that since linear contracts are a specific case of $\bp$-scaled contracts, this also provides an alternate proof of Theorem \ref{thm:free-fall-linear}.

To prove Theorem \ref{thm:p-linear}, we will establish a sequence of lemmas constraining the potential geometry of an optimal $\bp$-scaled dynamic contract. We begin with some general observations on the structure of $\bp$-scaled contracts and the underlying state space of the dynamic contract problem. 

We begin our proof with the observation that, similar to linear contracts, $\bp$-scaled contracts cannot ``skip over'' actions for the agent (c.f. Lemma \ref{lem:no-skip-linear}, which has an essentially identical proof).
\begin{lemma}\label{lem:consecutive-actions}
If $\pi = \{(\alpha^k, \tau^k, a^k)\}$ is a $\bp$-scaled dynamic contract, then for any $k$, $|a^{k} - a^{k+1}| \leq 1$.
\end{lemma}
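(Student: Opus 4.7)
The plan is to mirror the proof of Lemma~\ref{lem:no-skip-linear}, using the one-dimensional structure of $\bp$-scaled contracts that has just been established in the paragraph preceding the lemma. The key observation is that although the contracts $p^k = \alpha^k \bp$ live in $\mathbb{R}^m_{\geq 0}$, the agent's best-response correspondence along the ray $\{\alpha\bp : \alpha\geq 0\}$ has exactly the same combinatorial shape as in the linear case: it is governed by the monotone sequence of breakpoints $0 = \alpha_{0,1} < \alpha_{1,2} < \dots < \alpha_{n-1,n}$, and at any scalar $\alpha$ the best-response set $\BR_\bp(\alpha)$ is either a singleton $\{i\}$ or a pair of consecutive actions $\{i, i+1\}$ (the latter occurring exactly at a breakpoint).

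First I would unpack what ``valid trajectory'' means for $\bp$-scaled contracts. Because every contract offered is of the form $\alpha^{k'}\bp$, the historical average $\rho^k = \sum_{k'\leq k}\tau^{k'}\alpha^{k'}\bp / \cT^k$ is also $\bp$-scaled, say $\rho^k = \oalpha^k \bp$ with $\oalpha^k = \sum_{k'\leq k}\tau^{k'}\alpha^{k'}/\cT^k$. The validity constraint in the continuous-time model then gives $a^k \in \BR_\bp(\oalpha^{k-1})$ and $a^k \in \BR_\bp(\oalpha^k)$, and analogously $a^{k+1} \in \BR_\bp(\oalpha^k)$ and $a^{k+1} \in \BR_\bp(\oalpha^{k+1})$.

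The key step is then immediate: both $a^k$ and $a^{k+1}$ lie in $\BR_\bp(\oalpha^k)$. By the just-cited structural fact, $\BR_\bp(\oalpha^k)$ has the form $\{i\}$ or $\{i, i+1\}$, so either $a^k = a^{k+1}$, or $\{a^k, a^{k+1}\} = \{i, i+1\}$ for some $i$. In either case $|a^k - a^{k+1}| \leq 1$, as required.

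I do not anticipate a real obstacle here; the entire point of this lemma is to record, for the subsequent $\bp$-scaled rewriting arguments, that the ``linear topology'' of best responses used in Lemma~\ref{lem:no-skip-linear} survives the move from scalar $\alpha$-contracts to the $\bp$-parametrized family. The only thing worth double-checking is that the indexing convention promised just above the lemma --- relabeling actions so that $1, 2, 3, \dots$ are incentivized for increasing $\alpha$ --- really yields a totally ordered breakpoint sequence with no ties and no ``missing'' actions; this is guaranteed by the standing no-dominated-actions assumption together with the fact that we have only $n$ breakpoints separating $n$ best-response intervals along a single ray.
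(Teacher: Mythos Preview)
Your proposal is correct and essentially identical to the paper's own proof: both note that $a^k$ and $a^{k+1}$ must lie in $\BR_{\bp}(\oalpha^k)$, which by the one-dimensional breakpoint structure is either $\{i\}$ or $\{i,i+1\}$, whence $|a^k - a^{k+1}| \leq 1$.
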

\begin{proof}
Note that $a^{k}$ and $a^{k+1}$ both must be best-responses to the average historical contract $\rho^k$ after segment $k$, which is a $\bp$-scaled contract with parameter $\oalpha^{k} = \sum_{k'=1}^{k}\alpha^{k'}\tau^{k'} / \sum_{i=1}^{k} \tau^{k'}$. In other words, $a^{k}$ and $a^{k+1}$ belong to $\BR_{\bp}(\oalpha^{k})$. Since $\BR_{\bp}(\alpha)$ is always of the form $\{i\}$ or $\{i, i+1\}$ the conclusion follows.
\end{proof}

Now, recall that in Lemma \ref{thm:rewriting-lemma-1} we show that (for general contract problems) we can restrict our attention to trajectories where no two consecutive segments have the same agent best response (i.e., $a^{k} \neq a^{k+1}$ for any $k$). The following lemma proves a strengthening of this fact specific to $\bp$-scaled contracts.

\begin{lemma}\label{lem:no-return}
Let $\pi$ be any $\bp$-scaled dynamic contract. Then there exists a $\bp$-scaled dynamic contract $\pi' = \{(\alpha^k, \tau^k, a^k)\}$ with the property that for all $k$, $a^{k} \neq a^{k+1}$ and $a^{k} \neq a^{k+2}$, and that $\Util(\pi') \geq \Util(\pi)$.
\end{lemma}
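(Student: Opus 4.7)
The condition $a^k \neq a^{k+1}$ is immediate from Lemma~\ref{thm:rewriting-lemma-1} (which is stated for arbitrary contract settings), so I will assume this has already been enforced and focus on eliminating the three-term oscillations $a^k = a^{k+2}$. Combined with Lemma~\ref{lem:consecutive-actions}, any such oscillation must take the form $(i, i+1, i)$ or the mirror $(i+1, i, i+1)$; I will describe the former in detail, noting that the latter is entirely analogous.

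My first step is to derive a structural consequence of the oscillation: because both $i$ and $i+1$ lie in $\BR_{\bp}(\oalpha^k) \cap \BR_{\bp}(\oalpha^{k+1})$, and the set of $\alpha$ whose best response includes two consecutive actions is exactly the singleton $\{\alpha_{i,i+1}\}$, I get $\oalpha^k = \oalpha^{k+1} = \alpha_{i,i+1}$. The weighted-average identity $\oalpha^{k+1}\cT^{k+1} = \oalpha^k\cT^k + \alpha^{k+1}\tau^{k+1}$ then forces $\alpha^{k+1} = \alpha_{i,i+1}$, so the middle segment lies entirely on the indifference boundary. Using the agent's indifference relation $\alpha_{i,i+1}(\E_{F_{i+1}}[p] - \E_{F_i}[p]) = c_{i+1} - c_i$, the principal's utility gap at this contract simplifies to $u_P(\alpha_{i,i+1}\bp, i+1) - u_P(\alpha_{i,i+1}\bp, i) = (R_{i+1} - c_{i+1}) - (R_i - c_i)$, i.e., the welfare difference $w_{i+1} - w_i$.

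I then split into cases on the sign of this welfare gap. If $w_i \geq w_{i+1}$, I relabel the middle segment's action as $i$---valid since both actions are best responses at the indifference and the change weakly improves utility---producing three consecutive action-$i$ segments that collapse via Lemma~\ref{thm:rewriting-lemma-1}. The harder case is $w_{i+1} > w_i$, where simply changing the middle action strictly decreases utility. My plan here is a local time-redistribution: shrink $\tau^k, \tau^{k+2}$ by $\delta_1, \delta_2 > 0$, extend $\tau^{k+1}$ by $\delta_1 + \delta_2$, and re-solve for updated contracts $\alpha^k_{\text{new}}, \alpha^{k+2}_{\text{new}}$ that preserve the endpoint averages $\oalpha^{k-1}$ and $\oalpha^{k+2}$, leaving the rest of $\pi$ untouched. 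By the linearity of $u_P$ in $\alpha$, the utility change is exactly $(\delta_1 + \delta_2)(w_{i+1} - w_i) > 0$. Pushing $(\delta_1, \delta_2)$ to the boundary of feasibility (non-negative contracts and non-negative segment durations) drives either $\alpha^{k+2}_{\text{new}}$ to $0$ or one of $\tau^k - \delta_1, \tau^{k+2} - \delta_2$ to $0$, in which case the corresponding segment either vanishes or collapses onto $\alpha_{i,i+1}$ and can be absorbed into its neighbor, destroying the $(i,i+1,i)$ oscillation.

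Iterating this procedure and re-applying Lemma~\ref{thm:rewriting-lemma-1} after each step yields the desired $\pi'$. The main obstacle I anticipate is rigorously verifying that the boundary-of-feasibility step in the hard case genuinely breaks the oscillation rather than merely reshaping it: particular care is needed in corner cases where $\oalpha^{k-1}$ or $\oalpha^{k+2}$ already lies on the indifference boundary $\alpha_{i,i+1}$ (allowing a direct relabelling of $a^k$ or $a^{k+2}$ to $i+1$ followed by a merge), and an explicit termination argument---counting oscillating triples or total segment count---will be needed to complete the induction.
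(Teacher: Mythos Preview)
Your approach diverges from the paper's, and the hard case contains a genuine gap. In the $(i,i+1,i)$ oscillation with $w_{i+1}>w_i$ and both $\oalpha^{k-1},\oalpha^{k+2}$ strictly below $\alpha_{i,i+1}$, track the feasibility constraints on $(\delta_1,\delta_2)$. For segment $k$: since $\oalpha^{k-1}<\alpha_{i,i+1}$ forces $\alpha^k>\alpha_{i,i+1}$, the updated contract $\alpha^k_{\mathrm{new}}=(\alpha^k\tau^k-\alpha_{i,i+1}\delta_1)/(\tau^k-\delta_1)$ \emph{increases} without bound as $\delta_1\nearrow\tau^k$; there is no binding non-negativity constraint, and the limit $\delta_1=\tau^k$ cannot be taken because deleting segment $k$ leaves segment $k+1$ starting at $\oalpha^{k-1}$, where $i+1$ is not a best response. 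For segment $k+2$: since $\alpha^{k+2}\le\alpha_{i,i+1}$, the first binding constraint as $\delta_2$ grows is $\alpha^{k+2}_{\mathrm{new}}=0$, at $\delta_2=\alpha^{k+2}\tau^{k+2}/\alpha_{i,i+1}<\tau^{k+2}$. At that point segment $k+2$ becomes $(0,\tau^{k+2}-\delta_2,i)$ with strictly positive duration and action still $i$: the oscillation $(i,i+1,i)$ is intact. Driving $\alpha^{k+2}_{\mathrm{new}}$ to zero does not ``collapse the segment onto $\alpha_{i,i+1}$'' as you claim; it just turns it into a free-fall segment inside the $i$-region. Your anticipated corner cases (an endpoint already on the boundary) do not cover this generic situation, so the local-redistribution idea does not close.

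The paper sidesteps all of this with a single convexity step. Given the oscillation, it forms two trajectories: $\pi_1$ replaces the first $k{+}1$ segments by the single segment $(\alpha^{k+1},\cT^{k+1},a^{k+1})$ (valid because $\alpha^{k+1}=\oalpha^{k+1}$ incentivizes $a^{k+1}$), and $\pi_2$ replaces them by a time-scaled copy of the first $k$ segments (each $\tau^{k'}$ multiplied by $\cT^{k+1}/\cT^{k}$). Both reach the same state $\oalpha^{k+1}$ at time $\cT^{k+1}$, so the suffix of $\pi$ attaches. Then $\Util(\pi)=(\tau^{k+1}/\cT^{k+1})\Util(\pi_1)+(\cT^{k}/\cT^{k+1})\Util(\pi_2)$, so one of $\pi_1,\pi_2$ has weakly larger utility, and each has strictly fewer segments than $\pi$; induction on segment count finishes. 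This argument needs no welfare comparison and no perturbation analysis.
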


\begin{figure*}
\centering
\begin{subfigure}[t]{0.3\textwidth}
\centering
\includegraphics[width=\textwidth]{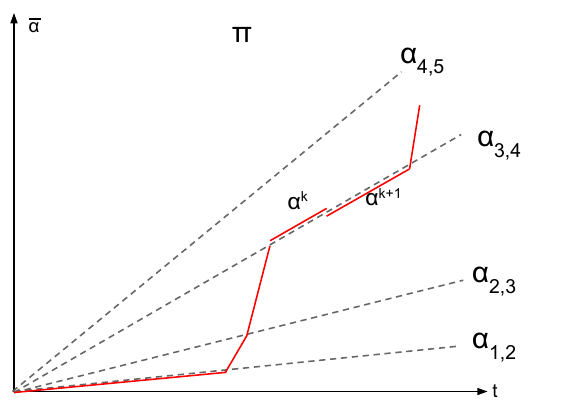}
\caption{Base policy $\pi$. In this example the $k$th and $(k+1)$st segments both lie along the $\alpha_3$ indifference boundary, but the two segments incentivize different actions ($a^k = 4$, $a^{k+1} = 3$).}\label{fig:sub1}
\end{subfigure}
\hfill
\begin{subfigure}[t]{0.3\textwidth}
\centering
\includegraphics[width=\textwidth]{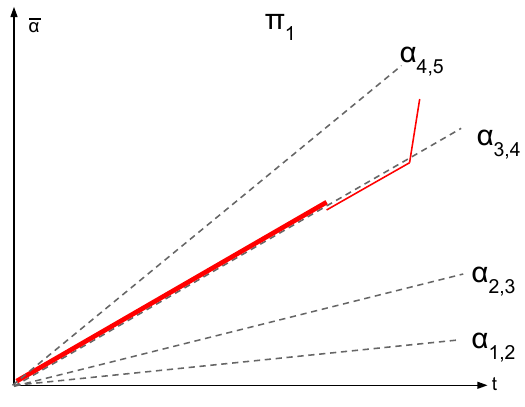}
\caption{Policy $\pi_1$ formed by replacing the first $k$ segments of $\pi$ with an enlarged version of the $k$th segment.}\label{fig:sub2}
\end{subfigure}
\hfill
\begin{subfigure}[t]{0.3\textwidth}
\centering
\includegraphics[width=\textwidth]{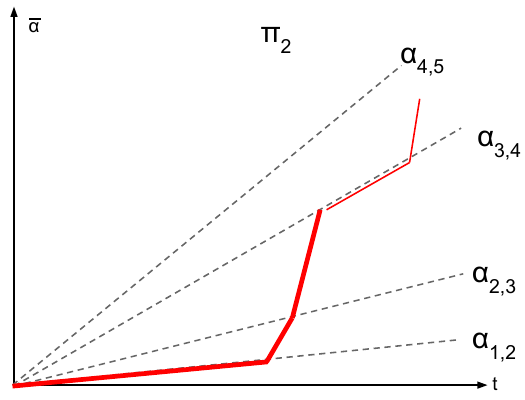}
\caption{Policy $\pi_2$ formed by replacing the first $k$ segments of $\pi$ with an enlarged version of the first $k-1$ segments.}\label{fig:sub3}
\end{subfigure}
\caption{Figures for the proof of Lemma \ref{lem:no-return}.}\label{fig:lemma_no_return}
\end{figure*}

\begin{proof}
The fact that we can rewrite $\pi$ into an equivalent contract where $a^{k} \neq a^{k+1}$ follows from the proof of Lemma \ref{thm:rewriting-lemma-1}. Therefore, assume without loss of generality that $\pi$ already has this form. We will show how to rewrite it into a new dynamic contract $\pi'$ with the additional property that $a^{k} \neq a^{k+2}$.

We will induct on the number of segments in the path (it is obviously true when there is only $K = 1$ segment). Assume that for some $k$, $\pi$ has the property that $a^{k} = a^{k+2} \neq a^{k+1}$. This implies that $\BR_{\bp}(\oalpha^{k}) = \BR_{\bp}(\oalpha^{k + 1}) = \{a^{k}, a^{k+1}\}$. Since there is a unique value of $\alpha$ for which $\BR_{\bp}(\alpha) = \{a^{k}, a^{k+1}\}$ (namely, one of the breakpoints $\alpha_{i,i+1}$), this can only happen if $\oalpha^{k} = \oalpha^{k+1}$, which in turn means that $\alpha^{k+1} = \oalpha^{k} = \oalpha^{k+1}$. Pictorially, this is because if a dynamic contract spends only one segment in a best-response region, this segment must lie along the boundary of the best-response region (see Figure \ref{fig:lemma_no_return}).

Now, consider the following two modifications of $\pi$:

\begin{enumerate}
    \item In $\pi_1$, we replace the first $k+1$ segments of $\pi$ with a scaled up version of the $(k+1)$st segment. That is, remove the first $k+1$ segments of $\pi$ and replace them with  $(\alpha^{k+1}, \cT^{k+1}, a^{k+1})$. To see that this is a valid contract, note that since $\alpha^{k + 1} = \oalpha^{k+1}$, $\alpha^{k+1}$ incentivizes action $a^{k+1}$ so the first segment of this contract is valid. Moreover, after $\cT^{k+1}$ time units have elapsed, both $\pi$ and $\pi_1$ resume the same sequence of segments from the same state $\oalpha^{k+1}$.
    \item In $\pi_2$, we replace the first $k+1$ segments of $\pi$ with a scaled up version of the first $k$ segments. That is, remove the segment $(\alpha^{k+1}, \tau^{k+1}, a^{k+1})$, and scale up $\tau^{k'}$ (for each $1\leq k' \leq k$) to $\tau^{k'} (\cT^{k+1} / \cT^{k})$. Again, this is a valid dynamic contract because scaling up a (prefix of a) dynamic contract results in a valid dynamic contract, and $\pi$ and $\pi_2$ both resume the remainder of segments at the same time and from the same state $\oalpha^{k+1}$.
\end{enumerate}

Finally, note that $\Util(\pi)$ is a convex combination of $\Util(\pi_1)$ and $\Util(\pi_2)$ -- specifically, $\Util(\pi) = (\tau^{k+1}\pi_1 + \cT^{k}\pi_2)/\cT^{k+1}$ -- and so is less than or equal to one of them. But both $\pi_1$ and $\pi_2$ have strictly fewer segments than $\pi$, so by applying the inductive hypothesis, we are finished. 
\end{proof}

As a consequence of Lemmas \ref{lem:consecutive-actions} and \ref{lem:no-return}, we can restrict ourselves to dynamic contracts whose sequences of actions are either consecutively increasing ($a^{k+1} = a^{k} + 1$) or consecutively decreasing ($a^{k+1} = a^{k} - 1$). We show that we can ignore the first case -- such contracts can never be better than static contracts.

\begin{lemma}\label{lem:no-increasing}
Let $\pi = \{(\alpha^k, \tau^k, a^k)\}$ be a $\bp$-scaled dynamic contract where the $a^{k}$ are consecutively increasing. Then there exists a static $\bp$-scaled contract $\pi'$ (i.e., a single segment dynamic contract of the form $(\alpha', 1, a')$) where $\Util(\pi') \geq \Util(\pi)$.
\end{lemma}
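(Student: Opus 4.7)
The plan is to exploit the rigidity of consecutively increasing trajectories to derive a clean closed-form identity for $\Util(\pi)$, and then compare it to a single-segment static contract placed at a carefully chosen breakpoint.

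First I will record structural consequences of the increasing constraint $i_{k+1} = i_k + 1$, where I write $i_k := a^k$. Since both $i_k$ and $i_{k+1}$ must lie in $\BR_{\bp}(\oalpha^k)$, and since $\BR_{\bp}(\alpha)$ contains two consecutive actions only at the breakpoint $\alpha_{i, i+1}$, I get $\oalpha^k = \alpha_{i_k, i_k+1}$ for every $k < K$ and $\oalpha^K \in [\alpha_{i_K-1, i_K}, \alpha_{i_K, i_K+1}]$; in particular $\oalpha^1 < \oalpha^2 < \cdots < \oalpha^K$ because breakpoints are monotone in the relabeled index, and $c_i, P_i(\bp)$ are likewise increasing in $i$ (no-dominance combined with the relabeling).

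The core of the argument is the identity
$$\Util(\pi) \;=\; \bar W \;-\; V_A(\oalpha^K), \qquad \bar W := \sum_{k=1}^{K} \frac{\tau^k}{\cT^K} W_{i_k},$$
where $V_A(\alpha) := \max_a \alpha P_a(\bp) - c_a$ is the agent's indirect utility and $W_i = R_i - c_i$. Since welfare in each segment equals the sum of the two players' utilities, this reduces to showing that the time-averaged agent payoff along $\pi$ equals $V_A(\oalpha^K) = \oalpha^K P_{i_K}(\bp) - c_{i_K}$ (using $i_K \in \BR_{\bp}(\oalpha^K)$). I will prove this by substituting $\tau^k \alpha^k = \cT^k \oalpha^k - \cT^{k-1}\oalpha^{k-1}$ and summing by parts: the indifference relation $\oalpha^k (P_{i_{k+1}}(\bp) - P_{i_k}(\bp)) = c_{i_{k+1}} - c_{i_k}$ at each intermediate breakpoint causes the payment telescoping terms to cancel exactly against the telescoped cost terms, leaving only the boundary contribution $\oalpha^K P_{i_K}(\bp) - c_{i_K}$.

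With the identity in hand I choose $\pi'$ by letting $j^* \in \{i_1, \dots, i_K\}$ be an action attaining $\max_k W_{i_k}$ and setting $\pi' := (\alpha_{j^*-1, j^*}, 1, j^*)$; the principal can realize the tie-breaking $j^* \in \BR_{\bp}(\alpha_{j^*-1, j^*})$ via an infinitesimal perturbation in continuous time. Applying the identity to $\pi'$ (a one-segment, trivially ``increasing'' trajectory) gives $\Util(\pi') = W_{j^*} - V_A(\alpha_{j^*-1, j^*})$. Since $j^* \leq i_K$ and breakpoints are monotone, $\alpha_{j^*-1, j^*} \leq \alpha_{i_K-1, i_K} \leq \oalpha^K$, and because $V_A$ is nondecreasing in $\alpha$ we obtain $V_A(\alpha_{j^*-1, j^*}) \leq V_A(\oalpha^K)$. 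Combined with $W_{j^*} \geq \bar W$, this yields $\Util(\pi') \geq \Util(\pi)$. The main obstacle is the absence of any monotonicity guarantee on $W_i$ in the relabeled action index; the naive candidate $\pi' = (\oalpha^K, 1, i_K)$ need not dominate $\pi$, since the ``upgrade to the higher-reward action'' rule of Lemma \ref{thm:rewriting-lemma-2} fails outside the linear setting. The welfare identity sidesteps this by cleanly separating the dependence of $\Util(\pi)$ on $\bar W$ from its dependence on the endpoint $\oalpha^K$, so that the optimal static action (via $j^*$) and its contract level (via the breakpoint $\alpha_{j^*-1, j^*}$) can be optimized independently, using only the monotonicity of breakpoints and of $V_A$.
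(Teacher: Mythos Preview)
Your proof is correct and takes a genuinely different route from the paper. The paper argues by induction on the number of segments: it first lowers $\alpha^K$ to the breakpoint $\alpha_{a^K-1,a^K}$ (which only helps the principal), observes that the resulting last segment is itself a valid static contract $\pi_K$, and then writes $\Util(\pi)$ as a convex combination of $\Util(\pi_K)$ and the utility of the first $K-1$ segments, applying the inductive hypothesis to whichever term dominates.

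By contrast, you derive a closed-form identity $\Util(\pi)=\bar W - V_A(\oalpha^K)$ that holds for any consecutively increasing trajectory, proved via Abel summation and the indifference relations at the intermediate breakpoints. This cleanly decouples the welfare contribution (optimized by picking $j^*=\arg\max_k W_{i_k}$) from the agent's indirect-utility term (optimized by dropping to the lowest admissible breakpoint $\alpha_{j^*-1,j^*}$), and the comparison $\Util(\pi')\ge\Util(\pi)$ follows from monotonicity of $V_A$ and of the breakpoint sequence. Your argument is more explicit---it names the dominating static contract outright rather than extracting it recursively---and the identity itself is an attractive structural fact about increasing $\bp$-scaled trajectories. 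The paper's approach, on the other hand, avoids any computation and fits the same convex-combination template used in the surrounding lemmas (Lemmas~\ref{lem:no-return} and~\ref{lem:free-fall}). A minor remark: the claim that $c_i$ and $P_i(\bp)$ are increasing in the relabeled index is true but not actually needed anywhere in your argument; only the monotonicity of the breakpoints and of $V_A$ are used.
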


\begin{figure*}
\centering
\begin{subfigure}[t]{0.3\textwidth}
\centering
\includegraphics[width=\textwidth]{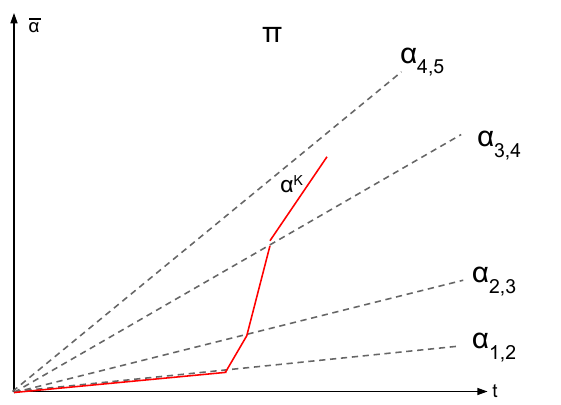}
\caption{Base policy $\pi$ with consecutively increasing actions.}\label{fig:lem5a}
\end{subfigure}
\hfill
\begin{subfigure}[t]{0.3\textwidth}
\centering
\includegraphics[width=\textwidth]{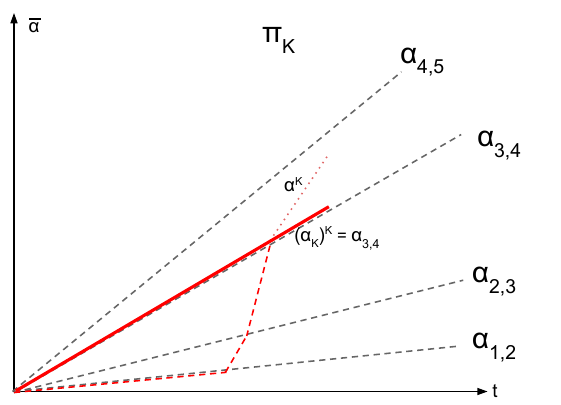}
\caption{Policy $\pi_K$ formed by setting $\alpha^{K}$ to the minimum $\alpha$ required to remain in the same best-response region (in this case, $\alpha_3$).}\label{fig:lem5b}
\end{subfigure}
\hfill
\begin{subfigure}[t]{0.3\textwidth}
\centering
\includegraphics[width=\textwidth]{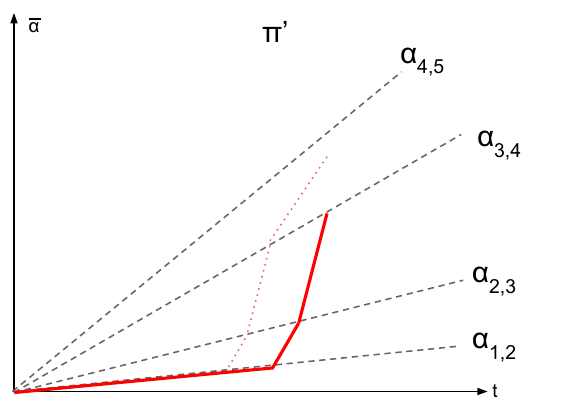}
\caption{Policy $\pi'$ formed by scaling up the first $K-1$ segments of $\pi$ (i.e., all but the last segment).}\label{fig:lem5c}
\end{subfigure}
\caption{Figures for the proof of Lemma \ref{lem:no-increasing}.}\label{fig:lemma_no_increasing}
\end{figure*}

\begin{proof}
As in the proof of Lemma \ref{lem:no-return}, we will again induct on the number of segments of $\pi$. If $\pi$ has one segment, we are done. 

Now consider a $\pi$ with $K$ segments, whose last segment is $(\alpha^{K}, \tau^{K}, a^{K})$. Recall that for any $i$, $\alpha_{i - 1,i}$ is the smallest value of $\alpha$ for which $\alpha\bp$ incentivizes action $i$. Note that if $\alpha^{K} > \alpha_{a^{K} - 1, a^K}$, we can improve the utility of the principal by decreasing $\alpha^{K}$ to $\alpha_{a^{K} - 1, a^K}$ (this pays strictly less to the agent but still incentivizes the same action $a^{K}$). We'll therefore assume the last segment is of the form $(\alpha_{a^{K} - 1, a^K}, \tau^{K}, a^{K})$; note that this segment by itself is a valid static $\bp$-scaled contract, as $\alpha_{a^{K} - 1, a^K}$ incentivizes action $a^{K}$. Call this contract $\pi_K$.

Let $\pi'$ be the dynamic contract formed by the first $K-1$ segments of $\pi$ (see Figure \ref{fig:lemma_no_increasing} for examples of $\pi_K$ and $\pi'$). But now, $\Util(\pi)$ is a convex combination of $\Util(\pi')$ and $\Util(\pi^{K})$, so it is at most the maximum of these two quantities. If this maximum is $\Util(\pi^{K})$, we are done ($\pi^{K}$ is a static contract); if it is $\Util(\pi')$, we are also done by the inductive hypothesis ($\pi'$ has $K - 1$ segments). 
\end{proof}

Finally, we show that in the case where the sequence of actions are consecutively decreasing, such a contract is no better than some free-fall contract.

\begin{lemma}\label{lem:free-fall}
Let $\pi = \{(\alpha^k, \tau^k, a^k)\}$ be a $\bp$-scaled dynamic contract where the $a^{k}$ are consecutively decreasing. Then there exists a free-fall $\bp$-scaled contract $\pi'$ where $\Util(\pi') \geq \Util(\pi)$.
\end{lemma}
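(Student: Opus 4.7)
The plan is to mirror the structure of Lemma~\ref{lem:decreasing-to-freefall}, but because the key rewriting rule of Lemma~\ref{thm:rewriting-lemma-2} is unavailable for $\bp$-scaled contracts, I will replace it with a direct linear-programming characterization of the optimal trajectory over all trajectories with a fixed decreasing action sequence.

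First, I would observe that by Lemmas~\ref{lem:consecutive-actions} and~\ref{lem:no-return}, I may assume $a^{k+1}=a^k-1$ for every $k$, and that requiring both $a^k$ and $a^{k+1}$ to be best responses to $\oalpha^k$ at each interior boundary pins down $\oalpha^k=\alpha_{a^k-1,a^k}$ for every $k<K$ (and in particular forces $\alpha^1=\oalpha^1=\alpha_{a^1-1,a^1}$). The only remaining degrees of freedom in $\pi$ are then the normalized cumulative times $f_k:=\cT^k/\cT^K$ (for $k=1,\ldots,K-1$) and the final average $\oalpha^K\in[\alpha_{a^K-1,a^K},\oalpha^{K-1}]$. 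The non-negativity constraint $\alpha^k\tau^k=\cT^k\oalpha^k-\cT^{k-1}\oalpha^{k-1}\geq 0$ becomes $f_1\geq 0$, $f_k\geq(\oalpha^{k-1}/\oalpha^k)\,f_{k-1}$ for $k=2,\ldots,K-1$, and $f_{K-1}\leq\oalpha^K/\oalpha^{K-1}$.

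Next, I would derive a clean utility formula. A telescoping computation, combined with the identity $\oalpha^k(P_{a^k}-P_{a^{k+1}})=c_{a^k}-c_{a^{k+1}}$ coming from the breakpoint definition (writing $P_a:=\E_{o\sim F_a}[p_o]$), yields
\[
\Util(\pi)=(R_{a^K}-\oalpha^K P_{a^K})+\sum_{k=1}^{K-1}f_k\,(W_{a^k}-W_{a^{k+1}}),\qquad W_a:=R_a-c_a.
\]
For fixed $\oalpha^K$, this is linear in $(f_1,\ldots,f_{K-1})$ over a bounded polytope cut out by $K$ inequalities in $K-1$ variables, so its maximum is attained at a vertex. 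A direct case analysis based on which single inequality is left slack enumerates exactly $K$ vertices $v^{(1)},\ldots,v^{(K)}$, where $v^{(j)}$ has $f_k=0$ for $k<j$ and $f_k=\oalpha^K/\oalpha^k$ for $k\geq j$.

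The geometric content of the argument is that every such vertex is realized by a valid free-fall trajectory. Indeed, substituting $v^{(j)}$ into $\alpha^k\tau^k=\cT^k\oalpha^k-\cT^{k-1}\oalpha^{k-1}$ shows that segments $1,\ldots,j-1$ degenerate to zero duration, segment $j$ has $\alpha^j=\alpha_{a^j-1,a^j}$, and every later segment has $\alpha^k=0$; this is exactly the free-fall starting at action $a^j$ (with the case $j=K$ giving the static contract at $a^K$, which is a trivial free-fall). Since the LP optimum at $\pi$'s $\oalpha^K$ is achieved at some $v^{(j)}$, the corresponding free-fall $\pi'$ satisfies $\Util(\pi')\geq\Util(\pi)$, completing the proof. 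The main obstacle I anticipate is the vertex enumeration and its geometric interpretation --- specifically, checking that the list of $K$ candidate vertices is exhaustive (by counting which subsets of $K-1$ out of $K$ constraints can simultaneously be tight) and verifying via the derived equations for $\alpha^k$ and $\tau^k$ that each such vertex corresponds to a bona fide $\bp$-scaled free-fall.
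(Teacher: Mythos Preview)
Your argument is correct but follows a genuinely different route from the paper's proof. The paper proceeds by iterative local rewriting: it locates the first non-free-fall segment (the first $k\ge 2$ with $\alpha^k>0$), splits it into a free-fall piece followed by a boundary piece at $\alpha_{a^k-1,a^k}$, and then observes that $\Util(\pi)$ is a convex combination of the utilities of two simpler contracts $\pi_1$ (replacing the prefix by the scaled-up boundary segment) and $\pi_2$ (replacing the prefix by the scaled-up free-fall prefix). Since $\pi_1$ starts at a strictly lower action and $\pi_2$ is free-fall for one additional segment, an induction on a suitable progress measure finishes the argument. Your approach instead globally parametrizes \emph{all} valid trajectories with the prescribed decreasing action sequence by the normalized cumulative times $f_k$ (and the single scalar $\oalpha^K$), derives the closed-form identity $\Util(\pi)=(R_{a^K}-\oalpha^K P_{a^K})+\sum_{k<K} f_k(W_{a^k}-W_{a^{k+1}})$ via the breakpoint relation, and then recognizes the feasible region as a $(K{-}1)$-dimensional polytope with exactly $K$ vertices, each of which you identify with an explicit free-fall starting at some $a^j$.

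What each buys: the paper's convex-combination step is essentially the same geometric fact you are using --- the segment-split is precisely movement along an edge of your polytope --- but it avoids having to write down the full LP and enumerate vertices, and it mirrors the linear-contract proof. Your approach is more informative: it shows in one shot that the maximum over \emph{all} decreasing trajectories with a fixed terminal state is attained at one of $K$ explicit free-falls, and the telescoping utility formula you derive is independently illuminating. One small point worth stating explicitly in your write-up is that the monotonicity $\oalpha^{k-1}>\oalpha^k$ makes the constraints $\tau^k\ge 0$ redundant given $\alpha^k\tau^k\ge 0$, which is why only $K$ inequalities survive; this is implicit but not quite said.
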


\begin{figure*}
\centering
\begin{subfigure}[t]{0.3\textwidth}
\centering
\includegraphics[width=\textwidth]{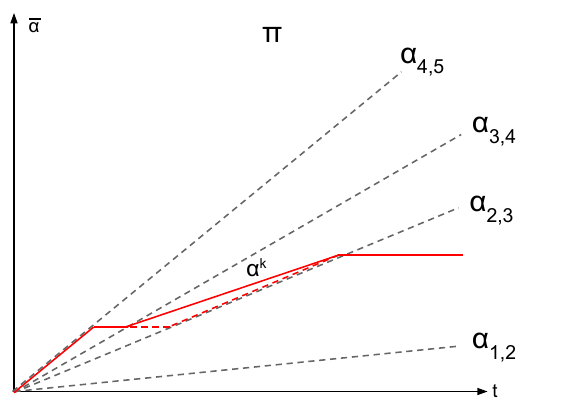}
\caption{Base policy $\pi$ with consecutively decreasing actions. The $k$th segment is the first non-free-fall segment: we form $\pi'$ (the dashed trajectory) by decomposing this segment into a free-fall segment followed by a boundary segment.}\label{fig:lem6a}
\end{subfigure}
\hfill
\begin{subfigure}[t]{0.3\textwidth}
\centering
\includegraphics[width=\textwidth]{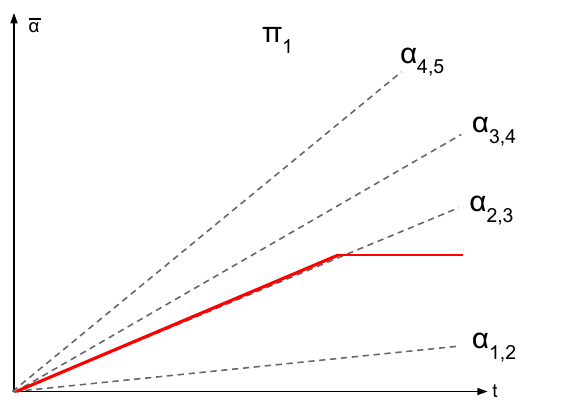}
\caption{Policy $\pi_1$ formed by replacing the first $k$ segments of $\pi$ by a scaled up version of the boundary dashed segment.}\label{fig:lem6b}
\end{subfigure}
\hfill
\begin{subfigure}[t]{0.3\textwidth}
\centering
\includegraphics[width=\textwidth]{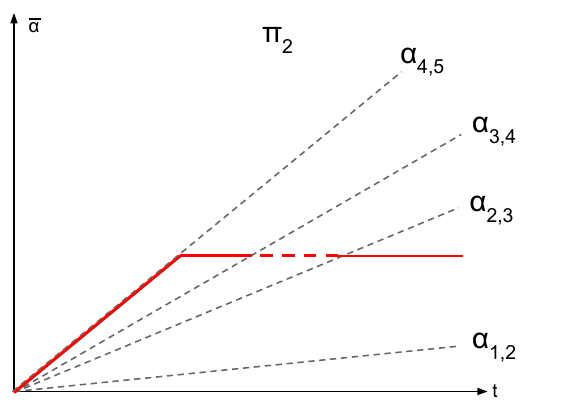}
\caption{Policy $\pi_2$ formed by replacing the first $k$ segments of $\pi$ by a scaled up version of the prefix of $\pi$ up to (but not including) the boundary dashed segment.}\label{fig:lem6c}
\end{subfigure}
\caption{Figures for the proof of Lemma \ref{lem:free-fall}.}\label{fig:lemma_free_fall}
\end{figure*}

\begin{proof}
Assume that $\pi$ is not a free-fall contract. We will show we can rewrite $\pi$ in a way so that either the first agent action $a^1$ strictly decreases or the first non-free-fall occurs strictly later. Since the number of segments in $\pi$ is bounded (by $n$, since the actions are consecutively decreasing), this implies the theorem statement.

Let $(\alpha^k, \tau^k, a^k)$ be the first segment in $\pi$ with $k\geq 2$ where $\alpha^{k} > 0$ (so, the dynamic contract is not free-falling here). Note that since this segment ends on the boundary between the best-response regions for $a^{k}$ and $a^{k+1} = a^{k} - 1$, $\oalpha^{k} = \alpha_{a^{k} - 1, a^k}$. 

The main observation of this proof is that we can rewrite this segment as a combination of a free-fall segment (with $\alpha = 0$) and a segment along the boundary of these two best-response regions (with $\alpha = \oalpha^k$). Specifically, form a new dynamic contract $\pi'$ by replacing $(\alpha^k, \tau^k, a^k)$ in $\pi$ with the two consecutive segments $(0, \lambda \tau^k, a^k)$ and $(\alpha_{a^k - 1, a^k}, (1-\lambda)\tau^k, a^k)$, where $\lambda$ is chosen so that $(1-\lambda)\alpha_{a^k - 1, a^k} = \alpha^{k}$. Note that by doing this $\pi'$ now has $K+1$ segments, where segments $k$ and $k+1$ are this new free-fall and boundary segment respectively. Note that we will let quantities like $\tau^k$, $\cT^k$, and $\oalpha^k$ still refer to the relevant quantities for $\pi$, not $\pi'$. 

This allows us to proceed via a similar technique as in Lemma \ref{lem:no-return}. Consider the following two modifications of $\pi'$ (see Figure \ref{fig:lemma_free_fall} for examples):

\begin{enumerate}
    \item In $\pi_1$, we replace the first $k + 1$ segments of $\pi'$ with a scaled-up version of the boundary segment of the form $(\oalpha^{k}, \cT^{k}, a^{k})$.
    \item In $\pi_2$, we replace the first $k + 1$ segments of $\pi'$ with a scaled up version of the first $k$ segments (the first $k-1$ segments of $\pi$ and the free-fall segment, but not the boundary segment). Specifically, let $C = \cT^{k} / (\cT^{k-1} + \lambda \tau^k)$. Then the first $k - 1$ segments of $\pi_2$ are of the form $(\alpha^{k'}, C\tau^{k'}, a^{k'})$, and the $k$th segment of $\pi_2$ is of the form $(0, C\lambda \tau^{k}, a^{k})$.
\end{enumerate}

As in the proof of Lemma \ref{lem:no-return}, we can check that both $\pi_1$ and $\pi_2$ are valid dynamic contracts: in particular, after $\cT^{k}$ units of time, they are both in the state $\oalpha^{k}$, so the remaining suffix of $\pi$ is a valid extension for both contracts.

Again, $\Util(\pi)$ can be written as a convex combination of $\Util(\pi_1)$ and $\Util(\pi_2)$, specifically,

$$\Util(\pi) = \frac{(1-\lambda)\tau^k \Util(\pi_1) + (\cT^{k-1} + \lambda \tau^k) \Util(\pi_2)}{\cT^{k}}.$$

But $\pi_1$ starts at a later action than $\pi$ (since $a^k = a^1 - (k-1)$), and $\pi_2$ is a free-fall contract for one further step than $\pi$ (since $\alpha^2 = \alpha^3 = \dots = \alpha^{k-1} = 0$, and the $k$th segment in $\pi_2$ also has $\alpha = 0$). This completes the proof.
\end{proof}

We can now conclude the proof of Theorem \ref{thm:p-linear}.

\begin{proof}[Proof of Theorem~\ref{thm:p-linear}]
Because of Lemmas \ref{lem:consecutive-actions} and \ref{lem:no-return}, we can assume without loss of generality that the actions $a^{k}$ in $\pi$ are either consecutively increasing or decreasing. The conclusion now immediately follows from Lemmas \ref{lem:no-increasing} and \ref{lem:free-fall}.
\end{proof}

\subsection{General Contracts and Free-Fall}
\label{sec:general-contracts}

Unlike in the linear contract setting and the single-dimensional scaling setting, free-fall contracts are \emph{not} optimal in the general contract setting. In this section, we give a general contract instance with $n = 4$ actions (3 non-null actions) and $m = 4$ outcomes (3 non-null outcomes), where the best dynamic contract provably outperforms the best free-fall dynamic contract. The instance in question is defined as follows:

\vspace{5pt}
\begin{itemize}
    \item The cost vector $c = (c_1, c_2, c_3, c_4) = (0, 0.2, 0.4, 0.5)$.
    \item The reward vector $r = (r_1, r_2, r_3) = (0, 1.0, 1.6, 2.0)$.
    \item The forecast matrix $F$ is given by

    $$F = \begin{pmatrix}
      1.00 & 0.00 & 0.00 & 0.00 \\
      0.45 & 0.20 & 0.25 & 0.10 \\
      0.35 & 0.05 & 0.25 & 0.35 \\
      0.15 & 0.30 & 0.30 & 0.25
    \end{pmatrix}$$
\end{itemize}

\noindent
This instance was found by a programmatic search\footnote{The code verifying this example can be found at: \url{https://colab.sandbox.google.com/gist/jschnei/4d067ac2892d6b7c215dcea909577c53/optimal-dynamic-contracts-minimal-example.ipynb}} over a large collection of instances. For this instance, we can (again, pragmatically) compute that the best free-fall dynamic contract achieves a net asymptotic utility for the principal of at most $0.753$ per round. At the same time, we can exhibit a non-free-fall dynamic contract for this instance that achieves a utility of at least $0.764$ per round.

At a high level, the verification of this example relies on the following fact: given a sequence of actions $(a^1, a^2, \dots, a^K)$, we can construct a polynomial-sized linear program to find the optimal continuous-time dynamic (general or free-fall) contract $\{(\contract^{k}, \tau^{k}, a^{k})\}_{k=1}^{K}$ with this specific action sequence.

The variables of this LP are the $\tau^k$ and $\contract^k$ corresponding to each action $a^k$. The constraints follow from the definition of a valid trajectory of play in Section \ref{sec:continuous} and are as follows:

\begin{itemize}
    \item \textbf{(Non-negativity)} $\contract^{k}, \tau^k \geq 0$.
    \item \textbf{(Time normalization)} $\sum_{k=1}^{K} \tau^{k} = 1$. We normalize the total duration of the trajectory to 1. 
    \item \textbf{(Beginning of segment is best response)} $\sum_{k'=1}^{k-1} \tau^{k'}u_L\left(\contract^{k'}, a^k\right) \geq \sum_{k'=1}^{k-1} \tau^{k'}u_L\left(\contract^{k'}, a'\right)$ for any $a' \in [n]$. This represents the constraint $a^{k} \in \BR(\rho^{k-1})$.
    \item \textbf{(End of segment is best response)} $\sum_{k'=1}^{k} \tau^{k'}u_L\left(\contract^{k'}, a^k\right) \geq \sum_{k'=1}^{k} \tau^{k'}u_L\left(\contract^{k'}, a'\right)$ for any $a' \in [n]$. This represents the constraint $a^{k} \in \BR(\rho^{k})$.
\end{itemize}

The objective of the LP is the optimizer utility $\sum_{k=1}^{K}\tau^k u_{O}(\contract^k, a^k)$. If we want to further impose that the contract is a free-fall contract, we can add the constraint that $\contract^k = 0$ for $k > 1$.

For free-fall contracts, we have an additional constraint on what sequences of actions are possible. Note that a free-fall contract will never repeat an action -- in particular, after the initial segment, the cumulative utility of each action $i \in [n]$ decreases by $c_i$ per round, so the sequence of actions $(a^1, a^2, \dots, a^K)$ a free-fall contract passes through must be sorted in \emph{decreasing order of cost}. This means there are at most $2^n$ sequences of actions to check, and by checking all of them we can provably compute the optimal free-fall contract for a given general contract instance.

On the other hand, it's not clear if there are any constraints on how complex the sequence of actions for the optimal general dynamic contract can be -- it is an interesting open question whether there exists any efficient (or even computable) algorithm for computing $\Uopt$ for a general contract instance. Luckily, in order to show this separation, we need only exhibit a single general contract which outperforms the best free-fall contract. In the example above, we compute the best general contract for the same action sequence that the optimal free-fall contract passes through, and observe that the general contract obtains strictly larger utility.



\section{Unknown Time Horizon}
\label{sec:unknown-horizon}
\newcommand{\principalagentproblem}[0]{(c, F, r)}
\newcommand{\staticvalue}[0]{R_{\star}}
\newcommand{\timefloor}[0]{\underline{T}}
\newcommand{\timeceil}[0]{\overline{T}}
\newcommand{\gammafloor}[0]{\underline{\gamma}}
\newcommand{\distribution}[0]{\mathcal{D}}
\newcommand{\potential}[0]{\psi}
\newcommand{\expectedreward}[1][i]{R_{#1}}
\newcommand{\optimalstaticcontract}[0]{\alpha_{\star}}
\newcommand{\optimalstaticaction}[0]{a_{\star}}

Up until now, the principal has been able to take advantage of precisely knowing the time horizon. In this section, we explore what happens when the principal only approximately knows this parameter. We will consider the case where the principal knows that the time horizon $T$ falls into some range $[\timefloor, \timeceil]$, and wants to guard against a worst-case choice of time horizon from that range. What are the trade-offs between the time uncertainty and how much additional principal utility we can get over the best static contract? To explore these concepts precisely, we make the following definition:

\begin{definition}
  Suppose we have a principal-agent problem $\principalagentproblem$. Let $\staticvalue$ be the single-round profit of the optimal static linear contract for this problem. We say that a pair $(\epsilon, \gamma)$ is feasible with respect to $\principalagentproblem$ if for all sufficiently large time horizons $\timefloor$, there exists a (potentially randomized) principal algorithm $A$ such that the (expected) profit of $A$ at any time $t \in [\timefloor, \timeceil = \left \lceil \timefloor \gamma \right \rceil]$ is at least $(1 + \epsilon) t \staticvalue$ (and infeasible with respect to $\principalagentproblem$ otherwise).
\end{definition}

Our main result in this section is~\Cref{thm:unknown-time-horizon}: for every principal-agent problem and any error-tolerance $\eps>0$, it is impossible to indefinitely maintain an $\eps$ advantage over the optimal static contract. To be more precise, when $\gamma$ is $\exp(\Omega(\nicefrac{1}{\eps}))$ we know that the instance has become $(\eps, \gamma)$ infeasible (for some instances, the infeasibility transition may occur much earlier). To do this we use a novel potential function argument, which shows that in order to stay a constant factor ahead of the optimal static contract, the principal must be constantly sacrificing potential (which means lowering the average historical linear contract is has offered). To complement this result, in~\Cref{thm:unknown-horizon-converse}, we show that all time ratios $\gamma \ge 1$ permit some tiny advantage $\eps$ over the optimal static contract. We manage to achieve $\eps$ at least $\Omega(1 / (\poly \gamma))$ for all problems where the optimal dynamic contract (with known time horizon) outperforms the optimal static contract. These ideas are captured in the theorems below:

\begin{theorem}
\label{thm:unknown-time-horizon}
  Suppose we have a principal-agent problem $\principalagentproblem$. For every $\epsilon > 0$, there exists a $\gammafloor$ such that $(\epsilon, \gamma)$ is infeasible with respect to $\principalagentproblem$ for all $\gamma \ge \gammafloor$.
\end{theorem}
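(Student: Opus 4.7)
My plan is to work in the continuous-time setting via the unknown-horizon extension of the discrete-to-continuous reduction (\Cref{thm:discrete_to_continuous_unknown_time}), then run a potential-function argument. In continuous time the principal chooses an instantaneous linear contract $\alpha^t \in [0,1]$ at each $t$, and the mean-based agent best-responds to the running average $\oalpha^t = \frac{1}{t}\int_0^t \alpha^s\,ds$, giving instantaneous expected reward $R(\oalpha^t) := R_{\BR(\oalpha^t)}$ (a piecewise-constant, non-decreasing step function in $\oalpha$). Writing $\Phi(\alpha) := \int_0^\alpha R(u)\,du$ and $M := \Phi(1)$ (a finite constant depending only on the problem), I would introduce the potential
\[
  \Psi(t) \;:=\; \frac{1}{t}\int_0^t \Phi(\oalpha^s)\,ds \;-\; \Phi(\oalpha^t),
\]
which always lies in $[-M, M]$.

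\textbf{Key identity.} Using the pointwise relation $\alpha^s = \oalpha^s + s\,\dot{\oalpha}^s$ together with one integration by parts on $\int_0^t s\,\dot{\Phi}(\oalpha^s)\,ds$, I would verify that the cumulative expected profit $P(t) := \int_0^t (1-\alpha^s)\,R(\oalpha^s)\,ds$ satisfies
\[
  P(t) \;=\; \int_0^t (1-\oalpha^s)\,R(\oalpha^s)\,ds \;+\; t\,\Psi(t) \;\le\; t\staticvalue + t\,\Psi(t),
\]
where the inequality uses $(1-\oalpha^s)R(\oalpha^s) \le \staticvalue$ by optimality of the best static linear contract. Hence the feasibility requirement $P(t) \ge (1+\eps) t \staticvalue$ throughout $[\timefloor,\gamma\timefloor]$ forces the \emph{pointwise} bound $\Psi(t) \ge \eps\staticvalue$ on the same window. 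For randomized principal strategies, the key identity is an equality, so the same conclusion follows after replacing every quantity with its expectation over the principal's randomness.

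\textbf{Potential ODE and final bound.} Differentiating $\Psi$ directly yields the clean first-order ODE
\[
  \dot{\Psi}(t) + \Psi(t)/t \;=\; -\dot{\Phi}(\oalpha^t).
\]
Substituting the lower bound $\Psi(t) \ge \eps\staticvalue$ on the left and integrating from $\timefloor$ to $\gamma\timefloor$ would give
\[
  \bigl[\Psi(\gamma\timefloor) + \Phi(\oalpha^{\gamma\timefloor})\bigr] - \bigl[\Psi(\timefloor) + \Phi(\oalpha^{\timefloor})\bigr] \;\le\; -\eps\staticvalue\,\ln\gamma.
\]
The left-hand side is at least $\eps\staticvalue - 2M$ (using $\Psi(\gamma\timefloor) \ge \eps\staticvalue$, $\Phi \ge 0$, and both $\Psi(\timefloor),\Phi(\oalpha^{\timefloor}) \le M$), so $\ln\gamma \le 2M/(\eps\staticvalue) - 1$. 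Taking $\gammafloor := \exp(2M/(\eps\staticvalue))$ therefore makes $(\eps,\gamma)$ infeasible for every $\gamma \ge \gammafloor$, matching the claimed $\exp(\Omega(1/\eps))$ dependence.

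\textbf{Main obstacle.} The crux is designing the potential $\Psi$ so that two properties hold simultaneously: (a) the $(1+\eps)$-advantage condition translates, via one integration by parts and the static-optimality inequality, into a clean pointwise lower bound on $\Psi$; and (b) $\Psi$ obeys a first-order ODE whose driver $-\dot{\Phi}(\oalpha^t)$ telescopes over $[\timefloor,\gamma\timefloor]$ into a bounded boundary term. Secondary technicalities include lifting to randomized principals (handled by taking expectations throughout, since the key identity is linear) and accommodating the discontinuities of the step function $R$; the latter is harmless because $\Phi$ is absolutely continuous, so the integration by parts remains valid by standard measure-theoretic arguments.
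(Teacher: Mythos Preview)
Your proposal is correct and follows essentially the same potential-function argument as the paper: both define the same raw potential $\Phi(\alpha)=\int_0^\alpha R(u)\,du$ (the paper calls it $\psi$), both arrive at the same pointwise consequence $\Psi(t)=\frac{1}{t}\int_0^t \Phi(\oalpha^s)\,ds-\Phi(\oalpha^t)\ge \eps\staticvalue$ on the window, and both conclude with the same $\exp(O(1/\eps))$ threshold for $\gamma$.

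The packaging of the final step is slightly different. The paper keeps the raw quantity $g(t)=\Phi(\oalpha^t)$, rewrites the constraint as $g(t)\le -\eps\staticvalue+\frac{1}{t}\int_0^t g(s)\,ds$, and runs a greedy-stays-ahead comparison against the extremal $f$ solving $f'(t)=-\eps\staticvalue/t$. You instead name the composite $\Psi$ explicitly, observe the exact ODE $\dot\Psi+\Psi/t=-\dot\Phi(\oalpha^t)$, and integrate directly; since $\Psi+\Phi(\oalpha^t)$ is just the running average $\frac{1}{t}\int_0^t\Phi(\oalpha^s)\,ds\in[0,M]$, the telescoped left side is bounded below by $-M$, recovering the paper's bound $\ln\gamma\le M/(\eps\staticvalue)$ (your stated $2M$ is a harmless overestimate). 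Your integration-by-parts derivation of the profit identity is also a bit cleaner than the paper's segment-by-segment bookkeeping, but the content is the same.
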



\begin{theorem} \label{thm:unknown-horizon-converse}
  Suppose we have a principal-agent problem $\principalagentproblem$. If there exists a $\gammafloor$ such that for any $\epsilon > 0$ and $\gamma \ge \gammafloor$, $(\epsilon, \gamma)$ is infeasible, then there are no dynamic strategies that outperform the optimal static linear contract. 
\end{theorem}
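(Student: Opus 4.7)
I plan to prove the contrapositive: assume there exists a dynamic strategy $\pi$ that outperforms the optimal static linear contract by a factor $1+\delta$ with $\delta > 0$; I will then show that for any candidate $\gammafloor$, there exist $\gamma \geq \gammafloor$ and $\epsilon > 0$ such that $(\epsilon, \gamma)$ is feasible. By \Cref{thm:discrete_to_continuous}, it suffices to construct a continuous-time oblivious strategy achieving the required ratio; the discrete-time algorithm then follows from extrapolating the continuous trajectory and losing only an $o(\underline{T})$ additive term in the principal's utility.

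First, by \Cref{thm:free-fall-linear} I may take $\pi$ to be a free-fall linear contract with the structure $(\alpha_{i-1,i}, \tau_0, i)$ followed by $(0, 1-\tau_0, \cdot)$, attaining continuous-time utility rate $(1+\delta) R_\star$. The candidate composite strategy $\sigma$ I propose is: \emph{play $\pi$ rescaled to occupy the first $\underline{T}$ rounds, then play the optimal static contract $\alpha_\star$ (with an arbitrarily small positive bump to ensure tiebreaking in the principal's favor) for every $t > \underline{T}$.} The accumulated utility at $t = \underline{T}$ is then $(1+\delta) R_\star \underline{T} - o(\underline{T})$; the central algebraic claim is that, up to a controllable transition loss, the utility at every $t \in [\underline{T}, \gamma \underline{T}]$ satisfies $U(t) \geq R_\star t + \delta R_\star \underline{T} - o(\underline{T})$, which yields ratio $U(t)/(R_\star t) \geq 1 + \delta/\gamma - o(1)$, so setting $\epsilon := \delta/(2\gamma)$ with $\gamma := \gammafloor$ establishes feasibility once $\underline{T}$ is large enough.

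The main obstacle is bounding the transition loss cleanly. The free-fall ends at some indifference average $\rho^K = \alpha_{j-1,j}$, and the static optimum lies at $\alpha_\star = \alpha_{k-1,k}$. When $\rho^K \geq \alpha_\star$ the transition is free: playing $\alpha_\star$ keeps $\rho(t)$ above $\alpha_\star$ for all $t \geq \underline{T}$ (as a weighted average of two quantities both $\geq \alpha_\star$), so the agent plays some action of expected reward at least $R_{a_\star}$, and the principal collects at rate $\geq R_\star$. The genuine difficulty is the case $\rho^K < \alpha_\star$; here I would handle the transition by one of two complementary devices. (i) \emph{Re-parametrize $\pi$}: if the unique free-fall ending at exactly $\alpha_\star$ still has strictly positive utility advantage over static, replace $\pi$ by this re-parametrized free-fall (which by the structural proof of \Cref{thm:free-fall-linear} costs only $o(1)$ when one is free to shift the end-indifference point). (ii) \emph{Introduce a small push phase}: play $\alpha_\star + \eta$ rather than $\alpha_\star$ in the tail; solving for when $\rho(t)$ crosses $\alpha_\star$ shows that the push completes by $t = \underline{T}(1 + (\alpha_\star - \rho^K)/\eta)$, with post-transition rate $R_\star - \eta R_{a_\star}$, and one can then choose $\eta$ and $\gamma$ so that the net additive loss against static is smaller than the $\delta R_\star \underline{T}$ surplus produced by $\pi$.

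Putting these pieces together, for every $\gammafloor$ I produce $(\epsilon, \gamma)$ with $\gamma \geq \gammafloor$ and $\epsilon > 0$ that is feasible in the continuous-time model; invoking the first bullet of \Cref{thm:discrete_to_continuous} converts this into an oblivious discrete-time algorithm whose expected profit is at least $(1 + \epsilon/2) R_\star t$ at every $t \in [\underline{T}, \gamma \underline{T}]$ for all sufficiently large $\underline{T}$, negating the theorem's hypothesis and completing the proof. The subtlety I expect to be trickiest is verifying that the $\rho^K < \alpha_\star$ case can always be resolved with a transition loss dominated by the surplus $\delta R_\star \underline{T}$; the cleanest way to do this is to carry out the two-parameter $(\eta, \gamma)$ optimization in continuous time and verify that the feasible region is non-empty precisely when $\delta > 0$.
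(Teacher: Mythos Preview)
Your overall plan (prove the contrapositive, reduce to a free-fall via \Cref{thm:free-fall-linear}, handle the ``easy'' case $\rho^K \geq \alpha_\star$ by appending the static contract) matches the paper's Case~1 and is fine. The gap is in the hard case $\rho^K < \alpha_\star$: neither of your devices (i) nor (ii) can be made to work in general, and the paper resolves this case by an entirely different mechanism.

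To see the failure concretely, take three actions with $R_1 = 0$, $R_2 = 50$, $R_3 = 101$, and breakpoints $\alpha_{1,2}=0.98$, $\alpha_{2,3}=0.99$. Then $\alpha_\star = 0.99$, $R_\star = 1.01$, and the unique profitable free-fall starts at $\alpha_\star$ and ends at $\rho^K = \alpha_{1,2} = 0.98$, with rate $\approx 1.5$ (so $\delta \approx 0.49$). Your device~(i) is vacuous here: truncating the free-fall to end at $\alpha_\star$ leaves only the static segment, with zero advantage. For device~(ii), any push contract satisfies $\alpha_\star + \eta \leq 1$, hence $\eta \leq 0.01$; during the push the agent plays action~$2$ and the principal's rate is $(1-\alpha_\star-\eta)R_2 \leq 0.5$, strictly below $R_\star$. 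A short calculation shows $U(t) - R_\star t$ becomes negative before $t = 2\underline{T}$ for every admissible $\eta$, so no choice of $(\eta,\gamma)$ with $\gamma \geq 2$ is feasible. The claimed ``two-parameter optimization'' has an empty feasible region even though $\delta > 0$.

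The paper's fix is not a deterministic push but a \emph{randomized} construction: in Case~2 it first shows one may assume the free-fall \emph{starts} at $\alpha_\star$ (so it earns at least $R_\star t$ at every intermediate $t$), then covers $[\underline{T},\gamma\underline{T}]$ by a geometric sequence of time scales $S_i = \mu^i \overline{T}$ and takes a carefully weighted convex combination of the free-falls $\pi(S_i)$. Each $\pi(S_i)$ beats static on $[S_{i+1},S_i]$ and matches static elsewhere, and the inductive mixture preserves a shrinking but positive $\epsilon$ across the whole range. This randomization is the missing idea in your proposal.
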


\subsection{Extending Discrete-to-Continuous Reduction to Unknown Time Horizon}

As a first step, we present a generalized version of our result that lets us convert between discrete and continuous settings, \Cref{thm:discrete_to_continuous}. One of the biggest differences is the introduction of this parameter $\gamma \ge 1$ which equals the multiplicative ratio $(\timeceil / \timefloor)$. Instead of a trajectory $\pi = \{(p^k, \tau^k, a^k\}_{k=1}^K$ being solely evaluated at its end time $\mathcal{T}^K$, we now care about its performance over its final interval of multiplicative width $\gamma$, namely $[\frac1\gamma \mathcal{T}^K, \mathcal{T}^K]$.

This introduction of an interval causes another consideration: the principal might gain from appropriately randomizing over trajectories $\pi$, e.g. combining a trajectory that does well in the first half with another trajectory that does well in the second half. This was not previously an issue when evaluating trajectories at a single point because some trajectory in the support must attain at least the expectation. Hence we generalize to analyzing distributions $\distribution$.

Finally, in order to quantify the performance of a trajectory at a certain time $t$, we will introduce some corresponding parenthetical superscript notation. In particular, $u_P^{(t)}(\pi)$ will denote the cumulative expected principal utility of trajectory $\pi$ from time zero to $t$, and is formally defined as
\begin{align*}
  u_P^{(t)}(\pi) &\triangleq \begin{cases}
    \sum_{k=1}^{k'-1} \tau^k u_P(p^k, a^k) + (t - \mathcal{T}^{k'}) u_P(p^{k'}, a^{k'}) &\text{if }t \in [\mathcal{T}^{k'}, \mathcal{T}^{k'+1}) \\
    \sum_{k=1}^{k'-1} \tau^k u_P(p^k, a^k) &\text{if }t = \mathcal{T}^K \\
  \end{cases}
\end{align*}

Then the worst-case (under possible time horizons) expected (under drawing from the distribution and actions producing random outcomes) utility of the principal for distribution $\distribution$ is given by
$$\Util_\gamma(\distribution) = \min_{x \in [1/\gamma, 1]} \E_{\pi \sim \distribution} \frac{u_P^{(x \mathcal{T}^K)}(\pi)}{x \mathcal{T}^K}$$
where each $\mathcal{T}^K$ is according to the drawn trajectory $\pi$.

Finally, let $\Uopt_\gamma = \sup_\distribution \Util_\gamma(\distribution)$, where the sup runs over all distributions of valid trajectories of arbitrary finite length. We can think of $\Uopt_\gamma$ as the maximum possible worst-case utility of the principal in the unknown time horizon continuous setting game.

\begin{theorem}\label{thm:discrete_to_continuous_unknown_time}
  Fix any principal-agent problem and $\gamma \ge 1$. We have the following two results:
  \begin{enumerate}
    \item For any $\eps > 0$, there exists an oblivious strategy for the principal that gets at least $(\Uopt_\gamma - \eps)t - o(t)$ utility for the principal for all $t \in [T, \left \lceil \gamma T\right \rceil]$ for sufficiently large $T$.
    \item For any $\eps > 0$, there exists a mean-based algorithm $\Alg$ such that no (even adaptive\footnote{As with the known time-horizon result, this only holds against adaptive principals in the full-information setting, or if the principal is deterministic. See Appendix \ref{sec:mean-based-partial-info} for details.}) principal can get more than $(\Uopt_\gamma + \eps) t + o(t)$ utility against an agent running $\Alg$ for all $t \in [T, \left \lceil \gamma T \right \rceil]$ for any $T$.
  \end{enumerate}
\end{theorem}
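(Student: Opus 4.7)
The approach is to adapt the proof of Theorem~\ref{thm:discrete_to_continuous} from \Cref{sec:reduction}, generalizing it in two ways to match the present setting: (i) the principal's strategy is now evaluated at a \emph{range} of times $[\timefloor,\lceil\gamma \timefloor\rceil]$ rather than a single horizon, and (ii) both the achievability and hardness arguments must accommodate randomized strategies (distributions $\distribution$ over trajectories) rather than single trajectories $\pi$. Since $\Util_\gamma$ is defined as $\min_{x\in[1/\gamma,1]}\E_{\pi\sim\distribution}[\cdots]$, this additional quantifier structure is what drives most of the new bookkeeping, while the underlying per-segment conversion between continuous and discrete play is unchanged from the original reduction.

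For Part~1 (achievability), the plan is to start from a near-optimal distribution $\distribution^\star$ with $\Util_\gamma(\distribution^\star)\geq \Uopt_\gamma-\eps/2$, which after a standard approximation over rational durations and contracts may be assumed to have finite support with uniformly bounded trajectory length. The oblivious principal samples $\pi=\{(\contract^k,\tau^k,a^k)\}_k\sim\distribution^\star$ once at round~$0$, fixes a total round budget of $\lceil\gamma \timefloor\rceil$, and plays each $\contract^k$ for approximately $(\tau^k/\cT^K)\lceil\gamma \timefloor\rceil$ rounds, with the same $o(1)$ perturbation used in Theorem~\ref{thm:discrete_to_continuous} to pin down the intended best response $a^k$. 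For any intermediate time $t\in[\timefloor,\lceil\gamma \timefloor\rceil]$, setting $x=t/\lceil\gamma \timefloor\rceil\in[1/\gamma,1]$ yields that the cumulative expected utility by round $t$ equals $t\cdot u_P^{(x\cT^K)}(\pi)/(x\cT^K)$ up to $o(t)$ slack from (a) the integer rounding of segment lengths and (b) the $\gamma(\timefloor)\cdot \timefloor=o(\timefloor)$ rounds on which the mean-based agent deviates from $a^k$. Taking expectation over $\pi\sim\distribution^\star$ and then the minimum over $x$ gives the claimed $(\Uopt_\gamma-\eps)t-o(t)$ bound uniformly in $t$.

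For Part~2 (hardness), I would reuse the mean-based algorithm $\Alg$ constructed in the proof of Theorem~\ref{thm:discrete_to_continuous}, which pins the agent's realized plays to exact best responses against the principal's cumulative contract up to $o(\timefloor)$ error. Any (possibly adaptive, possibly randomized) principal strategy run up to horizon $\lceil\gamma \timefloor\rceil$ induces a distribution $\distribution$ over discrete trajectories $(\contract^t,a^t)_{t=1}^{\lceil\gamma \timefloor\rceil}$; collapsing maximal runs of identical $(\contract^t,a^t)$ pairs and rescaling the total duration to a normalized $\cT^K$ yields a distribution over valid continuous trajectories in the sense of~\Cref{sec:continuous}. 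If the principal simultaneously achieved more than $(\Uopt_\gamma+\eps)t+o(t)$ utility at every $t\in[\timefloor,\lceil\gamma \timefloor\rceil]$, then taking expectations over $\distribution$ and minimizing over $x=t/\cT^K\in[1/\gamma,1]$ would certify $\Util_\gamma(\distribution)>\Uopt_\gamma$, contradicting the definition of $\Uopt_\gamma$ once $\timefloor$ is large enough that the $o(t)$ slack is absorbed.

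The main obstacle, and the only place where the proof requires more than a line-by-line port of the original reduction, is handling the interaction between the principal's randomization and the worst-case time $t$ in Part~2. The issue is that $\min_x \E_\pi[\cdots]$ is in general strictly larger than $\E_\pi[\min_x\cdots]$, so if we naively collapsed the principal's mixed strategy to a single sample trajectory before taking the $\min$ we would lose this swap; keeping the distribution intact throughout is precisely what lets the contradiction above go through, and is also the reason the definition of $\Util_\gamma$ works with distributions rather than trajectories in the first place. This is also where the footnoted restriction to deterministic adaptive principals in the bandit-feedback regime enters: there, the induced $\distribution$ cannot be read off from the principal's strategy alone without additionally fixing the randomness realized by the agent, so the reduction only goes through when the principal itself is not randomizing.
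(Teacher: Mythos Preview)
Your overall approach matches the paper's: both parts are ported from the proof of Theorem~\ref{thm:discrete_to_continuous}, with Part~1 sampling a near-optimal distribution $\distribution$ and discretizing each trajectory so that the per-segment inequality~\eqref{eq:segment_inequality} holds at every intermediate time $t\in[\timefloor,\lceil\gamma\timefloor\rceil]$, and Part~2 reusing the adversarial mean-based agent so that the induced distribution over continuous trajectories would violate the definition of $\Uopt_\gamma$. Your discussion of why the $\min_x\E_\pi[\cdot]$ vs.~$\E_\pi[\min_x\cdot]$ distinction forces the definition of $\Util_\gamma$ to be over distributions is exactly the right intuition.

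There is one genuine gap in Part~1. You write that $\distribution^\star$ ``after a standard approximation over rational durations and contracts may be assumed to have finite support.'' Rational approximation of the durations and contracts of each trajectory in the support does \emph{not} reduce the support of $\distribution^\star$ to a finite set; it only makes each trajectory nicer. You do need finite support, because the ``sufficiently large $T$'' required for the discrete conversion depends on the trajectory, and with infinite support no single $T$ works uniformly. The paper handles this with a separate lemma (Lemma~\ref{lem:finite_support}): it first replaces the continuum $[1/\gamma,1]$ of evaluation times by a finite $\eps$-net $S_{\eps,\gamma}$ (using that $u_P^{(t)}(\pi)$ is $1$-Lipschitz in $t$), associates to each trajectory the vector $(u_P^{(t)}(\pi))_{t\in S_{\eps,\gamma}}\in\mathbb{R}^{|S_{\eps,\gamma}|}$, and then applies Carath\'eodory's theorem to express $f(\distribution^\star)$ as a convex combination of at most $|S_{\eps,\gamma}|+1$ such vectors. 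This is the missing step in your reduction to finite support; once you have it, the rest of your Part~1 goes through as written.
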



The proof of \Cref{thm:discrete_to_continuous_unknown_time} is also deferred to \Cref{sec:reduction}. Given this result, questions about $(\eps, \gamma)$ feasibility boil down to questions about the value of $\Uopt_\gamma$.

\subsection{Proof of~\Cref{thm:unknown-time-horizon}}

\begin{proof}[Proof of~\Cref{thm:unknown-time-horizon}]  
  Due to \Cref{thm:discrete_to_continuous_unknown_time}, it suffices to show that there exists some $\gammafloor$ such that $\Uopt_\gamma < (1 + \eps) \staticvalue$. This lets us focus on the continuous-time setting.
  
  The high-level plan from here is to focus on a particular continuous trajectory $\pi = \left\{(p^k, \tau^k, a^k\right\}_k$ apply a potential argument to it. We will then show our analysis extends to distributions $\distribution$ for free. We will define a potential function $\potential(\alpha)$ that maps time-averaged linear contracts $\alpha$ to potentials in $\mathbb{R}_{\ge 0}$. This potential is based only on the principal-agent problem $\principalagentproblem$. There are some peculiarities about our potential argument, relating to the passage of time. Consider a principal managing to produce a time-averaged linear contract of $\alpha$ after $t$ units of time, and compare that with a principal that has managed to arrive a time-averaged linear contract of $\alpha$ after $2t$ units of time instead, i.e. twice the time. In terms of absolute (not time-averaged) units of profit we can extract from this point, it is twice as good to be in the latter situation. With this in mind, our proof will carefully distinguish between the \emph{raw potential} $\potential(\alpha)$ and the \emph{time-weighted potential} $\potential(\alpha) \cdot t$. If a principal maintains a steady time-averaged linear contract, then the raw potential will remain constant while the time-weighted potential will grow.

  The purpose of the time-weighted potential is to model the ability of a principal to extract additional profit by gradually lowering time-averaged linear contract. It will be used to demonstrate that this extra profit produced by using up a finite resource, which will imply the desired theorem result.

  We now give our raw potential function $\potential(\alpha)$. We begin by writing down the linear contract breakpoints of $\principalagentproblem$; without loss of generality\footnote{Implicitly, this step prunes away all actions which cannot be incentivized by a linear contract.} they are $0 < \alpha_2 < \alpha_3 < \cdots < \alpha_n$, where the linear contract $\alpha_i$ leaves the agent indifferent between actions $i-1$ and $i$. For notational convenience, we also define an $\alpha_1 \triangleq 0$ as the minimum linear contract to incentivize the first action. We also denote the expected reward of action $i$ with $\expectedreward$. With this notation in place, our raw potential function $\potential: [0, \alpha_n] \to \mathbb{R}_{\ge 0}$ is the following piecewise-linear function. Note that we can assume without loss of generality that the average linear contract never exceeds $\alpha_n$, because capping it to this quantity only improves principal utility at all moments in time. 
  \begin{align*}
    \potential(\alpha) &\triangleq \begin{cases}
      \sum_{i=1}^{i'-1} (\alpha_{i+1} - \alpha_{i}) R_i + (\alpha - \alpha_{i'}) R_{i'} & \text{if } \alpha \in [\alpha_{i'}, \alpha_{i'+1}) \\
      \sum_{i=1}^{n-1} (\alpha_{i+1} - \alpha_i) R_i & \text{if } \alpha = \alpha_{n}
    \end{cases}
  \end{align*}

  \begin{figure}
\centering
\begin{tikzpicture}[%
  auto,
  scale=1.0,
  ]

  \draw[<->] (-1, 0) -- (4, 0);
  \node at (4.25, 0) {$\alpha$};
  \draw[<->] (0, -1) -- (0, 3.75);
  \node at (0, 4) {$\potential$};

  \draw (0, 0) -- (1, 0.25) -- (2, 0.75) -- (3, 1.75) -- (4, 3.75);

  \node at (-0.25, -0.25) {$0$};
  \draw[dashed] (1, 0) -- (1, 0.25);
  \node at (1, -0.25) {$\alpha_2$};
  \draw[dashed] (2, 0) -- (2, 0.75);
  \node at (2, -0.25) {$\alpha_3$};
  \draw[dashed] (3, 0) -- (3, 1.75);
  \node at (3, -0.25) {$\alpha_4$};

  \draw[->] (2, 2) -- (2.5, 1.5);
  \node at (2, 2.25) {slope is $R_3$};
\end{tikzpicture}
\caption{We use a ``raw'' potential function $\potential(\alpha)$ which maps time-averaged linear contracts $\alpha$ to (raw) potentials.}
\label{fig:potential}
\end{figure}
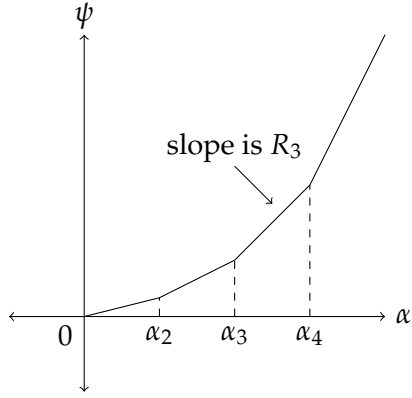


  The potential above is depicted in \Cref{fig:potential} and can be seen as the product of the following thought experiment: what if the principal was allowed to offer unbounded payments (in particular, payments can be negative and can exceed the payment bound $P$)? In our continuous-time setting, this gives the principal the ability to produce segments of play $(p^k, \tau^k, a^k)$ which have near-instantaneous times $\tau^k \to 0$ while using large-magnitude cumulative contracts $p^k \tau^k$ to move between the boundaries between actions. If these near-instantaneous actions are used at time $t$, then the time-weighted potential $\potential(\alpha) \cdot t$ captures the necessary payments to alter the time-averaged linear contract. One interesting aside about this thought experiment is that the necessary payment to near-instantaneously move up from $\alpha_i$ to the next $\alpha_{i+1}$, namely $\left[\potential(\alpha_{i+1}) - \potential(\alpha_i) \right] t$, is equal to the payout received for near-instantaneously using a negative contract to move down from $\alpha_{i+1}$ to $\alpha_i$.

  Potential function in hand, we return to the original problem where payments are bounded and nonnegative. Let us consider the $k^{th}$ segment of play $(p^k = \alpha^k R, \tau^k, a^k)$ and relate the total profit generated during this segment of play with the change in potential.

  For notational convenience we define 
  shorthand for the cumulative linear contract offered.
  \begin{align*}
    \mathcal{A}^k &\triangleq \sum_{k'=1}^k \tau^{k'} \alpha^k
  \end{align*}
  We will also use $u^k_P$ to denote the (time-weighted) principal utility for segment $k$ and $u^k_\star$ to denote the corresponding amount of principal utility that the optimal static contract obtains over $\tau^k$ time. Using this notation, we can compute an upper bound on how much additional principal utility this segment manages to achieve over the optimal static contract.
  
  \begin{align*}
    u^k_P &= \left[ (1 - \alpha^k) R_{a^k} \right]\tau^k \\
    u^k_\star &= \left[ \max_a (1 - \alpha_a) R_a \right] \tau^k \\
    &\ge \left[ (1 - \alpha_{a^k}) R_{a^k} \right] \tau^k \\
    (u^k_p - u^k_\star) &\le \left[(\alpha_{a^k} - \alpha^k) R_{a^k} \right] \tau^k
  \end{align*}

  At the same time, this contract has shifted the time-averaged linear contract and hence altered the time-weighted potential.
  \begin{align*}
    \potential\left(\mathcal{A}^k / \mathcal{T}^k \right) \mathcal{T}^k
    - \potential\left(\mathcal{A}^{k-1} / \mathcal{T}^{k-1} \right) \mathcal{T}^{k-1}
    &= \left[
           \sum_{i=1}^{\alpha^k} (\alpha_i - \alpha_{i-1}) R_{i-1} + \left(\mathcal{A}^k / \mathcal{T}^k - \alpha_{a^k} \right) R_{a^k}
       \right] \mathcal{T}^k \\
    &\hphantom{{}={}} - \left[
           \sum_{i=1}^{\alpha^k} (\alpha_i - \alpha_{i-1}) R_{i-1} + \left(\mathcal{A}^{k-1} / \mathcal{T}^{k-1} - \alpha_{a^k} \right) R_{a^k}
       \right] \mathcal{T}^{k-1} \\
    &= \left[
           \mathcal{T}^k \sum_{i=1}^{\alpha^k} (\alpha_i - \alpha_{i-1}) R_{i-1} + \left(\mathcal{A}^k - \mathcal{T}^k \alpha_{a^k} \right) R_{a^k}
       \right] \\
    &\hphantom{{}={}} - \left[
           \mathcal{T}^{k-1} \sum_{i=1}^{\alpha^k} (\alpha_i - \alpha_{i-1}) R_{i-1} + \left(\mathcal{A}^{k-1} - \mathcal{T}^{k-1} \alpha_{a^k} \right) R_{a^k}
       \right] \\
    &= \tau^k \sum_{i=1}^{\alpha^k} (\alpha_i - \alpha_{i-1}) R_{i-1} +
       \left[
         \left(\alpha^k \tau^k - \tau^k \alpha_{a^k} \right) R_{a^k}
       \right]
  \end{align*}

  Interestingly, the expression for time-weighted potential has a term that perfectly cancels with our bound for how much additional principal utility this segment produces over the optimal static contract.
  \begin{align*}
  (u^k_p - u^k_\star) + \potential\left(\mathcal{A}^k / \mathcal{T}^k \right) \mathcal{T}^k
  - \potential\left(\mathcal{A}^{k-1} / \mathcal{T}^{k-1} \right) \mathcal{T}^{k-1}
    &\le \tau^k \sum_{i=1}^{\alpha^k} (\alpha_i - \alpha_{i-1}) R_{i-1} \\
    &\le \int_{\mathcal{T}^{k-1}}^{\mathcal{T}^k} \potential\left(\frac{\mathcal{A}^{k-1} + (\mathcal{T} - \mathcal{T}^{k-1})\alpha^k}{\mathcal{T}}\right) d\mathcal{T}
  \end{align*}

  The right-hand side expression above is just the integral of the current raw potential as this segment advances the time from $\mathcal{T}^{k-1}$ to $\mathcal{T}^k$. Conveniently, this upper bound still works out to the same amount even if we subdivide our segment $(p^k, \tau^k, a^k)$ into two sub-segments $(p^k, x, a^k), (p^k, y, a^k)$ such that $x, y \in [0, \tau^k]$ and $x + y = \tau^k$ (and re-index the other segments appropriately). This means we can sum this bound to get an overall bound for any time $t \in [0, \timeceil]$, just by splitting the last segment appropriately. To formalize this, we introduce some more parenthetical superscript notation to denote the corresponding objects when considering time from zero to $t$. In particular, $u^{(t)}_\star$ denotes the optimal static contract's principal utility for $t$ units of time, $\mathcal{A}^{(t)}$ denotes the cumulative linear contract for $t$ units of time.
  \begin{align*}
    (u^{(t)}_p(\pi) - u^{(t)}_\star) + \potential\left(\mathcal{A}^{(t)} / t \right) t
    &\le \int_0^{t} \potential\left(\frac{\mathcal{A}^{(\mathcal{T})}}{\mathcal{T}}\right) d\mathcal{T}
  \end{align*}

  Recall our notation where $\staticvalue$ denotes the optimal static contract's principal utility. For $t \in [\timefloor, \timeceil]$, we know that the excess principal utility needs to be at least $\eps \staticvalue t$, which implies the following.
  \begin{align*}
    \eps \staticvalue t + \potential\left(\mathcal{A}^{(t)} / t \right) t
    &\le \int_0^{t} \potential\left(\frac{\mathcal{A}^{(\mathcal{T})}}{\mathcal{T}}\right) d\mathcal{T} \\
    \potential\left(\mathcal{A}^{(t)} / t \right) &\le -\eps \staticvalue + \frac1t \int_0^{t} \potential\left(\frac{\mathcal{A}^{(\mathcal{T})}}{\mathcal{T}}\right) d\mathcal{T}
  \end{align*}

  With this bound in mind, we can view every trajectory $\pi$ that manages to successfully beat the optimal static contract by $(1 + \eps)$ in terms of how much raw potential it has as a function of time. Note that this bound controls the current raw potential based on the average raw potential up to this point (minus a constant). As a result, if we just consider trajectories $\pi$ that obey this bound, the worst case for us would be a function that satisfies it with equality everywhere since greedily picking the maximum value for the function early on allow for higher values later on (greedy stays ahead). We now solve for this function $f(t)$ which simultaneously maximizes raw potential everywhere.
  \begin{align*}
    \eps \staticvalue t + f(t) t
    &= \int_0^{t} f(\mathcal{T}) d\mathcal{T} \\
    \eps \staticvalue + f(t) + f'(t) t 
    &= f(t) \\
    f'(t) &= -\eps \staticvalue / t
  \end{align*}

  At time $\timefloor$, we know the raw potential can be at most $\potential(\alpha_n)$. We want to choose $\gammafloor$ and hence $\timeceil$ so that $f(\timeceil)$ is negative in order to create a contradiction. Because $f$ yields the maximum possible function value attainable at time $\timeceil$, this means that our actual raw potential will also be negative at $\timeceil$. We now solve for the largest value of $\gamma$ that does not actually create a contradiction.
  \begin{align*}
    f(\timeceil) - f(\timefloor) &= -\potential(\alpha_n) \\
    \int_{\timefloor}^{\timeceil} f'(t) dt &= -\potential(\alpha_n) \\
    -\eps \staticvalue \left[ \ln t \right]_{\timefloor}^{\timeceil} &= -\potential(\alpha_n) \\
    \ln (\timeceil / \timefloor) &= \frac{\potential(\alpha_n)}{\eps \staticvalue} \\
    \gamma &= e^{\phi(\alpha_n) / (\eps \staticvalue)}
  \end{align*}
  Hence it suffices to pick a $\gammafloor > e^{\phi(\alpha_n) / (\eps \staticvalue)}$. This demonstrates that it is impossible for a \emph{deterministic} trajectory $\pi$ to beat the optimal static contract by a $(1 + \eps)$ multiplicative factor.
  
  What about \emph{randomized} dynamic contracts $\distribution$? We can just take the appropriate convex combination of our bounds according to drawing $\pi \sim \distribution$. In particular, this yields:
  \begin{align*}
    \E_{\pi \sim \distribution} \left[(u^{(t)}_p(\pi) - u^{(t)}_\star)\right] + \E_{\pi \sim \distribution} \left[\potential\left(\mathcal{A}^{(t)} / t \right)\right] t
    &\le \int_0^{t} \E_{\pi \sim \distribution} \left[\potential\left(\frac{\mathcal{A}^{(\mathcal{T})}}{\mathcal{T}}\right)\right] d\mathcal{T}
  \end{align*}

  We can then re-execute the remainder of the proof in the same way, replacing the deterministic additional principal utility with expected additional principal utility and deterministic raw potential with expected raw potential. The expected potential function is still bounded everywhere by the same function $f(T)$ and we reach the same conclusions about $\gammafloor$. This completes the proof.
\end{proof}

\begin{remark}
  Due to Yao's minimax principle, \Cref{thm:unknown-time-horizon} implies that there exists an adversarial distribution over times in $[\timefloor, \timeceil]$ such that for any randomized principal strategy, the ratio between expected principal utility and the principal utility of the optimal static contract for that duration of time is strictly less than $(1 + \eps)$. In order to apply Yao's minimax principle, we need the set of relevant principal strategies and the set of relevant adversary strategies to be finite. We already do this in our proof of \Cref{thm:discrete_to_continuous_unknown_time}: the latter can just be an $\eps$-net since principal utility is Lipschitz with Lipschitz constant depending on the contract problem, and after that the former then follows from Carath\'{e}odory's Theorem by treating each deterministic trajectory as a vector with one coordinate for every point in our $\eps$-net.
\end{remark}




\subsection{Proof of~\Cref{thm:unknown-horizon-converse}}
\begin{proof} [Proof of ~\Cref{thm:unknown-horizon-converse}]
    We prove this by proving the contrapositive.
    Suppose for any fixed time $T$ there is a dynamic contract that can achieve an expected utility of $(1+\epsilon) u_{\star} T$ for some $\epsilon > 0$. By \Cref{thm:free-fall-linear}, we can assume without loss of generality that this is a free-fall linear contract. We will show that for any $\gamma$ we will construct a dynamic contract such that for all $\timefloor \in \mathbb{R}$  and all 
    $t\in [\timefloor,\gamma \cdot \timefloor]$, we can achieve an expected utility of $(1+ f(\epsilon,\gamma)) \cdot u_{\star} \cdot t$ where $f(\epsilon,\gamma) \geq \Omega\left(\min \left( (\frac{\eps}{4})^{O(\log(1+\gamma))}, \frac{\eps}{\gamma} \right)\right)$.

    As a first step, we will show that if there is a free-fall linear contract that beats the optimal static contract, then there is a free-fall linear contract that beats the optimal static contract but also either (1) ends at or above the optimal static contract or (2) begins at the optimal static contract. Afterwards, we plan to analyze case (1) and (2) separately.

    If our free-fall linear contract does not already satisfy case (1) or (2), then it must do one of the following; (a) begin at a higher breakpoint than the optimal static contract and end at a lower breakpoint than the optimal static contract or (b) being and end at lower breakpoints than the optimal static contract. We now analyze these two cases. In the process, we will lose a constant factor which is folded into our $\Omega$ notation.

\textbf{Case A: Dynamic Contracts Beginning above $\optimalstaticcontract$ and Ending below $\optimalstaticcontract$} We write our free-fall linear contract in the usual form $\pi = \{(\contract^{k}, \tau^{k}, a^{k})\}_{k=1}^{K}$. By virtue of being in this case, we know there is some index $2 \le i < K$ such that the average linear contract after $i$ segments, $\rho^{i}$, is exactly $\optimalstaticcontract$. We ``cut'' the trajectory $\pi$ at this point to produce two new trajectories $\pi'$ and $\pi''$. Specifically, $\pi' = \{(\contract^k, \tau^k, a^k\}_{k=1}^i$ and $\pi'' = \{(\optimalstaticcontract, \cT^i, a^i)\} \circ \{(\contract^k, \tau^k, a^k\}_{k=i+1}^K$ where $\circ$ denotes concatenation. In other words, we construct $\pi'$ by ending at this point and we construct $\pi''$ by taking the optimal static contract to this point and continuing as normal. Observe that the combined performance of $\pi'$ and $\pi''$ is equal to the combined performance of $\pi$ and just playing the single segment $\{(\optimalstaticcontract, \cT^i, a^i)\}$: $(1 + \epsilon) u_{\star} \cT^K + u_\star \cT^i$. This results in a combined time-averaged performance of
\begin{align*}
  \frac{(1 + \epsilon) u_{\star} \cT^K + u_\star \cT^i}{\cT^K + \cT^i} &= u_\star \left[ (1 + \epsilon)\frac{\cT^K}{\cT^K + \cT^i} + (1)\frac{\cT^i}{\cT^K + \cT^i} \right] \\
  &\ge (1 + \epsilon/2) u_{\star}
\end{align*}
since $\cT^K \ge \cT^i$. Since $\pi'$ and $\pi''$ have this combined average, one of them must have at least this average (and we only lost a factor $1/2$ on our $\epsilon$, which is indeed a constant. Since $\pi'$ matches case (1) and $\pi''$ matches case (2), this completes the analysis of case (a).

\textbf{Case B: Dynamic Contracts Beginning and Ending below $\optimalstaticcontract$}
    We take the obvious approach and choose to begin at $\optimalstaticcontract$ instead. Specifically, we replace the first segment with a sequence of segments that begins at $\optimalstaticcontract$ and then undergoes the appropriate number of free-fall segments to arrive at the same endpoint as before (same total time and average linear contract). We argue that each new segment has at least as much principal utility per unit time as the original segment. Since the total time is the same, this is a direct improvement over the original dynamic contract, both in terms of total principal utility and time-averaged principal utility. The argument that each new segment does at least as well per unit time is similar to before. The first new segment just hovers at the optimal static contract, which by definition is better than any other static contract (which our original segment must be). The remaining new segments are freefall segments, and achieve principal utility per unit time equal to the expected revenue of the actions they fall through. We observe that we fall through segments in order of decreasing expected utility, meaning all of these segments have higher expected utility than the action we originally began with, and expected revenue is at least the principal utility of the static contract that achieves a particular action. We finish this case by noting that we did not diminish $\epsilon$ at all, which trivially a constant factor.

    This completes our analysis of cases (a) and (b). In all cases, we managed to reduce to either case (1) or (2), which we now consider.
    
\textbf{Case 1: Dynamic Contracts Ending at or above $\optimalstaticcontract$}
    First, we consider the case where for any fixed $T$ there is a dynamic contract  $\pi(T) = ((\alpha^1,\tau^1,a^1),\dots,(\alpha^k,\tau^k,a^k))$ which ends at or above the optimal static action: $a^k \ge \optimalstaticaction$. Given any $\gamma$ and time period $[\timefloor, \timeceil = \gamma \cdot \timefloor]$,
    consider the dynamic contract which starts with $\pi(\timefloor)$, free falls to the optimal static contract, and then plays the optimal static contract for the remaining time period. We again observe (as we did for case (b)) that free falling through actions that are at least the $\optimalstaticaction$ results in at least $u_{\star}$ principal profit per unit time. Hence the total revenue for any time $t \in [\timefloor, \timeceil]$ for the principal is $(1+\epsilon)\cdot u_{\star} \timefloor + (t-  \timefloor) \cdot u_{\star}$, which is at least $(1+ \nicefrac{\epsilon}{\gamma}) \staticvalue t$. 

\textbf{Case 2: Dynamic Contracts starting in $\optimalstaticcontract$}
    By~\Cref{thm:free-fall-linear}, we know any dynamic contract can be transformed into a 
    free-fall dynamic contract with no loss in revenue. Therefore, we assume that for any 
    fixed time horizon $T$, there is a dynamic contract form $\pi(T) = \left(\optimalstaticcontract,\tau^1, a^1), (0, \tau^2, a^2), \dots, (0, \tau^k, a^k) \right)$ which achieves a total revenue of $(1+\eps) \staticvalue T$. Since this is a free-fall contract, the optimal revenue from this contract can be characterized as $(1-\optimalstaticcontract) \staticvalue \tau^1 + \sum_{i=2}^k \tau^i R_{a_i} $ which is at least $(1+\eps)\staticvalue t> (1+\eps) (1-\optimalstaticcontract) \staticvalue$. 
    Let $\mu$ be the minimum fraction of time 
    such that for any time $T$, the dynamic contract $ \pi(T)$ achieves revenue at 
    least $(1+\eps/2) \mu u_{\star} T$. 
    Since we know that $\pi(T)$  achieves a total revenue of 
    $(1+\eps) u_{\star} T $  and starts out at the optimal static contract,  we know that 
    $\mu \geq \nicefrac{\tau^1}{\sum_{i=1}^k \tau^i}$ and it is a constant bounded away from $1$.
    Let $S_i = \lceil \mu^i \timeceil \rceil$ and let $p$ be the first index where $S_p$ is less than $\timefloor$ (i.e., $p = \lceil \frac{\log(1+\gamma)}{\log (\mu)} \rceil$) . By construction, $S_i$ satisfy two properties: 
    \begin{enumerate}
        \item  $S_{p} \leq \timefloor \leq S_{p-1} \leq \dots S_{1}\leq  \timeceil$. 
        \item  If the principal runs dynamic contract ending at $S_i$, namely $\pi(S_i)$, then they are guaranteed 
        revenue $(1+\nicefrac{\eps}{2}) t \staticvalue$ for any $t \in [S_{i+1},S_i].$
    \end{enumerate}
    
    We will construct a sequence of 
    dynamic contracts $\pi^i$ which have the property that for any $t \in [S_i, \gamma \timefloor]$ achieves revenue that is at least $(1+(\nicefrac{\eps}{4})^i) \staticvalue t$.  We do this via induction. For the base case,  let $\pi^1 = \pi(\timeceil).$ By construction, we know that for all $t \in [S_1, \timeceil]$, the principal will get revenue $(1+\nicefrac{\eps}{2}) u_{\star} t$.  Now suppose we have such a dynamic contract $\pi^{i}$, then we construct $\pi^{i+1}$ by taking a convex combination of $\pi^{i}$ and the optimal dynamic contract ending at $\pi(S_i)$. In particular, let 
    $$\pi^{i+1} = \frac{1+\nicefrac{\eps}{2} }{1+ \nicefrac{\eps}{2}+ (\nicefrac{\eps}{4})^i} \pi^{i}+  \frac{ (\nicefrac{\eps}{4})^i}{1+\eps + (\nicefrac{\eps}{4})^i}\pi(S_i).$$ 
    For any $t \in [S_i,\timeceil]$, we have that revenue we attain is at least the revenue from the contract  
    \begin{align*}
    \frac{1+\nicefrac{\eps}{2}}{1+ \nicefrac{\eps}{2} + (\nicefrac{\eps}{4})^{i}} \text{Revenue}(\pi^{i}(t))
    \geq  \frac{1+\nicefrac{\eps}{2}}{1+ \nicefrac{\eps}{2} + (\nicefrac{\eps}{4})^{i}} (1+(\nicefrac{\eps}{4})^i) u_{\star} t
    \geq 1+ \frac{ \nicefrac{\eps^{i+1}}{2 \cdot 4^i}}{1+\nicefrac{\eps}{2} + (\nicefrac{\eps}{4})^i}
    \geq (1 + (\nicefrac{\eps} {4})^{i+1}) u_{\star}t.
    \end{align*}
    
     For any $t\in [S_{i+1},S_i]$, observe that we get at least $u_{\star} t$ from the first contract  $\pi^{i}$ and at least $(1+ \nicefrac{\eps}{2}) u_{\star} t$ in the second contract. 
     Therefore we get at least 
     \begin{align*}
     \frac{1+\nicefrac{\eps}{2}}{1+ \nicefrac{\eps}{2} +(\nicefrac{\eps}{4})^i} u_{\star} t + \frac{(\nicefrac{\eps}{4})^i}{1+ \nicefrac{\eps}{2}+(\nicefrac{\eps}{4})^i}  (1+(\nicefrac{\eps}{2})) u_{\star} t  
     \geq (1+ (\nicefrac{\eps}{4})^i) u_{\star} t
     \end{align*}

\end{proof}

\bibliography{contracts_for_learning_agents_refs}

\appendix
\section{Basic Observations}
\label{sec:observations}


The following are two observations about our repeated contract games with learning agents, arising from known results on learning agents in general games. 

\begin{observation}
\label{obs:best-response}
In the fixed contract setting, for any regret-minimizing agent in the limit $T \rightarrow \infty$ the support of the average empirical distribution of play includes only best-response actions with probability one. Therefore, the repeated game with a static contract against a regret-minimizing agent is essentially equivalent to the single-shot game against a rational agent.
\end{observation}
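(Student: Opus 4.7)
The plan is to exploit the fact that when the principal's contract $\contract$ is held fixed across all $T$ rounds, the agent faces a \emph{stationary} decision problem: in every round $t$, each action $a \in [n]$ has the same expected per-round utility $u_A(\contract, a)$, independent of history. This reduces the problem to a classical online learning setup in which the loss vector is the same constant vector in every round, so I can directly invoke the defining property of a no-regret algorithm.

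First, I would fix the static contract $\contract$ and write, for any realized trajectory of play $\{a^t\}_{t=1}^T$, the agent's empirical per-round utility as $\tfrac{1}{T}\sum_{t=1}^T u_A(\contract, a^t)$. The best fixed action in hindsight is exactly $\BR(\contract)$, achieving per-round utility $u^\star := \max_a u_A(\contract, a)$. The no-regret guarantee says that in expectation, $\tfrac{1}{T}\sum_{t=1}^T u_A(\contract, a^t) \geq u^\star - o(1)$ as $T \to \infty$. (Note that even though the agent only observes stochastic per-round payments, standard no-regret algorithms continue to enjoy this guarantee in expectation by concentration, since $p_{\max}$ is bounded independently of $T$; this is a routine point I would remark on but not belabor.)

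Second, I would convert this average-utility guarantee into a statement about the empirical distribution of actions. Let $\Delta := \min_{a \notin \BR(\contract)} (u^\star - u_A(\contract, a)) > 0$ be the sub-optimality gap (well-defined since $[n]$ is finite). Let $q_a^T := \tfrac{1}{T}\sum_{t=1}^T \Pr[a^t = a]$ be the expected empirical frequency of action $a$. Then
\begin{equation*}
u^\star - o(1) \;\leq\; \sum_a q_a^T \, u_A(\contract, a) \;\leq\; u^\star - \Delta \cdot \sum_{a \notin \BR(\contract)} q_a^T.
\end{equation*}
Rearranging gives $\sum_{a \notin \BR(\contract)} q_a^T = o(1)$, so non-best-response actions appear with vanishing frequency. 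This is exactly the claim that the support of the limiting average empirical distribution lies in $\BR(\contract)$ with probability one.

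Finally, I would translate this to the principal's revenue. Because payments and rewards are bounded and the non-best-response mass is $o(1)$, the principal's total expected utility satisfies
\begin{equation*}
\sum_{t=1}^T \E[u_P(\contract, a^t)] \;=\; T \cdot \max_{a \in \BR(\contract)} u_P(\contract, a) \;\pm\; o(T),
\end{equation*}
up to the usual tie-breaking caveat between actions inside $\BR(\contract)$ (which agrees with the one-shot model up to the standard indifference-breaking convention). I do not foresee any real obstacle here; the only mildly delicate point is handling stochastic outcome feedback versus the idealized expected-utility feedback, which is absorbed into the $o(1)$ term via standard concentration arguments for bounded rewards.
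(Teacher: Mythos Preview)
Your proposal is correct and follows essentially the same idea as the paper's proof: both arguments hinge on the observation that against a fixed contract the per-round utilities are stationary, so any non-best-response action played with nonvanishing frequency would contribute linear regret via the positive suboptimality gap. The paper phrases this as a two-line contradiction (a non-best-response action in the support with fixed probability $p$ forces regret $\geq p(\mathrm{OPT}-u(a))T = \Theta(T)$), while you give the equivalent direct quantitative form, explicitly defining $\Delta$, bounding $\sum_{a\notin\BR(\contract)} q_a^T = o(1)$, and then carrying this through to the principal's revenue; your version is more detailed but not a different route.
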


\begin{proof}
This follows directly from the regret-minimization property. Indeed, suppose, for the sake of contradiction, that there exists an action $a$ in the support which is not a best response. Denote the best-response utility by $OPT$. Action $a$ is played with probability $p > 0$. Notice that since there is only one player, the regret from any other action cannot be negative. Then we have that the regret is $\mathrm{Regret} \geq p(OPT - u(a))T = \mathcal{O}(T)$, a contradiction.
\end{proof}

\begin{observation}
\label{obs:no-swap-regret}
If the agent is using a no-swap-regret algorithm, then the optimal static contract played repeatedly is also optimal in the dynamic setting. As a corollary, this is the case also for general no-regret algorithms if the agent has at most two actions.
\end{observation}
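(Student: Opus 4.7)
The plan is to adapt the standard correlated-equilibrium argument used for general games in~\cite{deng2019strategizing} to the contract setting. Let $\mu^T$ denote the empirical joint distribution over (contract, action) pairs $(\contract^t, a^t)$ produced by running the (possibly adaptive) principal against the agent for $T$ rounds. The no-swap-regret guarantee, applied to the swap function that sends $a$ to $a'$ and fixes all other actions, yields the standard correlated-equilibrium constraints: for every action $a$ with empirical frequency $q_a = \Pr_{\mu^T}[a^t = a]$ and every alternative $a'$,
\[
q_a \cdot \E_{\mu^T}\bigl[u_A(\contract, a) - u_A(\contract, a') \,\big|\, a^t = a\bigr] \geq -o(1).
\]
(When the feedback is only through realized payments, the same bound holds up to an $O(1/\sqrt{T})$ concentration term, which is absorbed into the $o(1)$.)

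The key reduction is that both $u_A$ and $u_P$ are linear in the contract $\contract$. Writing $\bar\contract_a = \E_{\mu^T}[\contract \mid a^t = a]$, the CE constraint simplifies to $u_A(\bar\contract_a, a) \geq u_A(\bar\contract_a, a') - o(1)$, so $a \in \BR(\bar\contract_a)$ up to $o(1)$, and the principal's per-round expected revenue equals $\sum_a q_a \, u_P(\bar\contract_a, a)$ by the same linearity. Each summand is then at most $\max_{\contract : a \in \BR(\contract)} u_P(\contract, a)$, the best static revenue among contracts incentivizing $a$. Taking the convex combination, the per-round revenue is at most $\max_\contract u_P(\contract, \BR(\contract))$, i.e., the value of the optimal static contract. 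This bound holds whether the principal is oblivious or adaptive.

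For the corollary, with $n \leq 2$ actions one need only show that external regret implies swap regret. The only non-identity swap is the transposition $\phi(a) = 3 - a$, and the identity $u^t(1) + u^t(2) = u^t(a^t) + u^t(3 - a^t)$ gives
\[
\sum_t \bigl[u^t(\phi(a^t)) - u^t(a^t)\bigr] \;=\; \sum_t \bigl[u^t(1) - u^t(a^t)\bigr] \;+\; \sum_t \bigl[u^t(2) - u^t(a^t)\bigr],
\]
exhibiting the transposition's swap regret as the sum of the external regrets against the two constant strategies. Hence no-regret implies no-swap-regret here, and the first part of the observation applies directly. The main technical care is in the passage from the realized-utility regret guarantees to the expected-utility CE constraints used above; this is routine given that the per-round payments are uniformly bounded (by $p_{\max}$ in the general case, and $\alpha \le 1$ in the linear-contract case via Observation~\ref{obs:alpha-less-than-one}), so standard Hoeffding-style concentration suffices.
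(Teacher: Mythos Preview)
Your proposal is correct and follows essentially the same approach as the paper, which simply cites \cite{deng2019strategizing} for the Stackelberg upper bound against no-swap-regret learners and asserts the equivalence of external and internal regret for two actions. You have unpacked both claims explicitly---the linearity-of-utilities reduction to conditional average contracts $\bar\contract_a$ is exactly the mechanism behind the cited result, and your decomposition of the transposition's swap regret as the sum of the two external regrets is a clean way to justify the two-action corollary the paper states without proof.
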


\begin{proof}
    The result follows from \cite{deng2019strategizing}, who show that in any game between an optimizer and a no-swap-regret algorithm, the optimizer cannot extract higher payoff than the Stackelberg value of the game where the optimizer plays the first move.  The corollary is since with (at most) two actions internal regret and external regret are equivalent. 
\end{proof}

Below we show that in our analysis of dynamic linear contracts, it suffices to only examine linear contracts with $\alpha \in [0, 1]$. Note that although this is obvious in the static setting (offering $\alpha > 1$ requires the principal to suffer negative utility), it is not a priori clear that the principal cannot benefit via a dynamic strategy which offers a contract with $\alpha > 1$ for some fraction of the time horizon (perhaps counterbalancing it by offering a contract with a much smaller $\alpha$ later on). In fact, \cite{GuruganeshSW023} show that when the agents have private information (``types'') the principal \emph{can} benefit by offering a randomized menu of linear contracts which possibly contains linear contracts with $\alpha > 1$.

Nonetheless, we show that the principal cannot benefit by doing this in the dynamic setting. The proof below follows from a slight modification of Lemma \ref{thm:rewriting-lemma-2} in our proof of Theorem \ref{thm:free-fall-linear}. 

\begin{observation}
\label{obs:alpha-less-than-one}
Let $\pi = \{(\alpha^k,\tau^k,a^k)_{k=1}^K \}$ be any linear dynamic contract with some linear contract $\alpha^i >1$.
Then there exists a dynamic linear contract $\pi' = \{(\alpha^k, \tau^k, a^k)\}_{k=1}^{K}$ with $\Util(\pi') \geq \Util(\pi)$ and where $\alpha^k \leq 1$ for all $k$. 
\end{observation}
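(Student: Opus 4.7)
The plan is to adapt the rewriting machinery of Theorem \ref{thm:free-fall-linear} to the regime $\alpha>1$ by establishing a mirror image of Lemma \ref{thm:rewriting-lemma-2}. The original lemma argued that on a segment lying along an indifference boundary $\{i,i+1\}$, the principal weakly prefers that the agent play the higher-reward action, since switching from $a^k=i$ to $a^k=i+1$ yields utility change $\tfrac{\tau^k}{\cT^K}(R_{i+1}-R_i)(1-\alpha^k)$, which is nonnegative exactly when $\alpha^k\le 1$. The mirror statement follows from the same calculation: when $\alpha^k>1$, the sign flips, so the principal weakly prefers the smaller-reward action, and switching from $a^k=i+1$ to $a^k=i$ yields a utility gain of $\tfrac{\tau^k}{\cT^K}(R_{i+1}-R_i)(\alpha^k-1)\ge 0$. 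Note that when $\alpha^k=\alpha_{i,i+1}$ this is exactly the ``wrong-direction'' version of the rule used in the free-fall proof.

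Using this mirror rule together with Lemma \ref{thm:rewriting-lemma-1} (segment merging, which never needed $\alpha\le 1$) and the no-skip fact of Lemma \ref{lem:no-skip-linear}, I would proceed iteratively. Pick any segment $k$ with $\alpha^k>1$. If the segment does not already lie along an indifference boundary, cut it short in the spirit of the decomposition used in the proof of Lemma \ref{lem:decreasing-to-freefall}: split it into a sub-segment of the form $(\alpha_{i,i+1},\lambda\tau^k,a^k)$ along the nearest indifference boundary with $\alpha_{i,i+1}>1$, followed by the remainder, choosing $\lambda$ so the average contract over the split agrees with $\alpha^k$. The boundary sub-segment is then eligible for the mirror rule, which replaces its action with the smaller one; Lemma \ref{thm:rewriting-lemma-1} subsequently merges it with a neighboring segment incentivizing the same smaller action, weakly increasing utility and reducing the total contract mass above $1$. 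Iterating this procedure eliminates every segment with $\alpha^k>1$.

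The main obstacle is the cascade effect: any change to $\alpha^k$ or $a^k$ shifts the cumulative contract $\oalpha^{k'}$ at every later segment $k'$, and risks violating the best-response constraints $a^{k'}\in\BR(\oalpha^{k'-1})\cap\BR(\oalpha^{k'})$ downstream. The way to handle this cleanly is to do the rewriting at the earliest segment with $\alpha>1$ and to pair the local modification with a compensating adjustment (for example, shortening the preceding or following segment so that $\oalpha^{k'}$ is left unchanged for $k'$ past the modified block), and then to track a lexicographic potential such as the pair (number of segments with $\alpha^k>1$, total contract mass above $1$). Each rewriting step strictly decreases this potential while weakly increasing $\Util$, so the process terminates in finitely many steps, yielding a contract $\pi'$ with $\alpha^k\le 1$ throughout and $\Util(\pi')\ge \Util(\pi)$.
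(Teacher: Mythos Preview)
Your proposal and the paper share the same starting observation: the sign in Lemma~\ref{thm:rewriting-lemma-2} flips when $\alpha^k>1$, so on an indifference segment the principal then weakly prefers the \emph{lower} action. From there, however, your local surgery has a genuine gap. You propose to split an offending segment into a sub-segment $(\alpha_{i,i+1},\lambda\tau^k,a^k)$ ``along the nearest indifference boundary with $\alpha_{i,i+1}>1$'' and then apply the mirror rule. Two things fail here. First, under the paper's standing assumptions every action is uniquely incentivized by some linear contract $\alpha\in[0,1]$, which forces every breakpoint $\alpha_{i,i+1}<1$; so the ``nearest indifference boundary with $\alpha_{i,i+1}>1$'' typically does not exist. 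Second, and more structurally, Lemma~\ref{thm:rewriting-lemma-2} applies only to segments whose \emph{running average} $\oalpha$ is pinned at a breakpoint throughout (that is the meaning of ``the agent is indifferent during segment $k$''); merely setting the \emph{per-segment} contract to $\alpha_{i,i+1}$ does not put the sub-segment on a boundary unless $\oalpha^{k-1}$ already equals that breakpoint (this is exactly why the split in Lemma~\ref{lem:decreasing-to-freefall} works: the segment there begins on the boundary $\alpha_{BR}$). Your procedure conflates $\alpha^k$ with $\oalpha$, so the ``boundary sub-segment'' is not actually eligible for either version of Lemma~\ref{thm:rewriting-lemma-2}, and the subsequent merge/potential argument has nothing to act on.

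The paper avoids all of this by not trying to excise $\alpha^k>1$ segments one at a time. It simply re-runs the full rewriting pipeline of Theorem~\ref{thm:free-fall-linear} with the sign-aware version of Lemma~\ref{thm:rewriting-lemma-2} in place. The point is that whenever Lemma~\ref{thm:rewriting-lemma-2} is invoked inside Lemmas~\ref{lem:decreasing-linear} and~\ref{lem:decreasing-to-freefall}, the segment in question already lies on a boundary, so $\alpha^k$ equals some breakpoint and the correct branch of the sign-aware rule fires automatically; no bespoke splitting or downstream bookkeeping is needed. The output is a free-fall contract, and its single nonzero $\alpha^1$ is forced to a breakpoint (hence $\le 1$). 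If you want to salvage your approach, drop the attempt to manufacture a boundary sub-segment with $\alpha_{i,i+1}>1$ and instead invoke the global rewriting directly.
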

\begin{proof}
We first observe that in~\Cref{thm:rewriting-lemma-2}, when an agent is indifferent between actions $i$ and $i+1$ then the change in utility for the principal by choosing an action $i+1$ over $i$ is proportional to $(1-\alpha^i)$. This is negative if $\alpha^i >1$ and therefore the principal will prefer to agent to play action $i$ when $\alpha^i > 1$. However if $\alpha^i <1$ , then the principal will prefer that the agent play action $i+1$. Thus in this modified rewriting lemma, contracts with breakpoints greater than $1$, will prefer the lower action and breakpoints lower than $1$, will prefer the higher action.
By modifying~\cref{thm:rewriting-lemma-2}, we can rewrite any linear contract $\pi$  using the rewriting rules of ~\cref{thm:free-fall-linear} into a new linear contract $\pi'$ with a breakpoints that are at most $1$, without any loss in utility.
\end{proof}

\section{Reductions from Discrete-Time to Continuous-Time}
\label{sec:reduction}

\subsection{Proof of Theorem \ref{thm:discrete_to_continuous}}

In this section we prove Theorem \ref{thm:discrete_to_continuous}, showing that instead of working with discrete-time learning algorithms, it instead suffices to work with the set of continuous-time trajectories piecewise-linear trajectories described in Section \ref{sec:continuous}. Our proof will generally follow the proof structure of \cite{deng2019strategizing} (which proves a similar reduction in the case of two-player bi-matrix games), with a few slight additional complexities due to some differences in notation (namely, we do not insist that every segment lies in the interior of a best-response region). 

Before we begin the proof, it will be useful to establish a helpful auxiliary lemma about trajectories. Call a segment $(p^k, \tau^k, a^k)$ of a trajectory $\pi$ \textit{degenerate} if it lies on the boundary of two best-response regions (i.e., $|\BR(\rho^{k-1}) \cap \BR(\rho^{k})| \geq 2$), and \emph{non-degenerate} otherwise. Let $\Util_{0}(\pi)$ be the utility contributed by just non-degenerate segments. We begin by showing that starting with any trajectory $\pi$, we can construct a mostly non-degenerate trajectory $\pi_0$ with $\Util_{0}(\pi_0)$ almost as large as $\Util(\pi)$.

\begin{lemma}\label{lem:non-degenerate}
For any trajectory $\pi$ and any $\eps > 0$, there exists a trajectory $\pi_0$ such that $\Util_0(\pi_0) \geq (1 - \eps)\Util(\pi)$.
\end{lemma}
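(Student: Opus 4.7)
The plan is to construct $\pi_0$ by applying a small perturbation to $\pi$ that turns every degenerate segment into a non-degenerate one, while changing the total utility by at most an $\eps$ fraction. Degenerate segments arise precisely when a cumulative average $\rho^{k-1}$ or $\rho^k$ lies on a boundary between best-response regions, and since these boundaries form a lower-dimensional subset of contract space, small generic perturbations move the averages into the strict interior of the relevant region. So if I can perturb $\pi$ slightly while keeping all the prescribed actions $a^k$ valid, degeneracy will disappear ``for free'' and $\Util_0(\pi_0)$ will coincide with $\Util(\pi_0)$.

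Concretely, write $\pi = \{(p^k, \tau^k, a^k)\}_{k=1}^{K}$ and fix a small parameter $\eta > 0$ to be chosen later. I will build $\pi_0$ by setting $\tilde{p}^k = p^k + \eta w_k$ for each segment, where the perturbations $w_k$ are chosen so that every new cumulative average $\tilde{\rho}^k$ lies in the strict interior of the best-response region of the action $a^k$. Since $a^k \in \BR(\rho^{k-1}) \cap \BR(\rho^k)$ by the validity of $\pi$, and the best-response regions are closed convex polytopes with nonempty interiors (by the standing assumption that every action is uniquely incentivized by some contract), for sufficiently small $\eta$ such perturbations exist: the simultaneous constraints on all $w_k$ form a cascading linear feasibility system that admits a bounded-norm solution.

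By Lipschitz continuity of the per-segment principal utility $u_P(p, a)$ in $p$, perturbing each contract by $O(\eta)$ changes each segment's utility by $O(\eta)$, so the time-averaged utility changes by $O(\eta)$ as well. Choosing $\eta$ small enough (as a function of $\eps$, $\Util(\pi)$, and problem parameters such as $p_{\max}$ and the minimal gap between best-response breakpoints) yields $\Util(\pi_0) \geq (1-\eps)\Util(\pi)$; since by construction every segment of $\pi_0$ is non-degenerate, we have $\Util_0(\pi_0) = \Util(\pi_0) \geq (1-\eps)\Util(\pi)$, which is what we need.

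The main obstacle is ensuring inductive consistency under the cascade: perturbing $p^k$ shifts every subsequent cumulative average $\tilde{\rho}^{k+1}, \tilde{\rho}^{k+2}, \ldots$ by an amount that depends on the relative durations, and these shifts could in principle push later averages \emph{out} of the best-response regions for $a^{k+1}, a^{k+2}, \ldots$. The crux of the argument is therefore to coordinate the $w_k$ across segments---using the $k$ degrees of freedom available at step $k$ to simultaneously cancel accumulated shifts from earlier perturbations and push $\tilde{\rho}^k$ into the interior of the region for $a^k$---and to verify that the resulting linear system admits a solution whose norm is bounded independently of $\eta$, so that the utility perturbation indeed remains $O(\eta)$ as needed.
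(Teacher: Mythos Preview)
Your approach has a structural obstruction that cannot be fixed by choosing the perturbations $w_k$ more cleverly. You propose to push every cumulative average $\tilde{\rho}^k$ into the \emph{strict interior} of the best-response region for $a^k$. But $\tilde{\rho}^k$ is simultaneously the \emph{starting} average for segment $k+1$: validity of $\pi_0$ requires $a^{k+1}\in\BR(\tilde{\rho}^k)$. Whenever $a^{k+1}\neq a^k$ (which is the typical case once consecutive equal-action segments are merged), this forces $\tilde{\rho}^k$ to remain on the boundary between the regions for $a^k$ and $a^{k+1}$, so it simply cannot lie in any strict interior. The ``cascading linear feasibility system'' you describe is therefore infeasible for essentially every nontrivial trajectory, regardless of how the $w_k$ are coordinated or how small $\eta$ is.

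Even if you retreat to the weaker goal of merely making each segment non-degenerate (i.e., $\BR(\tilde{\rho}^{k-1})\cap\BR(\tilde{\rho}^k)=\{a^k\}$) while allowing the averages to sit on boundaries, the approach still breaks when the trajectory ``bounces back'': if $a^{k-1}=a^{k+1}\neq a^k$, then validity forces both $\tilde{\rho}^{k-1}$ and $\tilde{\rho}^k$ onto the \emph{same} facet (the one separating $a^k$ from $a^{k-1}=a^{k+1}$), and no perturbation of the $p^k$ alone can separate their best-response sets. The paper sidesteps this by changing the segment structure rather than the contracts: it interleaves a short auxiliary segment $(q^k,\delta^k)$ before each original segment, with the cumulative perturbation chosen to \emph{uniquely} incentivize $a^k$. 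This makes the running average throughout the original segment $k$ a strict convex combination of $\rho(t)$ (where $a^k$ is a best response) and a point where $a^k$ is the unique best response, so the original segments become non-degenerate; the auxiliary segments may themselves be degenerate but have total duration at most $\eps$ and are simply not counted in $\Util_0$. A secondary issue with your plan is that the perturbed contracts $p^k+\eta w_k$ need not remain in $\mathbb{R}_{\geq 0}^m$ when some coordinate of $p^k$ is zero; the paper's insertion of strictly positive auxiliary contracts also avoids this.
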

\begin{proof}
Let $\pi = \{(p^k, \tau^k, a^k)\}$. We will produce $\pi_0$ by interleaving a sequence of small perturbations $(q^k, \delta^k)$ into $\pi$ for some $q^{k} \in \mathbb{R}^{m}_{\geq 0}$ and $\delta^k > 0$; that is, we will let $\pi_0$ be defined by the sequence of segments $(q^1, \delta^1), (p^1, \tau^1, a^1), (q^2, \delta^2), \dots, (q^k, \delta^k), (p^k, \tau^k, a^k)$. Note that we have not specified the best-response of the learner for the perturbation segments $(q^k, \delta^k)$, because we will not count the utility from these segments (in fact, these perturbation segments might cross best-response boundaries, in which case we can split them into smaller segments). We will show that if we choose $q^i$ and $\delta^i$ correctly, $a^k$ is the unique best-response for each of the shifted $(p^k, \tau^k, a^k)$ segments.

Without loss of generality, assume $\sum_{k} \tau^k = 1$. For any $t \in [0, 1]$, we will let $\rho(t)$ be the average contract at time $t$ under trajectory $\pi$. That is, if $t = \tau^1 + \tau^2 + \dots + \tau^{i-1} + \tau$ with $0 \leq \tau < \tau^{i}$, then

$$\rho(t) = \frac{\tau^1p^1 + \tau^2p^2 + \dots + \tau^{i-1}p^{i-1} + \tau p^{i}}{t}.$$

For each $i \in [k]$, we will also let $\Delta^{i} = \delta^1 + \delta^2 + \dots + \delta^i$, and $Q^{i} = (\delta^1 q^1 + \delta^2 q^2 + \dots + \delta^i q^i) / \Delta^i$. Now, if $t = \tau^1 + \tau^2 + \dots + \tau^{i-1} + \tau$ with $0 \leq \tau < \tau^{i}$, we will let $\rho_0(t)$ be the average contract under trajectory $\pi_0$ at time $\Delta^{i} + \tau$ (i.e., time $\tau$ into segment $(p^i, \tau^i, a^i)$). It is the case that for such $t$, 

$$\rho_{0}(t) = \frac{Q^{i}\Delta^i + t \rho(t)}{\Delta^k + t} = \rho(t) + \frac{\Delta^{i}}{\Delta^{i} + t}(Q^{i} - \rho(t)).$$

We would like to choose $Q^i$ and $\Delta^i$ such that for each $i \in [k]$, for a large sub-interval of $\tau \in [0, \tau^i)$, the unique best response to $\rho_{0}(t)$ is exactly $a^i$. To begin, note that for any sequence of \emph{strictly positive contracts} $Q^i \in \mathbb{R}_{>0}^{m}$, there is a sequence of $q^i$ and $\delta^i$ that implements it (because we can make each $Q^i$ any convex combination of $Q^{i-1}$ and $q^i$). Moreover, we can make $\Delta^k$ arbitrarily small, because scaling all the $\delta^i$ simultaneously does not affect the values of the $Q^i$.

Now, for each $i$, we will set $Q^i$ to a positive contract that uniquely incentivizes action $a^i$. Note that a non-negative contract exists by our assumption in Section \ref{sec:model}; but since infinitesimal perturbations maintain the property that the contract uniquely incentivizes $a^i$, there must also be a positive contract with this property. We claim that if $\BR(Q^{k}) = \{a^k\}$ and $a^k \in \BR(\rho(t))$, then for any $\lambda \leq 1$, $\BR(\rho(t) + \lambda (Q^k - \rho(t))) = \{a^k\}$. To see this, note that we can write $\rho(t) + \lambda (Q^k - \rho(t)) = (1-\lambda)\rho(t) + \lambda Q^k$. Since the utility of the agent is an affine linear function in the contract they are offered, for any action $a' \neq a$ we have that $u_{A}((1-\lambda)\rho(t) + \lambda Q^k, a^k) = (1-\lambda)u_{A}(\rho(t), a^k) + \lambda u_{A}(Q^k, a^k) > (1-\lambda)u_{A}(\rho(t), a') + \lambda u_{A}(Q^{k}, a') = u_{A}((1-\lambda)\rho(t) + \lambda Q^k, a^k)$.

It follows that if we choose the $Q^i$ in this way, $\BR(\rho_{0}(t)) = \{a^k\}$, and therefore each of the segments $(p^i, \tau^i, a^i)$ is non-degenerate. We will set $\Delta^k$ equal to $\eps$. Doing so, we have that:

$$\Util_{0}(\pi_{0}) \geq \frac{\sum_{i=1}^{k} u_{P}(p^i, a^i) \tau^i}{1 + \Delta_{k}} \geq (1 - \eps)\Util(\pi) $$
\end{proof}

Equipped with Lemma \ref{lem:non-degenerate}, we can now prove Theorem \ref{thm:discrete_to_continuous}.

\begin{proof}[Proof of Theorem~\ref{thm:discrete_to_continuous}]

We follow the proof structure of \cite{deng2019strategizing} and prove both parts separately.

\textbf{Part 1:} Let $\traj = \{(p^{k}, \tau^{k}, a^{k})\}_{k=1}^{K}$ represent a valid strategy for the principal in the continuous-time problem. Without loss of generality, assume $\sum_{k} \tau^{k} = 1$ (if not, we can divide all $\tau^{k}$ through by $\sum_{k}\tau_{k}$ without changing the value of this strategy). We will convert $\pi$ into the following discrete-time strategy for the principal: for each $i \in [K]$ (in order), the principal offers the contract $p^{k}$ for $\tau^{k}T$ rounds.

Our goal is to show that for any $\delta > 0$ and any mean-based algorithm $\Alg$, the above strategy results in at least $(\Util_{0}(\pi) - \delta) T - o(T)$ utility for the optimizer. The conclusion then follows by choosing a trajectory $\pi$ for which $\Util_{0}(\pi) > \Uopt - \eps / 2$ (such a $\pi$ exists by Lemma \ref{lem:non-degenerate} and the definition of $\Uopt$) and some $\delta < \eps / 2$. In the remainder of this proof, we will fix a specific mean-based algorithm $\Alg$ that is $\gamma(T)$-mean-based for some $\gamma(T) = o(1)$. 

As in Definition \ref{def:mean-based}, let $\sigma_{i, t}$ denote the aggregate utility of action $i$ to the agent over the first $t$ rounds. Let $T^{k} = \sum_{j=1}^{k}\tau^{j}T$, and consider the values of $\sigma_t$ for rounds $t \in [T^{k-1}, T^{k}]$ corresponding to the $k$th segment. Note that $\sigma_{t}$ is linear in this interval and so we can interpolate 

\begin{equation}\label{eq:interpolate_util}
\sigma_{t} = \frac{(t - T^{k-1})\sigma_{T^{k-1}} + (T^{k} - t)\sigma_{T^{k}}}{\tau^k T}.
\end{equation}

Furthermore, assume that segment $k$ is non-degenerate, and so $\BR(\rho^{k-1}) \cap \BR(\rho^{k}) = \{a^k\}$. In particular, for any $t \in [T^{k-1}, T^{k}]$ and $a' \neq a_k$, either $\sigma_{T^{k-1}, a^k} > \sigma_{T^{k-1}, a'}$ or $\sigma_{T^{k}, a^{k}} > \sigma_{T^{k}, a'}$. As a consequence of \eqref{eq:interpolate_util}, this means that for any $\eps_k > 0$, there exists a $\delta_{k} > 0$ such that for $t \in [T^{k - 1} + \eps_k \tau^{k}, T^{k} - \eps_{k} \tau^{k}]$, $\sigma_{t, a^k} \geq \sigma_{t, a'} + \delta_{k} T$. For sufficiently large $T$, $\delta_{k}T > \gamma(T)T$, and so the learner will put weight at least $(1 - n\gamma(T))$ on action $a^k$. The total utility of the principal from these rounds is therefore at least

\begin{equation}\label{eq:segment_inequality}
(1 - n\gamma(T))(1 - 2\eps_{k})\tau^{k}u_{P}(p^{k}, a^k) \geq \tau^{k} u_{P}(p^{k}, a^k) - (n\gamma(T) + 2\eps_{k})T.
\end{equation}

Summing over all non-degenerate segments $k$, we find the total utility of the principal is at least

$$\sum_{k} \tau^{k} u_{P}(p^{k}, a^{k}) - \sum_{k} k(n\gamma(T) + 2\eps_{k})T = \Util_{0}(\pi) - \sum_{k} k(n\gamma(T) + 2\eps_{k})T.$$

By choosing $\eps_k$ sufficiently small, we can guarantee that this is at least $\Util_{0}(\pi) - \delta T$ for sufficiently large $T$, as desired.

\textbf{Part 2: } 
Fix any $\eps > 0$. Assume that for some sufficiently large $T_{0}$, there exists a (possibly adaptive) dynamic strategy for the principal that guarantees utility at least $(\Uopt + \eps)T_0$ against every mean-based agent. We will show that this implies the existence of a continuous trajectory $\pi$ and $\Util(\pi) \geq \Uopt + \eps$, contradicting the definition of $\Uopt$. Fix $\gamma(T) = T^{-1/2}$ and at any time $t$, let $J_t = \{ j \in [n] | (\max_{i} \sigma_{t, i}) - \sigma_{t, j} < \gamma(T)T\}$ be the set of actions for the learner whose historical performance are within $\gamma(T)T$ of the optimally performing action. The set $J_t$ contains exactly the set of actions that the mean-based guarantee implies the agent must play with high probability. Our agent will do the following: if the principal is about to play contract $p^{t}$, the agent will play the action $j \in J_{t}$ that minimizes $u_{L}(p^{t}, j)$ (note that because we are tailoring the agent to this principal, we can do this). 

Assume that this results in the principal playing the sequence of contracts $p^{1}, p^{2}, \dots, p^{T_0}$. Consider the trajectory $\pi$ defined by the sequence of tuples $(p^{1}, 1/T_{0}), (p^{2}, 1/T_{0}), \dots, (p^{T_0}, 1/T_{0})$. In this description of the trajectory, we've omitted the response action for the agent, which can be any best-response action for that segment. In fact, some segments may not be valid, as they start in one best response region and end in another; for those, we can subdivide them into however many parts are necessary to form a valid trajectory.

Now, note that the sub-segments corresponding to the step $(p^{t}, 1/T_{0})$ only contain agent actions in the set $J_t$. This is since the agent utility at the start of this segment is $\sigma_{t}$, the agent utility at the end of this segment is $\sigma_{t+1}$, each component of $\sigma_{t+1} - \sigma_{t}$ is at most $1$ (since the problem is bounded), but any action $j$ not in $J_t$ is at least $\gamma(T)$ away from optimal. The principal's utility contributed by this segment is therefore at least $\frac{1}{T_0}\min_{j \in J_{t}} u_{P}(p^{t}, j)$. But this is exactly the utility the principal obtained in round $t$ of the discrete-time game. Therefore the total utility $\Util(\pi)$ of this trajectory is at least $\Uopt + \eps$ -- but this contradicts the definition of $\Uopt$, as desired.
\end{proof}

\subsection{Proof of Theorem~\ref{thm:discrete_to_continuous_unknown_time}}

A very similar set of arguments lets us prove Theorem~\ref{thm:discrete_to_continuous_unknown_time}. We will need the following lemma which says we can restrict our attention to finite-support $\mathcal{D}$ without loss of generality.

\begin{lemma}\label{lem:finite_support}
Fix a principal-agent problem, a $\gamma > 1$, and an $\eps > 0$. Let $\mathcal{D}$ be any distribution over trajectories. Then there exists a finite-support distribution $\mathcal{D}'$ over trajectories with the property that $\Util_{\gamma}(\mathcal{D'}) \geq \Util_{\gamma}(\mathcal{D})$.
\end{lemma}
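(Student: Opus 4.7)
The plan is to invoke the two tools hinted at in the remark following Theorem~\ref{thm:unknown-time-horizon}: an $\eps$-net discretization of the evaluation parameter $x \in [1/\gamma, 1]$, followed by Carath\'eodory's theorem in the resulting finite-dimensional Euclidean space. Throughout, write $f(x,\pi) := u_P^{(x\cT^K)}(\pi)/(x \cT^K)$ so that $\Util_\gamma(\mathcal{D}) = \min_{x \in [1/\gamma,1]} \E_{\pi \sim \mathcal{D}} f(x,\pi)$.

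First I would establish that for any fixed trajectory $\pi$, the map $x \mapsto f(x,\pi)$ is Lipschitz on $[1/\gamma, 1]$. The numerator $u_P^{(t)}(\pi)$ is piecewise linear in $t$ with slope bounded in absolute value by the per-round principal utility (a constant depending only on $r_m$ and $p_{\max}$), and dividing by $x\cT^K$ with $x \geq 1/\gamma$ yields a Lipschitz constant $L$ that depends only on the principal-agent problem and on $\gamma$. A $\delta$-net $X = \{x_1, \ldots, x_N\} \subset [1/\gamma, 1]$ with $\delta = \eps/L$ then guarantees, for every trajectory $\pi$, that $\bigl|\min_{x \in X} f(x, \pi) - \min_{x \in [1/\gamma, 1]} f(x, \pi)\bigr| \leq \eps$.

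I would then embed each trajectory as $v(\pi) = \bigl(f(x_1,\pi), \ldots, f(x_N,\pi)\bigr) \in \mathbb{R}^N$; any distribution $\mathcal{D}$ induces a point $\overline{v}(\mathcal{D}) = \E_{\pi \sim \mathcal{D}} v(\pi)$ in the convex hull of $\{v(\pi) : \pi\}$. Carath\'eodory's theorem produces trajectories $\pi_1, \ldots, \pi_{N+1}$ and weights $\lambda_1, \ldots, \lambda_{N+1}$ summing to one, such that the finite-support distribution $\mathcal{D}'$ placing mass $\lambda_i$ on $\pi_i$ satisfies $\E_{\pi \sim \mathcal{D}'} v(\pi) = \overline{v}(\mathcal{D})$; equivalently, writing $g_X(\mathcal{D}) := \min_{x \in X} \E_{\pi \sim \mathcal{D}} f(x, \pi)$, we have $g_X(\mathcal{D}') = g_X(\mathcal{D})$.

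To finish, I would chain inequalities. Since $X \subset [1/\gamma, 1]$ the inclusion gives $g_X \geq \Util_\gamma$ for any distribution, and the Lipschitz bound above gives $g_X(\mathcal{D}') \leq \Util_\gamma(\mathcal{D}') + \eps$ (after identifying the minimizer $x^\star \in [1/\gamma,1]$ of the inner problem and moving to the nearest net point). Combining, $\Util_\gamma(\mathcal{D}') \geq g_X(\mathcal{D}') - \eps = g_X(\mathcal{D}) - \eps \geq \Util_\gamma(\mathcal{D}) - \eps$, which matches the lemma up to an $\eps$ slack that I expect is implicit given the $\eps > 0$ in the hypothesis. The main obstacle is making the Lipschitz estimate quantitative --- essentially bookkeeping with the reward/payment bounds together with the factor of $\gamma$ inherited from $x \geq 1/\gamma$; the rest is the standard epsilon-net plus Carath\'eodory routine used for analogous minimax reductions.
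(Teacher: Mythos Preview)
Your proposal is correct and follows essentially the same approach as the paper: discretize $[1/\gamma,1]$ using the Lipschitz continuity of the principal's time-averaged utility, embed trajectories as vectors indexed by the net points, and apply Carath\'eodory's theorem to the convex hull. Your observation about the $\eps$ slack is accurate and matches the paper---the paper's proof also yields only $\Util_\gamma(\mathcal{D}') \geq \Util_\gamma(\mathcal{D}) - O(\eps)$, and the place where the lemma is invoked explicitly budgets for ``losing an arbitrarily small $O(\eps)$ term,'' so the stated inequality without slack is not actually what is proved or needed.
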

\begin{proof}
We first claim the following: if a distribution $\mathcal{D}$ has the property that $\E_{\pi \sim \mathcal{D}}[u_{P}^{(t)}(\pi)] \geq Ut$ for each $t$ in the \emph{discretized} set of time-intervals $S_{\eps, \gamma} = \{1/\gamma, 1/\gamma + \eps/\gamma, 1/\gamma + 2\eps/\gamma, \dots, 1-\eps/\gamma, 1\}$, then it is the case that $\E_{\pi \sim \mathcal{D}}[u_{P}^{(t)}(\pi)] \geq (U-\eps)t$ for all $t \in [1/\gamma, 1]$. This follows from the fact that the principal's profit per round is bounded above by 1, so $|u_{P}^{(t')}(\pi) - u_{P}^{(t)}(\pi)| \leq |t' - t|$. In particular, if $t'$ is the closest element of $S_{\gamma, \eps}$ to a $t \in [1/\gamma, 1]$, it is the case that $|u_{P}^{(t')}(\pi) - u_{P}^{(t)}(\pi)| \leq \eps/\gamma \leq \eps t$.

Now, associate to each trajectory $\pi$ the $|S_{\eps, \gamma}|$-tuple of real numbers $f(\pi) = \{u_{P}^{(t)}(\pi)\}_{t \in S_{\eps, \gamma}}$; define $f(\mathcal{D}) = \E_{\pi \sim \mathcal{D}}[f(\pi)]$. Define $\mathcal{X} = \{f(\pi) \mid \pi \text{ is a trajectory} \} \subset \mathbb{R}^{|S|}$ to be the set of all such tuples. By Caratheodory's theorem, we can construct a distribution over at most $|S| + 1$ elements of $\mathcal{X}$ that (is arbitrarily close to) $f(\mathcal{D})$, for any $\mathcal{D}$. If we let $\mathcal{D'}$ be the corresponding distribution over trajectories, this satisfies the constraints of the theorem statement.
\end{proof}

We describe the changes we need to make for the proof below:

\begin{proof}[Proof of Theorem~\ref{thm:discrete_to_continuous_unknown_time}]

\textbf{Part 1:} Begin by picking a strategy $\mathcal{D}$ for the optimizer in the continuous-time game that achieves utility at least $\Uopt_{\gamma} - \eps/2$. This strategy $\mathcal{D}$ is a distribution over trajectories $\pi$; by Lemma \ref{lem:finite_support}, we can assume (at the cost of losing an arbitrarily small $O(\eps)$ term) that $\mathcal{D}$ has finite support. For each of these trajectories, we apply Lemma \ref{lem:non-degenerate} to transform $\pi$ into a new trajectory $\pi'$ which obtains at least $(1-\eps)$ fraction of the utility of $\pi$ on non-degenerate segments. We will also normalize the total duration of each $\pi'$ to $1$. 

Now, note that since inequality \eqref{eq:segment_inequality} holds per segment (and indeed, even fractionally per segment), we can convert each resulting trajectory $\pi'$ to a discrete-time strategy over $\overline{T}$ rounds, with the property that for sufficiently large values of $\overline{T}$, for any $t \in [\underline{T} = \overline{T}/\gamma, \overline{T}]$, the utility of this discrete-time strategy until time $t$ is at least $u_{P}^{(t)}(\pi)$. Taking the corresponding distribution over these discrete-time strategies (choosing a sufficiently large $\overline{T}$ for all such strategies -- note that we can do this because $\mathcal{D}$ has finite support), we obtain a discrete-time randomized (but otherwise oblivious) strategy for the principal that satisfies the theorem statement.

\textbf{Part 2: } As in the previous proof, fix an $\eps > 0$, assume to the contrary there exists a $T_0$ along with a discrete-time (possibly randomized / adaptive) dynamic strategy which achieves at least $(\Uopt_{\gamma} + \eps)t$ utility for the principal for all $t \in [T_0 / \gamma, T_0]$ against any mean-based bidder. Construct the same mean-based bidder as in the proof of part 2 of Theorem \ref{thm:discrete_to_continuous}, which always picks the action in the set of approximate best-responses that least to the minimum expected utility for the principal.

When this principal plays against this agent, this leads to a distribution over sequences of contracts $(p^1, p^2, \dots, p^{T_0})$. Each such sequence can be converted to a trajectory $\pi$ of the form $\{(p^{1}, 1/T_{0}), (p^{2}, 1/T_{0}), \dots, (p^{T_0}, 1/T_{0})\}$. This trajectory $\pi$ not only has the property that $\Util(\pi)T$ upper bounds the utility of the discrete-time agent (as in the proof of part 2 of Theorem \ref{thm:discrete_to_continuous}), but in fact $u_{P}^{(t)}(\pi)$ is at least the utility of the agent discrete-time agent at time $tT_0$ (by exactly the same logic). It follows that if we let $\mathcal{D}$ be the distribution over such trajectories, it is the case that $\Util_{\gamma}(\mathcal{D}) \geq \Uopt_{\gamma} + \eps$. This contradicts the definition of $\Uopt_\gamma$, as desired.
\end{proof}




Finally, we conclude this supplementary section with the proof of a preliminary lemma exploited in Section~\ref{sec:linear-free-fall-proof}
\begin{lemma}[Restated Lemma~\ref{thm:rewriting-lemma-1}]
Consider any dynamic contract. For any time interval in which a mean-based agent plays a single action, we can replace the contracts in this interval with  their average and obtain overall a revenue-equivalent dynamic contract. 
\end{lemma}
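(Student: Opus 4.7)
The plan is to treat the restated statement as the natural iterated form of Lemma~\ref{thm:rewriting-lemma-1}, carried out in a single step. Let the interval consist of consecutive segments $(p^{k'}, \tau^{k'}, a)$ for $k' = k_1, \dots, k_2$, all with the same agent action $a$. Set $\tau = \sum_{k'=k_1}^{k_2} \tau^{k'}$ and the time-weighted average contract $\bar{p} = \frac{1}{\tau}\sum_{k'=k_1}^{k_2} \tau^{k'} p^{k'}$. The rewriting rule replaces the whole block by the single segment $(\bar{p}, \tau, a)$, leaving every other segment of $\pi$ untouched. Two things then need to be verified: (i) the principal's revenue is preserved, and (ii) the resulting trajectory is still valid, i.e.\ $a$ remains a best response to the historical average contract throughout the new merged segment.

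Revenue equivalence is the easy half. Because $u_P(p, a) = R_a - \mathbb{E}_{o\sim F_a}[p_o]$ is affine in $p$, we have $\sum_{k'=k_1}^{k_2} \tau^{k'} u_P(p^{k'}, a) = \tau\, u_P(\bar{p}, a)$. So the block contributes exactly the same total principal utility as before, and nothing outside the block changes.

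For validity, I would first observe that the cumulative payment $\sum_{k'=k_1}^{k_2} \tau^{k'} p^{k'} = \tau \bar{p}$ over the block is unchanged. Hence the historical averages $\rho^{k_1-1}$ and $\rho^{k_2}$ are identical in the new and old trajectories, so every segment outside the merged block is still valid with respect to the mean-based best-response condition. Inside the merged block, at any intermediate time $t$ the historical average is a convex combination of $\rho^{k_1-1}$ and $\bar{p}$, and algebraically this combination traces the straight line segment in $\mathbb{R}^m$ from $\rho^{k_1-1}$ to $\rho^{k_2}$. The key structural fact I would invoke is that for any fixed action $a$, the best-response region
\[
\{ p \in \mathbb{R}^m_{\ge 0} : u_A(p, a) \ge u_A(p, a') \text{ for all } a' \in [n] \}
\]
is an intersection of finitely many half-spaces (since $u_A(\cdot, a)$ is affine in $p$), and therefore convex. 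By hypothesis $a$ was a best response throughout the original block, in particular at $\rho^{k_1-1}$ and at $\rho^{k_2}$; convexity then forces $a \in \BR(\rho(t))$ at every intermediate point of the merged segment.

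The main subtlety I anticipate is conceptual rather than technical: one must read ``an interval in which the agent plays a single action'' through the valid-trajectory formalism of Section~\ref{sec:continuous}, i.e.\ a contiguous block of segments whose prescribed best-response is the same action $a$. Once that is fixed, the entire argument is a two-line consequence of (i) linearity of $u_P(\cdot, a)$ in the contract and (ii) convexity of the per-action best-response polytope, and no re-examination of segments outside the block is needed because their historical averages at the boundaries are preserved verbatim.
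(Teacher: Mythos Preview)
Your argument is correct and in fact more complete than the paper's own proof of the restated lemma in the appendix, which only records the revenue-equivalence half (the one-line affinity observation for $u_P(\cdot,a)$) and silently omits the validity check. Your revenue step is identical to the paper's; your validity step---observing that the historical average along the merged segment traces the line from $\rho^{k_1-1}$ to $\rho^{k_2}$ and then invoking convexity of the best-response polytope---is the natural general-contract analogue of the argument the paper gives in Section~\ref{sec:linear-free-fall-proof} for the linear case (there the paper uses the one-dimensional intermediate-value structure of $\oalpha(t)$ instead of convexity). One small simplification: since the continuous-time validity condition only requires $a\in\BR(\rho^{k_1-1})$ and $a\in\BR(\rho^{k_2})$ at the segment endpoints, and both of these average contracts are literally unchanged by your rewrite, you do not strictly need the interior convexity argument---the endpoint conditions are inherited verbatim.
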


\begin{proof}
 The result follows since the utility for the principal $u_{P}$ is affine in its first argument. 
 Formally, let $\pi = \{(p^k, \tau^k, a^k)\}_{k=1}^{K}$ be a dynamic contract, with $a^{k} = a^{k+1} = a$ for some $k$.
 Consider a different contract $\pi'$ where we replace the consecutive pair of segments $(p^k, \tau^k, a^k)$ and $(p^{k+1}, \tau^{k+1}, a^{k+1})$ with the their average segment, i.e., $(\overline{p}, \tau^{k} + \tau^{k+1}, a)$, where $\overline{p} = (p^k\tau^k + p^{k+1}\tau^{k+1})/(\tau^k + \tau^{k+1})$, and all other segments remain the same as in $\pi$.  
 Then, we have  $\Util(\pi') -  
 \Util(\pi) = 
 \frac{1}{\sum_{k=1}^{K} \tau^{k}}\Big(  \tau^k u_P(p^k, a) + \tau^{k+1} u_P(p^{k+1}, a) - (\tau^k + \tau^{k+1})\overline{p} \Big)  = 0$. That is, both contracts give same utility for the principal. A similar argument holds for the discrete formulation of the model as well.
\end{proof}

\subsection{Mean-based algorithms in the partial-information setting}
\label{sec:mean-based-partial-info}

We conclude with some clarifying remarks on the definition of a mean-based learning algorithm in a stochastic, partial-information setting (the bandits setting). The proofs of Theorems \ref{thm:discrete_to_continuous} and~\ref{thm:discrete_to_continuous_unknown_time} continue to hold essentially as written, but there are some subtleties that are worth pointing out.

We begin by clarifying the definition of mean-based in a partial-information setting. Formally, we write it as follows. Recall that 
$\sigma^{t}_{i} = \sum_{t'=1}^{t-1} u_{i}^{t'}$ is equal to the expected utility the learner would receive if they had played action $i$ for the first $t-1$ rounds, \emph{assuming the sequence of contracts the principal offers the learner remains static} (so in particular, for an adaptive / stochastic principal, $\sigma^{t}$ is a random variable).

\begin{definition}[Mean-based algorithms in partial-information settings]
\label{def:mean-based-bandits}
A learning algorithm \emph{in a partial-information setting} is
\emph{$\gamma(T)$-mean-based} if the following conditions hold: Fix any adaptive dynamic strategy of the principal and let (for each round $t \in [T]$), $X_t$ be the event that $\sigma_{i}^{t} < \sigma_{i'}^{t} - \gamma(T)\cdot T$, and $Y_t$ be the event that the algorithm takes action $i$ in round $t$. 
Then the algorithm is mean based if the probability $\Pr[X_t \wedge Y_t]$ (over all randomness in the learner's algorithm, principal's strategy, and problem setting) is at most $\gamma(T)$. We say an algorithm is \emph{mean-based} if it is $\gamma(T)$-mean-based for some $\gamma(T) = o(1)$.
\end{definition}

The above definition is very similar to Definition \ref{def:mean-based}; the main reason for stating it like this is to avoid implying the slightly stronger constraint that event $X_t$ deterministically implies that the probability of $Y_t$ is small conditioned on the current history of play. This implication is fine in the full-information setting where algorithms like multiplicative weights will indeed deterministically place small weight on action $i'$ if the event $X_t$ holds; but in the partial-information setting, there is always a chance that the learner is unable to accurately observe whether $X_t$ holds, and therefore no partial-information algorithm can achieve that guarantee. On the other hand, standard bandit algorithms with high-probability guarantees such as EXP3 (see \cite{bubeck2012regret}) satisfy the above definition of mean-based learning.

The proof of Theorem \ref{thm:discrete_to_continuous} works equally well with Definition \ref{def:mean-based-bandits}. The only subtlety is in Part 2, where to show a principal cannot do well against all mean-based agents, we design a mean-based agent that foils this specific principal. If the principal is randomized and adaptive, the agent cannot accurately predict the expected contract $p^t$ the principal will play in round $t$ (note that if the principal is adaptive but deterministic, the agent can still simulate the principal's behavior -- likewise, if the principal is oblivious and randomized, the agent can compute the expected contract $p^t$ at any round). The proof of Theorem \ref{thm:discrete_to_continuous_unknown_time} is similar.

\section{Proof of Claim \ref{thm:win-win-free-fall-claim}}

\begin{proof}
  We execute this proof in two parts. In the first part of the proof, we will show that any optimal dynamic (free fall) contract must begin at $\alpha_n$. In the second part of the proof, we show that an optimal dynamic (free fall) contract that begins at $\alpha_n$ must end at $\alpha_{\left \lceil \frac12 \log n \right \rceil}$ or higher, if $n$ is sufficiently large. This is enough to imply the claim because if the optimal free fall contract stops at a higher action than $\left \lceil \frac12 \log n \right \rceil$, then the principal has higher utility due to optimality and the agent has higher utility since their utility is increasing in actions.

  We now prove that any optimal dynamic contract must begin at $\alpha_n$. For the sake of contradiction, suppose that it instead begins at $\alpha_i$ for some action $i \in [1, n-1]$. In particular, it begins with the segment $(p^1 = \alpha_i R, \tau^1, a^1 = i)$ for some $i \in [1, n-1]$. To achieve a contradiction, we will show that this dynamic contract is not optimal by producing a better dynamic contract.

  In particular, let us consider replacing this first segment with the following two segments: $(\alpha_{i+1} R, x \triangleq \frac{\alpha_i}{\alpha_{i+1}} \tau^1, i+1), (0, y \triangleq \left[1 - \frac{\alpha_i}{\alpha_{i+1}}\right] \tau^1, i)$ (and re-indexing all subsequent segments appropriately). We claim that this will achieve strictly greater principal utility, while leaving the total time unaffected. We first show how we solved for the appropriate time-split $(x, y)$.

  \begin{align*}
    x + y &= \tau^1 \qquad \text{(} (x, y) \text{ is a time split)} \\
    x \alpha_{i+1} &= \tau^1 \alpha_i \qquad \text{(At time $\tau^1$, the cumulative linear contract is still $\alpha_i$)} \\
    x &= \frac{\alpha_i}{\alpha_{i+1}} \tau^1 \\
    y &= \left[1 - \frac{\alpha_i}{\alpha_{i+1}} \right] \tau^1
  \end{align*}

  By construction, our choice of $x$ and $y$ keeps the total time invariant, so it remains to prove that this results in strictly more principal utility. Since all subsequent segments are the same and generate the same amount of principal utility, we only need to compare the principal utility of these three segments.

  The (cumulative) principal utility of the original segment $(\alpha_i R, \tau^1, i)$ is just $\tau^1$ since the contract problem is designed so that all indifference contracts $\alpha_i$ result in one unit of utility to the principal. The exception is action one, which was adjusted to have $1 + O(\eps)$ principal utility and therefore has cumulative principal utility $\tau^1(1 + O(\eps))$.

  Next, we consider the cumulative principal utility of our two new segments $(a_{i+1}R, x, i+1)$ and $(0, y, i)$. The first segment has (cumulative) principal utility equal to just $x$ for the same reason as above (but now $i+1$ cannot be the first action). The second segment has (cumulative) principal utility equal to $y(R_{i+1})$ where $R_{i+1}$ is the expected reward from action $i+1$, due to the fact that this segment offers the zero contract. Together, these two segments generate (cumulative) principal utility equal to the following.
  \begin{align*}
    x + y(R_{i+1}) &= (x + y) + y(R_{i+1}-1) \\
                   &= \tau^1 + y(2^{i+1} - 1)
  \end{align*}
  However, we can see from our choice of $y$ that $y > 0$ and $(2^{i+1} - 1) > 0$ since $i \ge 1$. Hence this strictly beats the cumulative principal utility of the original segment as long as $\eps$ is sufficiently small. This completes our contradiction, since the original dynamic contract was assumed to be optimal but we found a strictly better one. Hence the optimal dynamic contract must free fall from $\alpha_n$ (which there is no higher action to start from instead), completing the first part of the proof.

  We now use this fact to prove that the optimal dynamic (free fall) contract must end at $\alpha_{\left \lceil \frac12 \log n \right \rceil}$ or higher, if $n$ is sufficiently larger. The proof plan is to consider the effect of free falling through an additional action, and determining when that might improve the free fall contract. As a first step, we observe that the objective function of the continuous setting, $\Util(\pi)$, is invariant when we equally scale all times $\tau^k$. As a result, we can assume without loss of generality that the first segment of free-fall $(p^1 = \alpha_n R, \tau^1, a^1 = n)$ uses $\tau^1 = 1$. We can also assume without loss of generality that the other segments $\{(p^k = 0, \tau^k, a^k = n - k + 1)\}_{k=2}^K$ begin and end at region boundaries, which is enough to work out their durations $\tau^k$ based on when the average linear contract reaches a particular indifference point.
  \begin{align*}
    \tau^k &= \frac{\alpha_n}{\alpha_{n - k + 1}} - \frac{\alpha_n}{\alpha_{n - k + 2}} = \frac{1 - 2^{-n}}{1 - 2^{-n + k - 1}} - \frac{1 - 2^{-n}}{1 - 2^{-n + k - 2}} \\
    &= [1 - 2^{-n}] \frac{2^{-n+k-1} - 2^{-n+k-2}}{(1 - 2^{-n+k-1})(1 - 2^{-n+k-2})} \\
    &= [1 - 2^{-n}] \frac{2^{-n+k-2}}{(1 - 2^{-n+k-1})(1 - 2^{-n+k-2})}
  \end{align*}

  Hence segment $k \in [2, K]$ contributes the following (cumulative) principal utility.
  \begin{align*}
    \tau^k u_P(p^k, a^k)
      &= \tau^k 2^{n-k+1} = 2^{n-k+1} \cdot \left[1 - 2^{-n}\right]\frac{2^{-n+k-2}}{(1 - 2^{-n+k-1})(1 - 2^{-n+k-2})} \\
      &= \frac12 \left[1 - 2^{-n}\right] \frac{1}{(1 - 2^{-n+k-1})(1 - 2^{-n+k-2})}
  \end{align*}

  Let $\pi_K$ be the trajectory that uses $K$ segments. We can compute its objective value to be the following.
  \begin{align*}
    \Util(\pi_K) &= \frac{1 + \frac12 [1 - 2^{-n}] \sum_{k = 2}^K \frac{1}{(1 - 2^{-n+k-1})(1 - 2^{-n+k-2})}}{(1 - 2^{-n}) / (1 - 2^{-n+K-1})} \\
    &= (1 - 2^{-n+K-1}) \left[1 / (1 - 2^{-n}) + \frac12 \sum_{k = 2}^K \frac{1}{(1 - 2^{-n+k-1})(1 - 2^{-n+k-2})}\right]
  \end{align*}

  We can take the difference of two such expressions to decide whether $\pi_{K+1}$ is better than $\pi_K$. For $n - \left \lceil \frac12 \log n \right \rceil \le K \le n - 1$:
  \begin{align*}
    \Util(\pi_{K+1}) - \Util(\pi_K) &= (1 - 2^{-n+K}) \left[1 / (1 - 2^{-n}) + \frac12 \sum_{k = 2}^{K+1} \frac{1}{(1 - 2^{-n+k-1})(1 - 2^{-n+k-2})}\right] \\
    &- (1 - 2^{-n+K-1}) \left[1 / (1 - 2^{-n}) + \frac12 \sum_{k = 2}^K \frac{1}{(1 - 2^{-n+k-1})(1 - 2^{-n+k-2})}\right] \\
    &= (1 - 2^{-n+K}) \frac12 \frac{1}{(1 - 2^{-n+K})(1 - 2^{-n+K-1})} \\
    &- 2^{-n+K-1} \left[1 / (1 - 2^{-n}) + \frac12 \sum_{k = 2}^K \frac{1}{(1 - 2^{-n+k-1})(1 - 2^{-n+k-2})}\right] \\
    &\le \frac12 \frac{1}{(1 - 2^{-n+(n-1)})(1 - 2^{-n+(n-1)-1})} \\
    &- 2^{-n+(n - \frac12 \log n)-1} \left[\frac12 \sum_{k = 2}^{n - \left \lceil \frac12 \log n \right \rceil} 1 \right] \\
    &= \frac12 \frac{1}{(1/2)(3/4)} - \frac{1}{2\sqrt{n}} \left[\frac12 (n - \left \lceil \frac12 \log n \right \rceil - 1) \right]
  \end{align*}
  Since the positive term has magnitude $O(1)$ and the negative term has magnitude $O(\sqrt{n})$, this bound will always be negative when $n$ is sufficiently large. Hence it is strictly not worth it to free fall below $\alpha_{\left \lceil \frac12 \log n \right \rceil}$, as desired. This completes the proof.
\end{proof}

\end{document}